\newcommand{\vlong}[1]{#1}
\newcommand{\vshort}[1]{}
\newcommand{\coqdoc}[1]{\href{\BaseUrl#1}{\raisebox{-.9mm}{\includegraphics[height=1em]{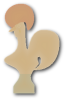}}}}
\title{From Partial to Monadic: Combinatory Algebra with Effects
}
\authorrunning{Cohen, Grunfeld, Kirst, Miquey}
\author{Liron Cohen}{Ben-Gurion University,  Israel}{cliron@bgu.ac.il}{0000-0002-6608-3000}{}
\author{Ariel Grunfeld}{Ben-Gurion University,  Israel}{arielgru@post.bgu.ac.il}{0009-0008-1142-282X}{}
\author{Dominik Kirst}{Ben-Gurion University,  Israel / Inria Paris, France}{dominik.kirst@inria.fr	}{0000-0003-4126-6975}{}
\author{\'{E}tienne Miquey}{Aix Marseille University, CNRS, I2M, Marseille, France}{etienne.miquey@univ-amu.fr	}{0000-0002-5987-6547}{}
\keywords{Combinatory algebras, Monads, Effects, Realizability, Evidenced frames} %TODO mandatory; please add comma-separated list of keywords
\Crefname{equation}{Eq.}{Eqs.}
\Crefname{figure}{Fig.}{Figs.}
\Crefname{tabular}{Tab.}{Tabs.}
\Crefname{section}{Sec.}{Secs.}
\Crefname{definition}{Def.}{Defs.}
\Crefname{defi}{Def.}{Defs.}
\Crefname{lemma}{Lem.}{Lems.}
\Crefname{lem}{Lem.}{Lems.}
\Crefname{theorem}{Thm.}{Thms.}
\Crefname{thm}{Thm.}{Thms.}
\Crefname{paragraph}{Sec.}{Secs.}
\Crefname{appendix}{Appx.}{Appxs.}
\Crefname{corollary}{Cor.}{Cors.}
\Crefname{example}{Ex.}{Exs.}
\Crefname{proposition}{Prop.}{Props.}
\Crefname{remark}{Rem.}{Rems.}
\newcommand{\app}{\mathop{\cdot}}
\newcommand{\Expr}[1]{E_{#1}}
\newcommand{\encode}[2]{c_{\lambda^{#1}.#2}}
\newcommand{\code}[1]{e_{#1}}
\newcommand{\postmod}{\Diamond}
\renewcommand{\P}{\mathcal{P}\!}
\newcommand{\C}{\mathbb{C}}
\newcommand{\K}{\mathbb{K}}
\newcommand{\id}{\mathrm{id}}
\newcommand{\mca}[0]{\mathbb{A}}
\renewcommand{\encode}[2]{\left\langle\lambda^{#1}.#2\right\rangle}
\newcommand{\absterm}[2]{\left|\lambda^{#1}.#2\right|}
\newcommand{\interp}[1]{\left\llbracket #1 \right\rrbracket}
\newcommand{\letin}[3]{\mathsf{let}\ #1\  \Leftarrow #2 \ \mathsf{in} \ #3}
\newcommand{\evalstate}{\vartriangleright}
\newcommand{\applystate}{\blacktriangleleft}
\newcommand{\obj}[1]{\mathbf{Obj}\left(#1\right)}
\renewcommand{\hom}[3]{#1\left(#2,#3\right)}
\newcommand{\termobj}{\mathbf{1}}
\newcommand{\tuple}[1]{\left\langle #1 \right\rangle}
\newcommand{\pair}[2]{\tuple{#1 , #2}}
\newcommand{\underlying}[1]{\left|#1\right|}
\newcommand{\defeq}{:=}
\newcommand{\setcat}{\mathbf{Set}}
\newcommand{\codef}{\mathsf}
\newcommand{\restrict}[1]{\overline{#1}}
\newcommand{\aparr}{\circledast}
\newcommand{\monad}{M}
\newcommand{\return}[1]{\left[#1\right]}
\newcommand{\scomb}{\mathsf{s}}
\newcommand{\kcomb}{\mathsf{k}}
\newcommand{\purefun}{J}
\newcommand{\kcode}{\code{\mathsf{K}}}
\newcommand{\scode}{\code{\mathsf{S}}}
\newcommand{\transition}{\hookrightarrow}
\newcommand{\kcodeA}[1]{\code{\mathsf{K}\left(#1\right)}}
\newcommand{\scodeA}[1]{\code{\mathsf{S}\left(#1\right)}}
\newcommand{\scodeB}[2]{\code{\mathsf{S}\left(#1,#2\right)}}
\newcommand{\real}[1]{\mathcal{R}_{#1}}
\newcommand{\evidence}[1]{\overset{#1}{\rightarrow}}
\newcommand{\codeid}{\code{\codef{id}}}
\newcommand{\codeP}[1]{\code{\codef{p_{#1}}}}
\newcommand{\codepA}{\codeP{1}}
\newcommand{\codepB}{\codeP{2}}
\newcommand{\codecurry}[1]{\elambda{#1}}
\newcommand{\codeuncurry}[1]{\rho \, #1}
\newcommand{\codeeval}{\code{\codef{eval}}}
\newcommand{\codepair}[2]{\epair{#1}{#2}}
\newcommand{\evrel}[3]{#1 \evidence{#2} #3}
\newcommand{\evexpand}[3]{\forall c \in \mca . #1\left(c\right) \leq \after{r}{#2 \app c}{#3\left(r\right)}}
\newcommand{\pole}{\bot\!\!\!\bot}
\newcommand{\llangle}{\langle\!|}
\newcommand{\rrangle}{|\!\rangle}
\newcommand{\xle}[1]{\xrightarrow{\!#1\!}}
\newcommand{\eid}{e_{\mathtt{id}}}
\newcommand{\etrue}{e_{\scriptscriptstyle\top}}
\newcommand{\epair}[2]{\llangle #1, #2 \rrangle}
\newcommand{\efst}{e_{\mathtt{fst}}}
\newcommand{\esnd}{e_{\mathtt{snd}}}
\newcommand{\elambda}[1]{\lambda #1}
\newcommand{\eeval}{e_{\mathtt{eval}}}
\newcommand{\imp}{\supset}
\newcommand{\power}{\mathcal{P}}
\newcommand{\ecomp}[2]{#1 \mathop{;} #2}
 \newcommand{\after}[3]{ {\langle \!\!\mathrel{\raisebox{-4pt}{$\diamond$}}}\, #1 \leftarrow #2  {\mathrel{\raisebox{-4pt}{$\diamond$}}\!\!\rangle}\,   #3 }
\newcommand{\leql}[1]{\underset{\scriptscriptstyle{ #1 }}{\leq} }
\newcommand{\UFam}{\mathsf{UFam}}
\newcommand{\asmmap}[2]{\|#2\|_{#1}}
\newcommand{\Asm}{\ensuremath{\mathbf{Asm}}}
\newcommand{\Set}{\ensuremath{\mathbf{Set}}}
\newcommand{\op}{\mathrm{op}}
\newcommand{\ef}{\ensuremath{\mathcal{E\!F}}}
\newcommand{\Phirel}{\Phi_{\mathsf{ev}}}
\newcommand{\A}{\mathcal{A}}
\newcommand{\Trip}{\mathcal{T}}
\newcommand{\codetuple}[2]{\left\langle #1 , #2 \right\rangle}
\newcommand{\effshell}{\ensuremath{\mathcal{MC}}_M}
\newcommand{\emeet}{\wedge}
\newcommand{\meet}{\sqcap}
\newcommand{\join}{\sqcup}
\newcommand{\haimp}{\sqsupset}
\newcommand{\infimum}{\bigsqcap}
\newcommand{\supremum}{\bigsqcup}
\newcommand{\pred}{\phi}
\newcommand{\predI}[1]{\pred_{#1}}
\newcommand{\predA}{\predI{1}}
\newcommand{\predB}{\predI{2}}
\newcommand{\predC}{\predI{3}}
\newcommand{\preds}{\vec{\pred}}
\newcommand{\separator}{\mathcal{S}}
\newcommand{\singleton}{\star}
\newcommand{\param}{\mathbf{P}}
\newcommand{\pairing}[1]{\left\langle #1 \right\rangle}
\newcommand{\sub}[1]{\!\left\{ #1 \right\}}
\newlength{\fstlen}
\newlength{\sndlen}
\newlength{\thdlen}
\newlength{\mysep}
\newlength{\thesep}
\newcommand{\twothings}[2]{
  \settowidth{\fstlen}{$#1$}
  \setlength{\mysep}{0.5\textwidth}
  \addtolength{\mysep}{-\fstlen}
  \addtolength{\mysep}{-15pt}
\[
  #1
  \hspace{\mysep}
  #2
\]
}
\newcommand{\expr}{p}
\newcommand{\negspace}{\vspace{-0.5em}}
\begin{document}

\maketitle

\begin{abstract}
Partial Combinatory Algebras (PCAs) provide a foundational model of the untyped $\lambda$-calculus and serve as the basis for many notions of computability, such as realizability theory. 
However, PCAs support a very limited notion of computation by only incorporating non-termination as a computational effect. 
To provide a framework that better internalizes a wide range of computational effects, this paper puts forward the notion of Monadic Combinatory Algebras (MCAs). 
MCAs generalize the notion of PCAs by structuring the combinatory algebra over an underlying computational effect, embodied by a monad. 
We show that MCAs can support various side effects through the underlying monad, such as non-determinism, stateful computation and continuations. 
%
%Furthermore, we show that MCAs can parameterized by the underlying evaluation strategy, and provide an illustrative example allowing both Call by Value and Call by Name for CPS.
%
We further obtain a categorical characterization of MCAs within Freyd Categories, following a similar connection for PCAs.
Moreover, we explore the application of MCAs in realizability theory, presenting constructions of effectful realizability triposes and assemblies derived through evidenced frames, thereby generalizing traditional PCA-based realizability semantics.
The monadic generalization of the foundational notion of PCAs provides a comprehensive and powerful framework for \emph{internally} reasoning about effectful computations, 
paving the path to a more encompassing study of computation and its relationship with realizability models and programming languages.
\end{abstract}

\lcnote{cite and connect to Paul Levy}

\section{Introduction}

%\paragraph*{PCA role and importance}
Partial Combinatory Algebras (PCAs) offer a foundational algebraic model for the untyped $\lambda$-calculus, which underpins various notions of computability~\cite{feferman1975language,hofstra2004partial}.
PCAs have been widely applied in areas such as realizability interpretations of logic and type theory~\cite{kleene1945interpretation,van2008realizability}, as well as in the construction of categories of assemblies and sheaves~\cite{FREY20192000}. 
A key feature of PCAs is their partiality, meaning that the application of a program to an input is defined as a partial operator, i.e. not all inputs are guaranteed to produce valid outputs. 
This property allows PCAs to naturally support non-termination as a computational effect. 
However, PCAs are inherently limited in their ability to model other critical effects, such as nondeterminism, stateful computation, or exceptions.
%
%\paragraph*{PCA limitations}
Thus, despite their foundational significance,  traditional PCAs are inherently limited in their computational scope and thus restricted in their utility to support modern computation, which often demands a broader treatment of side effects.

While extensions of PCAs, such as PCAs with errors (PCAE),  with choice (PCAC), with partiality and exceptions (PCAX) \cite{cohen2019effects,CohMiqTat21}, have been developed to accommodate specific effects, these approaches often rely on incorporating additional structures or operations into the underlying algebraic framework. 
%PCAs do offer ways in which various effects can be modeled \emph{indirectly}, usually by incorporating additional structures or operations into the underlying algebraic framework. 
%Notable examples include: PCAs with errors (PCAE),  with choice (PCAC), with partiality and exceptions (PCAX) \cite{cohen2019effects,CohMiqTat21}. %\lcnote{continue}
%
However, the ability to internally support various computational effects is crucial for the development of robust and expressive computational theories.
Therefore, our goal is to develop a  unified framework that internalizes a wide range of computational effects into combinatory algebra  in a manner as natural as the support for non-termination in PCAs. 
This %will provide a more comprehensive understanding of computation by incorporating a wider range of computational effects into a uniform framework, which in turn will 
will allow for a more encompassing study of computation and new insights in the wide range of application domains of combinatory algebras.
%its relationship with realizability models and programming languages.

% \paragraph*{computational effects, importance, and monad representation}
% \paragraph*{MCA}

A common categorical device for analyzing computational effects in the study of programming languages is through monads~\cite{moggi1991notions,wadler1995monads}.
Concretely, (strong) monads are a special kind of functors that allow the composition of Kleisli morphisms, i.e., morphisms where the target is an object in the image of the functor.
Using monads, a procedure that takes a value of type $A$ to a value of type $B$, while possibly invoking some computational effect, can be modeled by a morphism from $A$ to $M B $, where $M$ is some monad which embodies the effect.
For example, a common way to model nondeterministic computation is to use the powerset monad, which assigns to each set its set of subsets, so the output subset is the set of possible values the computation may yield.

% \lcnote{this is specific to Kleisli}
% Monads are defined as a triple $\left( T , \eta , \_^{*} \right)$, where $T$ is a functor; for each object $A$ the morphism $\eta_A : A \rightarrow T \left( A \right)$ is the Kleisli identity over $A$; and the Kleisli extension $\_^{*}$ turns every Kleisli morphism $f : A \rightarrow T \left( B \right)$ into a morphism of the form $f^{*} : T \left( A \right) \rightarrow T \left( B \right)$, allowing the composition of Kleisli morphisms.
% %

To address the limitations of PCAs, this paper introduces Monadic Combinatory Algebras (MCAs), a novel generalization of PCAs designed to encapsulate a broader spectrum of computational effects by integrating monads into the combinatory algebra framework. 
%The essence of MCAs lies in their construction, which builds the combinatory algebra over an underlying computational effect,  represented by a monad. 
%
Leveraging the monadic structure enables MCAs to internalize various side effects, such as non-deterministic computations, stateful computations, exceptions, and more.
Hence, MCAs can go beyond partiality and seamlessly integrate a wide range of effects through the underlying monad.
In doing so they provide a powerful framework for \emph{internally} supporting effectful computations, which makes them particularly valuable in modern computational contexts.
%thus broadening the applicability and relevance of combinatory algebras in contemporary computability theory and programming language semantics.
%
In fact, we show that MCAs can be instantiated to capture known effectful frameworks, such as non-determinism, stateful computation, continuations and parametric realizability.

% We show that MCAs can be parameterized by the underlying evaluation strategy. This flexibility allows for the representation of different evaluation strategies, such as Call by Value (CbV) and Call by Name (CbN), within the same theoretical framework. We show that this parameterization is particularly beneficial in the context of Continuation-Passing Style (CPS) transformations, where different evaluation strategies can have substantial implications on the resulting computation.

%\paragraph*{categorical connection}
We further provide a categorical characterization of MCAs within the context of Freyd categories, in the spirit of a similar characterization previously known of PCAs in Turing categories~\cite{COCKETT2008}. Specifically, we define a notion of a combinatory object in a Freyd category, and show that for monads in the category of sets, $\setcat$, the algebraic notion of an MCA  precisely coincides with the categorical notion of a combinatory object in the Kleisli category $\setcat_{\monad}$ of $\monad$ over $\setcat$, which forms a Freyd category.

Finally, we explore the application of MCAs in the context of realizability theory, building upon the foundational role of PCAs in traditional realizability models~\cite{van2008realizability}.
Effectful realizability and more generally extensions of the Curry-Howad paradigm using effects to account for more advanced reasoning principles have been extensively studied over the last decades, with various approaches exploring its theoretical and practical implications~\cite{BerBezCoq98,Krivine09,Boulier+Pedrot+Tabareau:cpp:2017,Pedrot+Tabareau:esop:2018,Pedrot+Tabareau+Fehrmann+Tanter:icfp:2019,Pedrot+al:lics:2020,Pedrot+Tabareau:popl:2020}. Among these, the role of monads has been pivotal in modeling and managing effects, offering a structured and algebraic perspective on effectful computation. 
Separately, the combinatorial approach to realizability has also garnered some new attention, emphasizing 
the benefits of working with combinators for algebraic purposes~\cite{Hofstra06relative,Streicher13krivine,FerEtAl17ordered,Speight24}. 
Yet, regarding effects these works were mostly focused on the particular case of Krivine realizability, i.e.,  computations manipulating continuations~\cite{Streicher13krivine,FerEtAl17ordered}. 
In this work, we present the first integration of these perspectives, combining the algebraic insights of monads with the combinatorial methodology to offer a unified framework for realizability and monads.

%for many years PCAs have been used as the computational basis on top of which realizability models were built~\cite{van2008realizability}. 
Two standard realizability models stemming from PCAs are given by topoi and assemblies. 
From a categorical perspective, starting from a PCA one would define a tripos, and then use the tripos-to-topos construction to get a topos~\cite{hyland1980tripos}. 
Properties of this realizability topos can be studied more easily directly into its subcategories of assemblies, which again can be defined over any PCA. 
Cohen \emph{et al.} observed that effectfull realizability models could be defined in a uniform way using a structure called \emph{evidenced frame}, factorizing the usual construction of a tripos from a PCA and making it compatible with effectful computational system~\cite{CohMiqTat21}. 
Leveraging this, we here define evidenced frames over general MCAs, and even extend this approach to consider assemblies over any evidenced frame. This illustrates how traditional approaches to realizability based on PCAs easily generalize to MCAs.                    
Moreover, the utility of the MCA framework is highlighted through several examples of known realizability models that arise naturally through the uniform MCA-based constructions. 
% \textbf{Main contribution:}
% \begin{itemize}[leftmargin=*]
%     \item We put forward the concept of Monadic Combinatory Algebras (MCAs), extending the foundational structure of PCAs to support a wide range of computational effects through underlying monads.
%     \item We demonstrate how MCAs can naturally model various effectful computations, including nondeterministic, stateful computations, and continuations.
%     \item We establish a categorical characterization of MCAs within Freyd Categories, enriching the relationship between combinatory algebras and categorical models of computation.
%     \item We show how MCAs induce effectfull realizability models by building evidenced frames, and how several known models can be retrieved by selecting appropriate monads.
% \end{itemize}

\paragraph*{Outline and Main contributions.} 
\begin{itemize}[leftmargin=*]
 \item \Cref{sec:background} reviews the necessary background on PCAs and monads.
\item \Cref{sec:mca} formally defines the novel structure of Monadic Combinatory Algebras%(\Cref{sec:mcadef})
, extending the foundational structure of PCAs to support a wide range of computational effects through underlying monads. 
The utility of the MCA framework is illustrated by modeling various computational effects 
%, such as non-deterministic, stateful computations, CPS continuations, and parametric realizability 
(\Cref{sec:examples}), and the categorical characterization of MCAs is established within Freyd Categories~(\Cref{sec:turing}). %, enriching the relationship between combinatory algebras and categorical models of computation \Cref{sec:pca-object}.
%\Cref{sec:turing} establishes the categorical characterization of MCAs in Freyd Categories.
\item \Cref{sec:realizability} shows how usual realizability semantics (namely triposes and assemblies) can be generalized to MCAs, via uniform constructions factoring through evidenced frames. 
\item \Cref{sec:conc} concludes with a summary of contributions and directions for future research.
\item We supplement our development with an accompanying Rocq mechanization~\cite{Coqproofs}, hyperlinked with this paper via clickable \coqdoc{} icons.

%, focusing on the definition of the MCA-framework and its relation to evidenced frames.
%
\end{itemize}

% Traditional models of computation, such as Turing machines and lambda calculus, assume total functions (functions defined for all inputs). PCAs extend these models to include partial functions, which are essential for representing computations that may not halt or may not be defined for certain inputs.
% %
% Our goal is to generalize computation models.
% Just like PCA generalized combinatory algebra by supporting partiality.

% PCAs are used in realizability interpretations of logic and type theory, providing a bridge between syntactic proofs and their computational content. This allows for the extraction of algorithms from constructive proofs.

% PCAs form the basis for constructing categories of assemblies and sheaves, which are used in categorical logic to study models of type theories and higher-order logics.
% These categories have internal languages that can be interpreted using PCAs, linking syntax (type theory) and semantics (category theory).

% But while PCAs can \emph{model} effectful computations, such as those involving state, exceptions, or non-determinism, our goal is to extend the algebra itself so that it can \emph{internalize} effectful computations, not simply model them.

\section{Background}\label{sec:background}

This section briefly overviews PCAs and monads, establishing key notations and conventions.

\subsection{Partial Combinatory Algebras}\label{subsection:PCA}

Partial Combinatory Algebras (PCAs) provide a foundational model for the untyped $\lambda$-calculus and underlie many frameworks in computability, such as realizability theory. 
They are a generalization of combinatory algebras that allows for partial functions.
That is, PCAs extend combinatory algebras by incorporating the computational effect of non-termination. 
% In this work we aim to take this further, providing a unified framework for a combinatory algebra that offers support for various computational effects. 
% %
% But first, this section provides a brief overview of partial combinatory algebras, which we generalize in this paper, as well as establishes some key notations and conventions. 

The definition of partial combinatory algebra is based on  \emph{partial applicative structures}.
\begin{definition}[Partial Applicative Structure]
A partial applicative structure is a set $\mca$ of ``codes'' equipped with a partial binary ``application'' operator on $\mca$:
$ \left(-\right) \app \left(-\right) : \mca \times \mca \rightharpoonup \mca. $
%
%For any two codes $c_{f}$ and $c_{a}$, the expression $c_{f} \app c_{a} \downarrow$ is used to say the application $c_{f} \app c_{a}$ is defined, and $c_{f} \app c_{a} \downarrow c_{r}$ is used to say $c_{f} \app c_{a}$ is defined and equal to $c_{r}$.
\end{definition}

We use~$c_f \app c_a \downarrow c_r$ to denote $c_r$ being the (successful) result of the application~$c_f \app c_a$, and  $c_{f} \app c_{a} \downarrow$ to denote that there is a code $c_{r} \in \mca$ such that $c_{f} \app c_{a} \downarrow c_{r}$.

%From now on, for any two codes $c_{f} , c_{a} \in \mca$ and any set $S$, the notation $c_{f} \app c_{a} \in S$ will be used to denote that there is a code $c_{r} \in \mca$ such that $c_{f} \app c_{a} \downarrow c_{r}$ and $c_{r} \in S$.

%Semantically, the application operator takes a code $c_{f}$, denoting some program, and a code $c_{a}$, and then runs the program represented by $c_{f}$ on $c_{a}$ as an input, yielding $c_{f} \app c_{a}$ as a result.
%The partiality of the application operator allows this execution to never terminate. If the execution does terminate, and yields the code $c_{r}$ as a result, it is expressed by the judgement $c_{f} \app c_{a} \downarrow c_{r}$.

A PCA is then defined as a partial applicative structure that is ``functionally complete'', meaning there is a way to encode application expressions 
(such as $(c_1 \app (c_2 \app c_3)) \app (c_4 \app c_5)$) 
with $n$~free variables as individual codes accepting $n$~arguments through applications. 
That is, we ensure that every formal expression involving variables, codes, and application, can be ``internalized'' in a similar vein to the abstraction done in $\lambda$-calculus.
This ensures the necessary expressiveness for modeling computational systems like the $\lambda$-calculus.
%Given a partial applicative structure, one can consider application ``expressions'' such as $(c_1 \app (c_2 \app c_3)) \app (c_4 \app c_5)$.
%%, meaning $c_1 \app c_2 \app c_3$ is shorthand for $(c_1 \app c_2) \app c_3$).

%\revnote{the grammar of expression seems unnecessarily convoluted: why not saying that i is just a natural number?}
To present the formal definition, we first formalize expressions~$e$ with numbered free variables~$i \in \mathbb{N}$. % and substitution~$e[c_a]$.
To make the formalism easier to mechanize, we opt for lexical addresses (de Bruijn level notation), rather than nominal variables.
Lexical addresses are numbers representing the index of a formal parameter within the body of a bound expression.
For example, given a function $f(x_{0},x_{1},x_{2})$, $2$ is the lexical address of $x_{2}$ because it is the third parameter\emnote{third but 2, can't we rephrase}, so instead of writing $x_{2}$ within the body of $f$, one can use the number $2$.
%
%Here we define partial combinatory algebras in terms of functional completeness rather than combinators, and as such we need to formalize expressions, substitution, and the extension of evaluation to expressions.
$$
e  {}::={}  i \in \mathbb{N} \mid c \in \mca \mid e \bullet e \qquad\qquad
E_n(\mca)  {}::={}  \{ e \mid \text{all $i$s in $e$ are $< n$}\} 
$$
Term application $\bullet$ is left associative. Elements of $E_{0}\left(\mca\right)$ are called \emph{closed} , while for every $n>0$, elements of $E_{n}\left(\mca\right)$ are called \emph{open}.
Next, we define substitution~$e\sub{c_a}$ and evaluation to expressions~$e \downarrow c_r$ as follows: %\agnote{Why are we using CbN substitution here instead of CbV?}
\negspace
$$\begin{array}{c@{\hspace{1cm}}c}
    \begin{tabular}{c|c}
         $e $ & $e\sub{c_a}$ \\
         \hline
         $0$ & $c_a$ \\
         $i + 1$ & $i$ \\
         $c$ & $c$ \\
         $e_{1} \bullet e_{2}$ & $e_{1}\sub{c_a} \bullet e_{2}\sub{c_a}$
    \end{tabular}
%    e\sub{e'}=
% \begin{cases}
%          e' & e=0 \\
%          i  &  e=i+1\\
%          c & e=c \\
%         e_{1}\sub{e'} \bullet e_{2}\sub{e'} & e=e_{1} \bullet e_{2}
% \end{cases}
&
 \begin{tabular}{c}
$\inferrule*{ }{c \downarrow c}$ \\
~\\
$\inferrule*{e_f \downarrow c_f \qquad e_a \downarrow c_a \qquad c_f \app c_a \downarrow c_r}{e_f \bullet e_a \downarrow c_r} $
    \end{tabular}
\end{array}
$$
\label{exp+sub}
\negspace
%
% We also extend the definition of evaluation to expressions~$e \downarrow c_r$ as follows:
% %\revnote{in the rhs rule, it seems that $\cdot$ and $\bullet$ should be swapped.}
% $$
% \inferrule*{ }{c \downarrow c} \qquad \qquad  \ \inferrule*{e_f \downarrow c_f \\ e_a \downarrow c_a \\ c_f \app c_a \downarrow c_r}{e_f \bullet e_a \downarrow c_r} $$

% Next comes substitution, which expresses the idea that variables are ``stand-ins'' for other expressions.
% The expression $e_{a}\left[e_{b}\right]$ represents the substitution of every free occurrence of the index $0$ by $e_{b}$ within $e_{a}$.
% Substitution is defined by recursion on the structure of $e_{a}$:
\begin{definition}[Partial Combinatory Algebra]
A \emph{partial combinatory algebras (PCA)} is a partial applicative structure $\mca$ 
%, equipped with an  ``abstraction'' assignment of
such that for every expression~$e \in E_{n+1}\left(\mca\right)$ there is a code~$\encode{n}{e} \in \mca$, satisfying the following laws:
% \[ \left\langle \lambda^{n} . - \right\rangle : E_{n+1}\left(\mca\right) \rightarrow \mca \]
% which takes an open expression and returns a code, abiding by 
\begin{align*}
\left\langle \lambda^{n+1} . e \right\rangle \app c_{a} \downarrow \left\langle \lambda^{n} . e \sub{c_{a}} \right\rangle 
\qquad \qquad
\left\langle \lambda^{0} . e \right\rangle \app c_{a} \downarrow c_{r} \iff e \sub{c_{a}} \downarrow c_{r}
\end{align*}
\end{definition}
The code $\encode{n}{e}$ is essentially the closure of the open expression $e$,   embodying the $\lambda$-calculus term binding the $n+1$ free variables in~$e$.
Note that while we use this $\lambda$-construct to specify the result of a partial application, the above definition is equivalent (under certain assumptions) to the perhaps more traditional PCA definition~\cite{hofstra2004partial}.
%which only requires the partial application to be defined and well-behaved. 

\subsection{Monads}

Roughly speaking, monads are used to relate a computational effect with a specific endofunctor, say $\monad$, and consider effectful programs that return results of type $A$ in context $X$ as
% generalized elements of the object $\monad A$.
% A \emph{generalized element} of an object $Y$ in category $\C$ is 
morphisms $f \in \hom{\C}{X}{\monad A}$. 
A \emph{monad} $\monad$ over category $\C$ is a functor $\monad : \C \rightarrow \C$, equipped with two natural transformations $\eta : \termobj \Rightarrow \monad$ and $\mu : \monad \monad \Rightarrow \monad$, satisfying equational laws, such that it forms a monoid object in the category of endofunctors over $\C$.
% \lcnote{def B}Monads are defined as a triple $\left( T , \eta , \_^{*} \right)$, where $T$ is a functor; for each object $A$ the morphism $\eta_A : A \rightarrow T \left( A \right)$ is the Kleisli identity over $A$; and the Kleisli extension $\_^{*}$ turns every Kleisli morphism $f : A \rightarrow T \left( B \right)$ into a morphism of the form $f^{*} : T \left( A \right) \rightarrow T \left( B \right)$, allowing the composition of Kleisli morphisms.

In the general case, interpreting computational effects in arbitrary categories requires a \emph{strong monad} rather than just a monad~\cite{moggi1991notions}. However, here, for simplicity, we focus on a set theoretic settings, where every monad over the category of sets is a strong monad.
Hence, we sometimes use a notation similar to the one from $\lambda_{c}$ calculus~\cite{moggi1989computationallamdba}, where $\return{a}$ stands for $\eta\left(a\right)$, often called `return', and $\letin{x}{m}{n}$ stands for $\mu \circ \monad\left(\lambda x . n\right)\left(m\right)$, often called `bind'.

%\lcnote{how a Set-based Kleisli category becomes a Freyd category? I think that should go to the section itself}

%For a monad $\monad$, morphisms of the form $\hom{\C}{A}{\monad B}$ for $A,B \in \obj{\C}$ form a category, called the \emph{Kleisli category}.
Given a category $\C$ and a monad $\monad$ on $\C$, the \emph{Kleisli category} $\C_{\monad}$ 
has the same objects as \( \C \), %i.e., \( \obj{\C_{\monad}} = \obj{\C} \),
while its morphisms are Kleisli morphisms, i.e., $\hom{\C_{\monad}}{A}{B} := \hom{\C}{A}{M B}$. 
The identity morphism over an object \( A \) is $\eta_{A} \in \hom{\C}{A}{\monad A}$. 
Composition in \( \C_{\monad} \) is defined for $f \in \hom{\C_{\monad}}{B}{C}$ and $g \in \hom{\C_{\monad}}{A}{B}$ by $\mu \circ \monad f \circ g \in \hom{\C}{A}{\monad C} $.
% \begin{definition}[Kleisli category]
% Given a category $\C$ and a monad $\monad$ on $\C$, the \emph{Kleisli category} $\C_{\monad}$ has the following components:
% \begin{itemize}[leftmargin=0.5cm]
%     \item Objects are the same as in $\C$:
%     $~~\obj{\C} := \obj{\C_{\monad}}$
%     \item Morphisms are Kleisli morphisms:
%     $~~\hom{\C_{\monad}}{A}{B} := \hom{\C}{A}{M B}$
%     \item Identity over object $A$ is:
%     $~~\id_{A} \in \hom{\C_{\monad}}{A}{A} := \eta_{A} \in \hom{\C}{A}{\monad A}$
%     \item Given morphisms $f \in \hom{\C_{\monad}}{B}{C}$ and $g \in \hom{\C_{\monad}}{A}{B}$, their composition is:\\
%     $~~ f \circ g \in \hom{\C_{\monad}}{A}{C} := \mu \circ \monad f \circ g \in \hom{\C}{A}{\monad C} $
% \end{itemize}
% \end{definition}

\section{Monadic Combinatory Algebras}\label{sec:mca}

This section presents the generalization of the notion of PCA to one that supports a wide range of computational effects. %effects captured by monads. 
%
%To support general effects we employ the standard structure for describing effects: monads.
%
For this, we employ the standard structure for describing and reasoning about general effects: (strong) monads~\cite{moggi1991notions}. 
This categorical model has the benefit of preserving many useful properties of functions while allowing a wider variety of models.
Thus,~\Cref{sec:mcadef} defines Monadic Combinatory Algebra (MCAs), generalizing PCAs by encapsulating computational effects via monads,~\Cref{sec:examples} demonstrates their versatility in embedding computational effects, and~\Cref{sec:turing} provides their categorical characterization.

% Thus, \Cref{sec:mcadef} defines the notion of Monadic Combinatory Algebra (MCA), which generalizes the notion of a PCA, but instead of partial computation, exhibits a computational effect encapsulated behind a monad.
% Then, \Cref{sec:examples} demonstrates the versatility and uniformity of the MCA framework by showing how  some of the most common effectful notions of computations can be embedded within an MCA.

\subsection{The Effectful Algebra}\label{sec:mcadef}

Just as PCAs describe the application as a \emph{partial} operator, our generalized notion describes the application as an \emph{effectful} operator, parameterized by some $\mathbf{Set}$ monad.
%

%\lcnote{do we need to discuss strong monads?}

\begin{definition}[Monadic Applicative Structure \coqdoc{MAS}]
Given a $\mathbf{Set}$ monad $M$, a \emph{Monadic Applicative Structure} (MAS) over $M$ is a set of ``codes''
$\mca$ with an application Kleisli function:
$\left(-\right)\app\left(-\right) : \mca\times\mca \rightarrow M\mca$.
\end{definition}

Terms with variables and substitution follow the PCA grammar.
We think of $\left(-\right)\app\left(-\right)$ as a function that takes two codes and returns a `computation' of a code.
Intuitively, codes are either the encoding of programs, or encodings of inputs of programs.
Using the same encoding for both allows us to discuss programs that take encodings of other programs as inputs. %and use them.
That is, viewing a code $c_{f}$ as the encoding of a program, and a code $c_{a}$ as the encoding of its input, $c_{f} \app c_{a}$  denotes the computation executed by running the program encoded by $c_{f}$ on $c_{a}$ as an input.
So far, this is identical to the PCA settings.
However, whereas the PAS application function can only produce a code (if anything at all),  the MAS application function can produce any computational effect describable by a monad. 
%
%There are various different but equivalent approaches to do so, we will use the following. 

% Given a \emph{MAS} $\mca$, the set $E_{n}\left(\mca\right)$
% of \emph{terms over }$\mca$ is defined by the grammar:
% \[
% \hfill
% e \ ::= \  0 \ | \  \ldots \ | \  n-1 \ | \  c \in \mca \ | \  e \app e
% \hfill
% \]
% Given a term $e \in E_{n+1}\left(\mca\right)$ and a code $c\in\mca$, their 
% \emph{substitution} $e\left[c\right] \in E_{n}\left(\mca\right)$ is defined inductively by:

% \begin{center}
% 	\begin{tabular}[H]{|c|c|} 
% 		\hline
% 		$e$ & $e\left[c_{a}\right]$ \\ 
% 		\hline
% 		$0$ & $c_{a}$ \\ 
% 		$i+1$ & $i$ \\
% 		$c$ & $c$ \\
% 		$e_{f} \app e_{a}$ & $e_{f}\left[c_{a}\right] \app e_{a}\left[c_{a}\right]$ \\
% 		\hline
% 	\end{tabular}
% \end{center}

To go from the applicative structure to an algebra, we must employ some evaluation map that given a term, `computes' the `value' of the term.
%
%The key to the evaluation map is that the evaluation of the codes (i.e., $c \in \mca$) is determined by the underlying monad.
%
The key to the evaluation is the way in which it connects to the underlying monad~\cite{wadler1990comprehending}.
%
% In general, since the language of monads prescribes a sequential behavior, using it to define an evaluation essentially forces a specific evaluation strategy~\cite{wadler1990comprehending}.\lcnote{contradicts (as written) the next paragraph}
%
Intuitively, the evaluation of a code simply returns the code, without exhibiting any effectful behavior.
Application terms, on the other hand, evaluate using the monadic bind, depending on the specific strategy, and may exhibit the computational effect carried by the monad.

% The evaluation function described here encodes a Call-by-Value (CbV) strategy.
% Intuitively, the evaluation of a code simply returns this code, without exhibiting any effectful behavior.
% The evaluation of application $e_{f} \bullet e_{a}$, as per a CbV strategy, first evaluates the operator term $e_{f}$, getting a code $c_{f}$ as its value, then evaluating the argument term $e_{a}$ to a code $c_{a}$, and finally, executing the program encoded by the operator value $c_{f}$ on the input encoded by $c_{a}$.
% \lcnote{join zoom? I think you are writing stuff that are already written}

%
% To provide the most general structure, we do not commit to any specific evaluation strategy, but rather parameterize our MCA by some monadic evaluation map.
% \begin{definition}[Monadic Evaluation]
% Given a MAS over $M$, $\nu :E_{0}\left(\mca\right)\rightarrow M \mca $ is 
% an \emph{$M$-evaluation} Kleisli map if  $\nu\left(c\right) := \eta\left(c\right)$ for for $c \in \mca$.\lcnote{is this a property we want of all evaluation strategies?}
% %  
% \end{definition}
In principle, MCAs are orthogonal to the specifics of the evaluation strategy. %as long as it obeys the monad on the base codes. 
That is,  we can define a notion of an MCA based on various evaluation strategies. 
In this paper, to conform with traditional PCA, 
%and compatibility with the categorical structures discussed in \Cref{sec:turing},  
we opt to commit to the common Call-by-Value (CbV) evaluation strategy, codified in the following evaluation model.
%
%The following evaluation model codifies a CbV evaluation strategy.

\begin{definition}[CbV Evaluation \coqdoc{eval}]\label{def:cbv-eval}
Evaluation $\nu$ is 
defined by induction on $E_{0}\left(\mca\right)$ as follows:
$$
\nu\left(c\right) := \return{c} \qquad \qquad
   % \nu\left(e_{f}\app e_{a}\right) &:=& \text{do}\ c_{f}\leftarrow\nu\left(e_{f}\right)\ ;\ c_{a}\leftarrow\nu\left(e_{a}\right)\ ;\ c_{f} \app c_{a}
       \nu\left(e_{f} \bullet e_{a}\right) :=
       \letin{c_{f}}{\nu\left(e_{f}\right)}{\letin{c_{a}}{\nu\left(e_{a}\right)}{c_{f} \app c_{a}}}
$$
% $$\begin{array}{l@{\hspace{0.15in}}l@{\hspace{0.15in}}l}
%     \nu\left(c\right) &:=& \return{c} \\
%    % \nu\left(e_{f}\app e_{a}\right) &:=& \text{do}\ c_{f}\leftarrow\nu\left(e_{f}\right)\ ;\ c_{a}\leftarrow\nu\left(e_{a}\right)\ ;\ c_{f} \app c_{a}
%        \nu\left(e_{f} \bullet e_{a}\right) &:=&
%        \letin{c_{f}}{\nu\left(e_{f}\right)}{\letin{c_{a}}{\nu\left(e_{a}\right)}{c_{f} \app c_{a}}}
%         % \nu_v\left(e_{f}\right) \bind \left(
%         % \lambda c_{f} . \nu_v\left(e_{a}\right) \bind \left(
%         %     \lambda c_{a} . c_{f} \app c_{a}
%         %     \right)
%         % \right)
% \end{array}
% $$
%where \lcnote{explain eta and the do notation}
\end{definition}
Note that evaluation is reflected by an \emph{equality} in $\mca$. 
% \emnote{@Ariel: this could be the right place to emphasize that evaluation is reflected by an equality in $\mca$ (and quickly explain how this impacts what follows), and explain that for that we could probably generalized to an ordered settings following \cite{Hofstra06relative,FerEtAl17ordered}.}
% \lcnote{Etienne, what implications ould you like to explain here? the CPS is far away}
However, this could be generalized to an ordered setting following~\cite{Hofstra06relative,FerEtAl17ordered}.

Next, we turn to the definition of monadic combinatory algebras.
To ensure they can support general computation, they must be at least as computationally powerful as the $\lambda$-calculus.
This is done through a similar mechanism to how PCAs are defined.
But, unlike PCAs, which permit only purely functional behavior (up to non-termination), monadic combinatory algebras accommodate any computational effect representable via monads.
%monadic combinatory algebras allow for any computational effectful behavior that can be described using monads. \agnote{``any computational effectful behavior that can be described using monads'' is already used a few paragraphs above}
 
%\revnote{Definition 6 does not require the presence of primitive operations (e.g. flip for nondeterminism) in the style of algebraic effects. Maybe this should be mentioned, discussing the possibility of adapting the definition of MCA to its presence, and the impact of such a modification to the rest of the paper. This should be helpful in the following, in particular when expressions like "*may* have codes such as" are used.}
\begin{definition}[Monadic Combinatory Algebra \coqdoc{MCA}]\label{mca}
A Monadic Combinatory Algebra (MCA) is a monadic applicative structure $\mca$ such that for every expression~$e \in \Expr{n+1}\left(\mca\right)$ there is a code~$\encode{n}{e} \in \mca$ satisfying the following laws:
% such that for every $n\in\mathbb{N}$ and every term $e \in E_{n+1}\left(\mca\right)$, there is a code $\encode{n}{e}$, such that:
$$\begin{array}{@{}llcl}
\forall n \in \mathbb{N}. \forall e \in \Expr{n+2}\left(\mca\right). \forall c \in \mca. & \encode{n+1}{e} \app c &=& \eta \left( \encode{n}{e\sub{c}} \right) \\
	\forall e \in \Expr{1}\left(\mca\right).\forall c \in \mca .& \encode{0}{e} \app c &=& \nu \left( e\sub{c} \right)
\end{array}$$
\end{definition}

% Note the similarity to the definition of a PCA:
% $$\begin{array}{@{}ll}
% \forall n.\forall e \in \Expr{n+2}.\forall c_a. & \encode{n+1}{e} \app c_a \downarrow \encode{n}{e[c_a]} \\
% \forall e \in \Expr{1}.\forall c_a, c_r. & \encode{0}{e} \app c_a \downarrow c_r \iff e[c_a] \downarrow c_r
% \end{array}$$

%\lcnote{this is CbV, no?}

%\lcnote{adjust to codes}
The \emph{abstraction} assignment of codes to expressions turns open terms into codes that internalize their evaluation after substitution using the monad's return and bind.
Principally, the MCA laws state that it ``delays'' the evaluation of an open term it encloses until all free variables in it are substituted by codes.
If there is more than one free variable in the term, using $\eta$ means the application of the abstracted term to a code only substitutes the outermost parameter with the argument.
However,  if the term has exactly one free variable,  applying it to the argument obtains a closed term, which is then immediately evaluated.

Terms that only involve codes given by the abstraction operator alone are called \emph{pure terms}.
The MCA definition closely resembles that of a PCA; in fact, when evaluating pure terms, the behavior remains identical to a PCA, exhibiting no computational effects beyond non-termination.
Thus, computational effects arise only through additional codes not definable via abstraction.
For example, when describing non-deterministic computation, using the powerset monad, the evaluation of pure terms will always result in a set with at most one value, while using the monad allows $\mca$ to have codes which, when applied, yield a set with multiple values (see~\Cref{sec:examples}).
However, the monad still plays an important role when evaluating pure terms, as it allows interpreting a non-terminating evaluation as special elements within the set of computations, denoting the absence of a value. 
%
%The next proposition makes this connection to PCAs precise. 
%As mentioned, the notion of an MCA is a generalization of a PCA, as shown below.
Concretely, a PCA is obtained from an MCA by instantiating the monad with the sub-singleton monad, i.e.  $M A$ is the set of subsets of $A$ in which all elements are equal.
\[
  \monad A = \{S\subseteq A \mid \forall x_{1},x_{2}\in S. x_{1} = x_{2} \}
 \qquad\qquad
\eta_A\left(x\right) = \{x\}
 \qquad\qquad
 \mu_A\left(m\right) = \bigcup_{X\in m} X
\]

\begin{proposition}\label{pca_mca}
    PCA is a special case of an MCA.
\end{proposition}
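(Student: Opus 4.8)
The plan is to verify that the sub-singleton monad $\monad$ defined above is a genuine $\mathbf{Set}$ monad, and then to show that under this instantiation the notion of an MCA collapses exactly to that of a PCA. I would proceed by first checking the monad laws, then establishing a dictionary between the partial-application data of a PCA and the Kleisli application of an MCA, and finally matching up the two sets of defining equations.

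First I would confirm that $(\monad,\eta,\mu)$ is a monad. One checks that $\monad A = \{S \subseteq A \mid \forall x_1, x_2 \in S.\, x_1 = x_2\}$ is functorial (the direct image of a sub-singleton is a sub-singleton), that $\eta_A(x) = \{x\}$ and $\mu_A$ given by union are natural, and that the left/right unit and associativity laws hold. The key observation is that a sub-singleton is either $\emptyset$ or $\{x\}$ for a unique $x$, so $\monad$ is isomorphic to the lift/maybe monad $A \mapsto A + 1$; the laws are routine under this identification. The reader should note that $\emptyset \in \monad A$ is precisely the element encoding divergence.

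Next I would set up the correspondence between the two structures on a fixed carrier $\mca$. A partial application $(-)\app(-) : \mca \times \mca \rightharpoonup \mca$ corresponds bijectively to a Kleisli map $(-)\app(-) : \mca \times \mca \to \monad\mca$ via $c_f \app c_a \downarrow c_r$ iff $c_f \app c_a = \{c_r\}$, and $c_f \app c_a$ undefined iff $c_f \app c_a = \emptyset$. Under this translation I would check that the inductive evaluation relation $e \downarrow c_r$ of \Cref{subsection:PCA} agrees with the CbV evaluation map $\nu$ of \Cref{def:cbv-eval}, namely $\nu(e) = \{c_r\}$ iff $e \downarrow c_r$ and $\nu(e) = \emptyset$ otherwise. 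This is a straightforward induction on $e$: the code case matches $\return{c} = \{c\}$ against the axiom $c \downarrow c$, and the application case matches the monadic $\mathsf{let}$-binding against the three-premise inference rule, since binding over $\emptyset$ yields $\emptyset$ (propagating undefinedness) while binding over singletons threads the results exactly as the premises $e_f \downarrow c_f$, $e_a \downarrow c_a$, $c_f \app c_a \downarrow c_r$ demand.

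With the dictionary in place, the two defining laws become equivalent statements. The MCA law $\encode{n+1}{e} \app c = \eta(\encode{n}{e\sub{c}})$ translates to $\encode{n+1}{e} \app c \downarrow \encode{n}{e\sub{c}}$, which is exactly the first PCA law; and $\encode{0}{e} \app c = \nu(e\sub{c})$ translates, via the agreement $\nu = {\downarrow}$ just established, to $\encode{0}{e} \app c \downarrow c_r \iff e\sub{c} \downarrow c_r$, which is the second PCA law. Hence a PCA on $\mca$ yields an MCA over $\monad$ on the same carrier with the same witnessing codes $\encode{n}{e}$, and conversely. I expect the main obstacle to be not any single law but the bookkeeping in the evaluation-agreement induction: one must be careful that the semantics of $\mathsf{let}\ x \Leftarrow m\ \mathsf{in}\ n$ under the union-based $\mu$ genuinely propagates $\emptyset$ in both argument positions and that substitution $e\sub{c}$ commutes with evaluation in the same way on both sides, so that the translation of $\nu$ is faithful rather than merely suggestive.
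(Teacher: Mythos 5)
Your proposal is correct and follows essentially the same route as the paper: instantiate $\monad$ with the sub-singleton monad, identify partial application with the Kleisli map via $c_f \app c_a \downarrow c_r \iff c_f \app c_a = \{c_r\}$ (with $\emptyset$ encoding divergence), and observe that under this dictionary, together with the routine induction showing $\nu$ agrees with the PCA evaluation relation, the two MCA laws become exactly the two PCA laws. One small caveat: your side remark that $\monad$ is isomorphic to the maybe monad $A \mapsto A + 1$ holds only classically (deciding whether a sub-singleton is inhabited is not constructive), which is precisely why the paper works with sub-singletons and mentions the maybe monad only ``when working in a classical metatheory''; since the monad laws also admit a direct verification, this does not affect your argument.
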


\Cref{mca} is equivalent to the (perhaps more familiar) formalization of combinatory completeness via the $\scomb$ and $\kcomb$ combinators~\cite{van2008realizability}.
That is, we could have alternatively, defined an  MCA as a MAS with $\scomb$ and $\kcomb$ combinators, because $\scomb$ and $\kcomb$ are simply encodings of particular expressions that are sufficient to ensure that all expressions can be encoded, and, in the converse direction, the above MCA laws essentially ensure their existence.
Note, however, that the characterizing axioms for these combinators in the discourse are tailored to a non-effectful behavior of the calculus. 
% That is, to obtain the intuitive behavior of these combinators in the more general, effectful setting produced by an underlying monad, we must first generalize the axiomatization of these combinators. 
Standardly, the axioms of $\scomb$ and $\kcomb$  require all their partial applications to be defined.
Usually written as $\kcomb \app c_{1} \downarrow$, $\scomb \app c_{1} \downarrow$, and $\scomb \app c_{1} \app c_{2} \downarrow$ for any $c_{1}$ and $c_{2}$.
In the context of MCAs, however, it is not sufficient for the partial applications to be defined, but they also have to be pure, not triggering any effects beyond the evaluation of their value.
This constraint ensures $\scomb$ and $\kcomb$ do exactly the same thing they do in PCAs, and nothing else, regardless of the underlying monad.
For monads, purity is represented with the monad unit.
Hence, the monadic version of the axioms requires the partial applications of $\scomb$ and $\kcomb$ to return the application of the monad unit over particular codes.

\begin{proposition}[\coqdoc{skmca}]\label{prop:SK}
    An MCA is equivalent to a MAS with %$\scomb$ and $\kcomb$ combinators.
  codes $\scode , \kcode , \scodeA{c_{1}} , \scodeB{c_{1}}{ c_{2}}$, and $\kcodeA{c_{1}}$ for any two codes $c_{1}$, $c_{2}$, satisfying the following axioms:
% \agnote{why do we use $\eta\left(x\right)$ and not $\return{x}$?} 
%\lcnote{also, we must be uniform with S and K notation}
$$
\begin{array}[t]{l@{~}c@{~}l}
    \scode \app c_{1} &= &\return{\scodeA{c_{1}}}\\
    \kcode \app c_{1} &=& \return{ \kcodeA{c_{1}}}\\
\end{array}
\qquad~~%\vrule
\begin{array}[t]{l@{~}c@{~}l}
    \scode \app c_{1} &= &\return{\scodeA{c_{1}}}\\
    \kcodeA{c_{1}} \cdot c_{2} &= &\return{c_{1}}\\
\end{array}
\qquad~~
\begin{array}[t]{l@{~}c@{~}l}
% \scode \app c_{1} &= &\return{\scodeA{c_{1}}}\\
%     \scodeA{c_{1}} \cdot c_{2} &= &\return{\scodeB{c_{1}} {c_{2}}}\\
    \scodeB{c_{1}}{c_{2}} \app c_{3} &= &\nu\left(\left(c_{1} \bullet c_{3}\right) \bullet \left(c_{2} \bullet c_{3}\right)\right)
\end{array}
$$
\end{proposition}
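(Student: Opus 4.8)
The statement is an equivalence, so the plan is to establish the two directions separately, the second being where essentially all the work lies. For the direction from an MCA to the combinatory axioms, I would exhibit the required codes as particular abstractions and discharge each axiom by the two laws of \Cref{mca}. Concretely, set $\kcode \defeq \encode{1}{0}$ and $\scode \defeq \encode{2}{(0 \bullet 2) \bullet (1 \bullet 2)}$, together with the intermediate codes $\kcodeA{c_1} \defeq \encode{0}{c_1}$, $\scodeA{c_1} \defeq \encode{1}{(c_1 \bullet 1) \bullet (0 \bullet 1)}$ and $\scodeB{c_1}{c_2} \defeq \encode{0}{(c_1 \bullet 0) \bullet (c_2 \bullet 0)}$. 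Each axiom then follows by a one-line computation: the partial applications $\scode \app c_1$, $\scodeA{c_1} \app c_2$ and $\kcode \app c_1$ reduce through the first MCA law, which substitutes the outermost variable and returns $\eta$ of the next abstraction, while $\scodeB{c_1}{c_2} \app c_3$ and $\kcodeA{c_1} \app c_2$ reduce through the second law, which applies $\nu$ to the now-closed body; carrying out the de Bruijn-level substitutions yields exactly the claimed right-hand sides, so this direction is routine.

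For the converse direction I must reconstruct the abstraction operator $\encode{n}{\cdot}$ and verify both MCA laws from $\scode$ and $\kcode$ alone. I would use bracket abstraction, eliminating the free variables of $e \in \Expr{n+1}\left(\mca\right)$ one at a time from the innermost level down to level $0$, using the identity combinator $I \defeq \scodeB{\kcode}{\kcode}$ (which satisfies $I \app c = \return{c}$ by the axioms) for the bound variable, $\kcode \bullet (-)$ for the constant case (codes and variables distinct from the one being abstracted), and $\scode \bullet (-) \bullet (-)$ for the application case. Abstracting all $n+1$ variables produces a \emph{closed} term built solely from $\scode$, $\kcode$ and the codes occurring in $e$, and I would take $\encode{n}{e}$ to be the code obtained by evaluating it. The observation that makes this well-defined is that the evaluation is \emph{pure}: in such a term every head occurrence of $\scode$ receives at most two arguments and every head occurrence of $\kcode$ at most one, so only the purity axioms fire and never the effectful clause for $\scodeB{c_1}{c_2} \app c_3$, leaving a single code.

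The heart of the proof, and the step I expect to be the main obstacle, is then verifying the two laws by induction. This needs a bracket-abstraction lemma describing how applying an abstracted code to an argument interacts with substitution under the de Bruijn-level shift, combined with a purity invariant separating two regimes: while the substituted body is still open (more arguments are still to come) the application must return $\eta$ of the next-stage abstraction, matching the first law, whereas once the body becomes closed the final application must trigger $\nu$, matching the second. These regimes are delineated precisely by whether the three-argument clause for $\scodeB{\cdot}{\cdot}$ is reached, so the crux is to show that bracket abstraction commutes with substitution and that the purity axioms keep every intermediate application effect-free until the last argument is supplied. The equivalence with \Cref{mca} follows once the constructed operator is seen to satisfy exactly these two laws.
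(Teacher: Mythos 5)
Your proposal is correct and follows the same overall route as the paper: the forward direction is identical (the paper takes $\kcode \defeq \encode{1}{0}$ and $\scode \defeq \encode{2}{(0 \bullet 2) \bullet (1 \bullet 2)}$, leaving the intermediate codes implicit, which you usefully spell out — your $\scodeA{c_1}$, $\scodeB{c_1}{c_2}$ and $\kcodeA{c_1}$ are exactly what the de Bruijn-level substitution produces), and the converse likewise proceeds by bracket abstraction followed by the observation that evaluating a closed abstraction term is pure, so its $\nu$ is $\return{\cdot}$ of a single code, which is taken as $\encode{n}{e}$. The one genuine difference is the abstraction algorithm. You iterate the classical one-variable translation ($I \defeq \scodeB{\kcode}{\kcode}$, $\kcode \bullet (-)$ for constants and other variables, $\scode \bullet (-) \bullet (-)$ for applications) from the innermost level down to level $0$, which makes the crux the commutation of bracket abstraction with substitution across the nested stages; this lemma does hold, precisely because the $\kcode$-clause treats codes and non-abstracted variables uniformly, and your purity invariant is also sound, since in an abstraction term every head $\scode$ carries at most two arguments and every head $\kcode$ at most one, so only the partial-application axioms fire during evaluation. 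The paper instead uses a single-pass multi-variable abstraction $\absterm{n}{e}$ built from the $n$-ary combinators $K_n$ and $S_n$ (assembled via $B \defeq \scode \bullet \left(\kcode \bullet \scode\right) \bullet \kcode$, following Goldberg--Mayer), with the variable clause $\absterm{n+1}{j+1} \defeq \kcode \bullet \absterm{n}{j}$; there the recursion is indexed by the number of remaining variables, so each application of the resulting code peels exactly one variable and the two laws of \Cref{mca} line up with a single substitution step, avoiding your nested-commutation detour at the cost of the $K_n$/$S_n$ bookkeeping. In short: your version is more elementary and pushes the work into the commutation lemma and purity invariant, which you correctly identify and which go through; the paper's indexed scheme trades that for combinator machinery that makes the verification of the MCA laws more direct — and the paper is in fact equally terse at the corresponding spot, asserting only that the axioms yield $\nu\left(\absterm{n}{e}\right) = \return{c_{n,e}}$ and that checking the MCA conditions is straightforward.
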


\subsection{MCA Instances}\label{sec:examples}

\begin{figure}[t]
    \centering
%%%  É : don't know how to make it fit
\scalebox{1}{
$\begin{array}{@{}l@{\hspace{0.3cm}}l@{\hspace{0.35cm}}l@{\hspace{0.35cm}}l}
\hline
\textbf{Comb. Alg.} & \textbf{Monad} ~\monad A & \textbf{return}~\eta_A\left(x\right)  & \textbf{bind}~\mu_A\left(m\right) \\
\hline
\textit{Partial} & \{X\subseteq A \mid 
%|X|\leq 1 \}
\forall x,y\in X. x = y \}
&
\{x\}
&
\bigcup_{X\in m} X\\
% \mu_A\left(\mathbb{X}\right) = \bigcup_{X\in\mathbb{X}} X \\
\textit{Relational} & \P\left(A\right) & {\{x\}}
&
% { \mu_{A}\left(\mathbb{X}\right) = 
\bigcup_{X\in m} X \\
\textit{Stateful} & {
  \Sigma\rightarrow \P\left(\Sigma\times A\right)
}
&{
  \lambda \sigma . \{(\sigma,x)\}
}
&{
\lambda \sigma . \bigcup_{\left(\sigma',f\right)\in m\left(\sigma\right)}\!\!f\left(\sigma\right)
}\\
\textit{CPS} & {
 \left(A \rightarrow R\right) \rightarrow R
}&{
  \lambda k.k\left(x\right)
}&{
\lambda k . m\left(\lambda g.g\left(k\right)\right)
}\\
\textit{Parameterized} & {
 \param \rightarrow \{X\subseteq A \mid 
 %|X| \leq 1 \}
 \forall x,y\in X. x = y \}
 }&{
 \lambda p . \left\{ x \right\}
 }&{
 \lambda p . 
 \bigcup \left\{ g\left(p\right) \mid g \in m\left(p\right) \right\}
}
\end{array}$
}
    \caption{MCA Instances}
    \label{fig:instances}
\end{figure}

As demonstrated, PCAs are a special case of MCAs. 
However, the generalized structure also encompasses many common effectful structures proposed in the literature. This section illustrates how those can be derived from an MCA by instantiating the underlying monad.

% \subsection{Partial Combinatory Algebra}
% The notion of an MCA is a generalization of a PCA, and the laws of a PCA are precisely the MCA laws when the monad is the sub-singleton monad, i.e.  $M A$ is the set of subsets of $A$ in which all elements are equal:
% \[M A = \{X\subseteq A \mid \forall x,y\in X. x = y \}
%  \qquad
%  \eta_A:x\mapsto \{x\}
%  \qquad
%  \mu_A:\mathbb{X}\mapsto \bigcup_{X\in\mathbb{X}} X
% \]

\subsubsection{Relational Combinatory Algebra}\label{RCA}
\emph{Relational Combinatory Algebras} (RCAs) were defined in~\cite{cohen2019effects,CohMiqTat21}
to account for (demonic) non-determinism.
Concretely, they were used to show that while all realizability models stemming from PCAs model the principle of Countable Choice, realizability models based on RCAs can, in fact, model the negation of the principle.

RCAs correspond to an MCA with $\monad$ being the \emph{powerset} monad, as in~\Cref{fig:instances}. The powerset monad captures non-deterministic computations by considering the subset of possible results. 
% \threethings{\monad A = \P\left(A\right)}
%   {\eta_{A}\left(x\right) = \{x\}}
%  { \mu_{A}\left(\mathbb{X}\right) = \bigcup_{X\in\mathbb{X}} X}
  % {\mu_{A}\left(\mathbb{X}\right) = \bigcup_{X\in\mathbb{X}} }
When $\mca$ is an RCA, it may have codes such as $\code{\mathsf{flip}}$, which nondeterministically returns either $\encode{1}{0}$ or $\encode{1}{1}$ whenever it is applied.
That behavior is defined by representing the set of possible return values when describing the application of $\code{\mathsf{flip}}$
:
$ \code{\mathsf{flip}} \app c = \left\{ \encode{1}{0} , \encode{1}{1} \right\} $.

\subsubsection{Stateful Combinatory Algebra}\label{SCA}
\emph{Stateful Combinatory Algebras} (SCAs) were defined in~\cite{cohen2019effects,CohMiqTat21} as a stateful extension of RCAs.
SCAs were used to memoize non-deterministic
computations and recover a realizer of Countable Choice even in the presence of nondeterminism.

SCAs correspond to MCAs where $\monad$ is 
the \emph{powerset state} monad, as given in~\Cref{fig:instances}. 
This allows taking a code in a given state and returning a set of all possible pairs of results in new states.
% \threethings{
%  \monad A = \Sigma\rightarrow \P\left(\Sigma\times A\right)\\[.5em]
% }{
%  \eta_{A}\left(x\right) = \lambda \sigma . \{(\sigma,x)\}
% }{
%  \mu_{A}\left(u\right) = \lambda \sigma . \bigcup_{\left(\sigma',f\right)\in u\left(\sigma\right)}\!\!f\left(\sigma\right)
% }
In~\cite{cohen2019effects,CohMiqTat21}, a variant of the \emph{increasing} state monad was used, which is a submonad of the powerset state monad:
$
\monad A = \left\{ m:\Sigma\rightarrow \P\left(\Sigma\times A\right)\
 \mid\ \forall\sigma_{0}\in\Sigma.\forall\left(\sigma_{1},x\right)\in m\left(\sigma_{0}\right).\sigma_{0}\leq\sigma_{1}\right\}
$.

%\lcnote{TODO:example -- allocating a new reference cell with an initial value?--- When $\mca$ is an SCA, it may have codes such as $\code{\mathsf{inc}}$, which }

When $\mca$ is an SCA, it may have codes such as $\code{\mathsf{get}}$ and $\code{\mathsf{inc}}$, implementing a counter.
For simplicity, we take states to be natural numbers.
When $\code{\mathsf{get}}$ is applied to a code $c$, it ignores $c$ and returns the Church numeral representing the current state of the counter, leaving the state intact.
When $\code{\mathsf{inc}}$ is applied to a code $c$, it increments the counter and returns $c$.
\twothings{
    \code{\mathsf{get}} \app c = \lambda n . \left\{ \left(\overline{n} , n\right) \right\}
    }{
    \code{\mathsf{inc}} \app c = \lambda n . \left\{ \left(c , n+1\right) \right\}
    }

\subsubsection{CPS Combinatory Algebra}\label{example:CPS}
% \agnote{We probably should also mention how to deal with nontermination, by taking $R$ to be the subsingleton of some set, or requiring it to have a special value for nontermination}\emnote{I am not sure if we want to put this in the core of the paper, maybe if we really want to comment on this it could be a footnote when introducing R.}
The double-negation translation~\cite{godel1933intuitionistic},  relating classical logic with intuitionistic logic, points to a connection between classical logic and the continuation monad. % ($M A  = R^{\left(R^{A}\right)}$).
This connection has been extensively studied, particularly in the context of classical realizability~\cite{Krivine09}.
However, most work on classical realizability is based on Krivine abstract machines rather than combinatory algebras, which is an entirely different model of computation.
By utilizing MCAs with the continuation monad, we can align classical realizability with a computational model akin to intuitionistic realizability.

Here, we focus on Continuation-Passing-Style (CPS), which is a style of programming in which control is explicitly passed through continuation functions. Thus, instead of returning results directly, functions in CPS receive an extra argument: a continuation function that specifies what to do next with the result.
A \emph{CPS Combinatory Algebra} (CPSCA) is an MCA where the underlying monad is the CPS monad, as described in~\Cref{fig:instances}.\footnote{In general, classical realizability is not constructed via combinatory algebras, but there are similar structures, e.g.~\cite{FerEtAl17ordered} which uses ordered combinatory algebras, or~\cite{krivine2011realizability} which uses abstract machines.}
The CPS monad allows for composable and reusable continuations, i.e., it models computations with direct access to the call stack, enabling non-trivial control flow manipulation.
% \threethings{
%  \monad A = \left(A \rightarrow R\right) \rightarrow R
% }{
%  \eta_{A}\left(x\right) = \lambda k.k\left(x\right)
% }{
%  \mu_{A}\left(f\right) = \lambda k . f\left(\lambda g.g\left(k\right)\right)
% }
In the above definition, $R$ can be any set, representing the ultimate results of the whole computation.
%In practice, to be able to denote evaluation of terms that have no value, one should take an $R$ which itself is a set of computations $R = N\left(S\right)$, where $S$ is some set of ultimate results and $N$ is some monad that allows for non-termination, such as the sub-singleton monad.

When $\mca$ is a CPSCA, it may have codes such as $\code{\mathsf{cc}}$ and $\code{\mathsf{K}_{u}}$, which save and replace the current continuation:
\negspace
\twothings{
    \left(\code{\mathsf{cc}} \app c_{a}\right)\left(u\right) = \left(c_{a} \app \code{\mathsf{K}_{u}} \right)\left(u\right)
}{
    \left(\code{\mathsf{K}_{u}} \app c_{a}\right)\left(u'\right) = u\left(c_{a}\right)
}

% The first Classical realizability can be interpreted in a direct-style fashion through control operators in the Continuation-Passing-Style (CPS) monad~\cite{Krivine09}.

\agnote{I feel like the reader/reviewer won't understand the point of the defunctionalization paragraph, so we should probably either properly motivate it, or remove it}\emnote{I'll try to see if it can be improved, but I think having a ref to a nice paper giving a methodology to go from one to the other is rather a good point for us (and it also justifies just giving the machine and claiming it comes from the evaluation as functions without having to give more details).}
Using CPSCAs, the definitions of evaluation and application can be seen as a CPS form of an evaluator for PCAs.
By defunctionalization, this leads directly to an \emph{eval/apply} abstract stack machine, providing operational semantics for PCAs~\cite{danvy2004evaluation}.
The machine has a stack, which can hold either codes $c$ tagged with $\boldsymbol{v}\left(c\right)$, or terms $e$ tagged with $\boldsymbol{t}\left(e\right)$.
An empty stack is marked as $\emptyset$, while a nonempty stack is marked with $x : \pi$, where $x$ is the top of the stack and $\pi$ is the rest.
The machine has three states:
\begin{enumerate}[leftmargin=0.5cm]
    \item \emph{Eval} state, marked as $e \evalstate \pi$, in which the machine takes a closed term $e$, and a stack $\pi$.
    \item \emph{Apply} state, marked as $c \applystate \pi$, in which the machine takes a value $c$, and a stack $\pi$.
    \item \emph{Final} state, marked as simply a code $c$, of the final value.
\end{enumerate}
The semantics of the machine is defined via a one-step transition relation $\transition$ between states: 
% \begin{align*}
%     \begin{array}{rcl@{\hspace{0.02cm}}c@{\hspace{0.15cm}}rcl}
%         e_{f}\app e_{a} & \evalstate & \pi & \quad\transition\quad & e_{f} & \evalstate & \boldsymbol{t}\left(e_{a}\right):\pi\\
%         c & \evalstate & \pi & \quad\transition\quad & c & \applystate & \pi\\
%         c_{f} & \applystate & \boldsymbol{t}\left(e_{a}\right):\pi & \quad\transition\quad & e_{a} & \evalstate & \boldsymbol{v}\left(c_{f}\right):\pi\\
%         c_{a} & \applystate & \boldsymbol{v}\left(\left\langle \lambda^{0}.e\right\rangle \right):\pi & \quad\transition\quad & e\sub{c_{a}} & \evalstate & \pi\\
%         c_{a} & \applystate & \boldsymbol{v}\left(\left\langle \lambda^{n+1}.e\right\rangle \right):\pi & \quad\transition\quad & \left\langle \lambda^{n}.e\sub{c_{a}}\right\rangle  & \applystate & \pi\\
%         c & \applystate & \emptyset & \quad\transition\quad & c\\
%     \end{array}
% \end{align*}
\[
    \begin{array}{r@{~}c@{~}l@{\hspace{0.02cm}}c@{\hspace{0.15cm}}r@{~}c@{~}l}
        e_{f}\app e_{a} & \evalstate & \pi & \quad\transition\quad & e_{f} & \evalstate & \boldsymbol{t}\left(e_{a}\right):\pi\\
        c & \evalstate & \pi & \quad\transition\quad & c & \applystate & \pi\\
        c & \applystate & \emptyset & \quad\transition\quad & c\\    
    \end{array}~~\vrule~~
       \begin{array}{r@{~}c@{~}l@{\hspace{0.02cm}}c@{\hspace{0.15cm}}r@{~}c@{~}l}
       c_{f} & \applystate & \boldsymbol{t}\left(e_{a}\right):\pi & \quad\transition\quad & e_{a} & \evalstate & \boldsymbol{v}\left(c_{f}\right):\pi\\
        c_{a} & \applystate & \boldsymbol{v}\left(\left\langle \lambda^{0}.e\right\rangle \right):\pi & \quad\transition\quad & e\sub{c_{a}} & \evalstate & \pi\\
        c_{a} & \applystate & \boldsymbol{v}\left(\left\langle \lambda^{n+1}.e\right\rangle \right):\pi & \quad\transition\quad & \left\langle \lambda^{n}.e\sub{c_{a}}\right\rangle  & \applystate & \pi\\
    \end{array}
\]
\subsubsection{Parameterized Combinatory Algebra}\label{example:bauer}
The notion of \emph{Parameterized Combinatory Algebras} (ParCAs), introduced by Bauer and Hanson  in~\cite{bauer2024countablereals}, 
is based on a notion of computations that has access to external oracles.
Using parameterized combinatory algebras the authors constructed a model in which the set of Dedekind real numbers is countable.

A ParCA is precisely an MCA based on the \emph{subsingleton reader} monad. That is, for a set $\param$ of parameters, the parameterized monad is defined in~\Cref{fig:instances}.
% \threethings{
%  M A = \param \rightarrow \{X\subseteq A \mid \forall x,y\in X. x = y \}\\[.5em]
%  }{
%  \eta_{A}(x):= \lambda p . \left\{ x \right\}
%  }{
%  \mu_{A}\left(f\right) := \lambda p . 
%  \bigcup \left\{ g\left(p\right) \mid g \in f\left(p\right) \right\}
% }
A function into the subsingleton reader monad represents a partial computation, which, in addition to its input, has access to some external parameters in a set $\param$.
The uniformity of the MCA framework here is highlighted by the fact that the comprehensive algebraic axiomatization of ParCA in~\cite{bauer2024countablereals} is naturally derivable from the MCA representation.

Computations with external oracles can be modelled by simply allowing the MCA to have access to external parameters.
When $\mca$ is a ParCA, where the parameter is taken from predicates over $\mca$, that is $\param = \mca \rightarrow \left\{ 0 , 1 \right\}$, it may have a code such as $\code{\mathsf{search}}$.
When $\code{\mathsf{search}}$ is applied to another code $c$, it takes a predicate $p$ as parameter, and returns either $\encode{1}{0}$ or $\encode{1}{1}$ according to whether $c$ satisfies $p$:
\[ \left(\code{\mathsf{search}} \app c\right) = \lambda p . \texttt{if}~p\left(c\right) = 0~\texttt{then}~\left\{ \encode{1}{0} \right\}~\texttt{else}~\left\{ \encode{1}{1} \right\}
% \begin{cases}
%     \left\{ \encode{1}{0} \right\} & p\left(c\right) = 0\\
%     \left\{ \encode{1}{1} \right\} & p\left(c\right) = 1
% \end{cases}
\]

\subsection{Categorical Characterization of MCAs}\label{sec:turing}
PCAs have been given a categorical representation in \cite{COCKETT2008}, by establishing their connection to
Turing categories~\cite{LONGO1990193,COCKETT2008}. 
Concretely, it was shown that PCAs are PCA-objects in the category of sets, which are essentially the categorical counterpart of the notion of combinatory completeness for PCAs. 
To obtain a similar categorical characterization for the generalized notion of MCAs which is based on arbitrary monads, we here work 
% , however, we cannot straightforwardly extend the construction in~\cite{COCKETT2008} since it heavily relies on the notion of restriction, which is specifically tailored to capture partiality whereas MCAs are based on an arbitrary underlying monad.
% %
% To circumvent this construction, % which seems to be overfitted to the notion of partiality, 
% we instead 
% operate
within Freyd categories~\cite{Power1997Enviroments,LEVY2003Freyd} which are an extension of the categorical framework designed specifically to model computational effects. 
Importantly, they abstract the structure of the Kleisli category of a monad by employing one category for values and another one for computations. 
Due to space limitations, we here provide the key results, leaving full details to the appendix.

% Our definition below of a combinatory object relies on the notion of $\mca$-monomials, generalizing a similar notion defined in \cite{LONGO1990193}.
% An $\mca$-monomial is a morphism in $\hom{\K}{\mca^{n}}{\mca}$, for some $n \in \mathbb{N}$, defined using only projections, morphisms in $\hom{\C}{\termobj}{\mca}$, and application $\aparr$, used in an applicative order.
% We say that such a monomial $f \in \hom{\K}{\mca^{n}}{\mca}$ is $\mca$-computable whenever there is a code $c \in \hom{\C}{\termobj}{\mca}$ which emulates $f$ when applied with $\aparr$. (full details can be found in the appendix.)

\begin{definition}[Applicative Object]
    For a Freyd category $\langle \C, \K, \purefun \rangle$, an \emph{applicative object} is an object $\mca \in \obj{\C}$ equipped with an application morphism $ \aparr  \in \hom{\K}{\mca \times \mca}{\mca}$.
\end{definition}

\begin{definition}
[Computable Morphism]
\label{def:computable}
Given an applicative object $\mca$ in a Freyd category $\langle \C, \K, \purefun \rangle$ and an $0<n \in \mathbb{N}$, we say that a $\K$ morphism $f \in \hom{\K}{\mca^{n}}{\mca}$ is $\mca$-computable when, for all $k \in \left\{ 0 , \ldots , n-1\right\}$, and all $c_{1} , \ldots , c_{k} \in \hom{\C}{\termobj}{\mca}$, there is a code $\code{f\left(c_{1} , \ldots , c_{k}\right)} \in \hom{\C}{\termobj}{\mca}$ such that the following diagrams commute in $\K$ (where $\code{f}$ is $\code{f\left(c_{1} , \ldots , c_{k}\right)}$ for $k=0$).
\[\begin{tikzcd}
{\termobj \times \mca^{n-k}} & & {\mca \times \mca^{n-k}} & {\termobj} && {\mca \times \mca^{k}}\\
{\mca^{k} \times \mca^{n-k}} & {\mca^{n}} & {\mca} &&& {\mca}
\arrow["{\purefun\pairing{c_{1} , \ldots , c_{k}}_{k} \ltimes \mca^{n-k}}", from=1-1, to=2-1, swap]
\arrow["{\alpha^{k}}", from=2-1, to=2-2, swap]
\arrow["{\purefun \code{f\left(c_{1} , \ldots , c_{k}\right)} \ltimes \mca^{n-k}}"', from=1-1, to=1-3, swap]
\arrow["f", from=2-2, to=2-3, swap]
\arrow["{\aparr^{n-k}}"', from=1-3, to=2-3, swap]
\arrow["{ \purefun \pairing{\code{f} , c_{1} , \ldots , c_{k}} }_{k+1}"', from=1-4, to=1-6, swap]
\arrow["\purefun \code{f\left(c_{1} , \ldots , c_{k}\right)}", from=1-4, to=2-6, swap]
\arrow["{\aparr^{k}}"', from=1-6, to=2-6, swap]
\end{tikzcd}\]

% \[\begin{tikzcd}
% 	{\termobj \times \mca^{n-k}} &&{\mca^{k} \times \mca^{n-k}}   &&   {\termobj} &
%     \\
%     && {\mca^{n}}   &&{\mca \times \mca^{k}} & {\mca}
% 	\\
% 	{\mca \times \mca^{n-k}} && {\mca}   &&&	
%     \arrow["{\purefun\left(\pairing{c_{1} , \ldots , c_{k}}_{k} \times \id\right)}", from=1-1, to=1-3]
%     \arrow["{\alpha^{k}}", from=1-3, to=2-3]
% 	\arrow["{\purefun\left( \code{f\left(c_{1} , \ldots , c_{k}\right)} \times \id\right)}"', from=1-1, to=3-1]
% 	\arrow["f", from=2-3, to=3-3]
% 	\arrow["{\aparr^{n-k}}"', from=3-1, to=3-3]
%     \arrow["{ \purefun \pairing{\code{f} , c_{1} , \ldots , c_{k}} }_{k+1}"', from=1-5, to=2-5]
% 	\arrow["\purefun \code{f\left(c_{1} , \ldots , c_{k}\right)}", from=1-5, to=2-6]
% 	\arrow["{\aparr^{k}}"', from=2-5, to=2-6]
% \end{tikzcd}\]

% \[\begin{tikzcd}
% 	{\termobj \times \mca^{n-k}} && {\mca^{k} \times \mca^{n-k}} & {\mca^{n}} \\
% 	\\
% 	{\mca \times \mca^{n-k}} &&& \mca
% 	\arrow["{\purefun\left(\pairing{c_{1} , \ldots , c_{k}} \times \id\right)}", from=1-1, to=1-3]
%     \arrow["{\alpha^{k}}", from=1-3, to=1-4]
% 	\arrow["{\purefun\left( \code{f\left(c_{1} , \ldots , c_{k}\right)} \times \id\right)}"', from=1-1, to=3-1]
% 	\arrow["f", from=1-4, to=3-4]
% 	\arrow["{\aparr^{n-k}}"', from=3-1, to=3-4]
% \end{tikzcd}\]
% \[\begin{tikzcd}
% 	{\termobj} && \\
% 	\\
% 	{\mca \times \mca^{k}} && \mca
% 	\arrow["{ \purefun \pairing{\code{f} , c_{1} , \ldots , c_{k}} }_{k+1}"', from=1-1, to=3-1]
% 	\arrow["\purefun \code{f\left(c_{1} , \ldots , c_{k}\right)}", from=1-1, to=3-3]
% 	\arrow["{\aparr^{k}}"', from=3-1, to=3-3]
% \end{tikzcd}\]
\end{definition}

% Given a applicative object $\mca$, expressions are similar to the ones in \Cref{subsection:PCA}:
% $$
% \expr {}::={}  i \in \mathbb{N} \mid c \in \hom{\C}{\termobj}{\mca} \mid \expr \bullet \expr \qquad\qquad
% E_{n}\left(\mca\right)  {}::={}  \{ \expr \mid \text{all $i$s in $\expr$ are $< n$}\} 
% $$
% The evaluation function $\interp{-}_{n}  : E_{n}\left(\mca\right) \rightarrow \hom{\K}{\mca^{n}}{\mca}$ relates expressions to morphisms in $\K$:
% $$
%     \interp{i}_{n}  \defeq \purefun \pi_{i+1}^{n}\quad
%     \interp{c}_{n}  \defeq \purefun \left(c \; \circ \; !\right)\quad
%     \interp{\expr_{f} \bullet \expr_{a}}_{n} \defeq \aparr \circ \left(\mca \rtimes \interp{\expr_{a}}_{n} \right) \circ \left(\interp{\expr_{f}}_{n} \ltimes \mca^{n} \right) \circ \purefun \Delta
% $$
% \begin{definition}[$\mca$-monomials]
% An $\mca$-monomial is a morphism $f$ in $\K$ s.t there exists $n \in \mathbb{N}$ and a term $\expr_{f} \in E_{n}\left(\mca\right)$ for which $\interp{\expr_{f}}_{n} = f$.
% \end{definition}

Our definition of a combinatory object relies on the notion of $\mca$-monomials, generalizing~\cite{LONGO1990193}.
An $\mca$-monomial is a morphism in $\hom{\K}{\mca^{n}}{\mca}$, for some $n \in \mathbb{N}$, defined using only projections, morphisms in $\hom{\C}{\termobj}{\mca}$, and application $\aparr$, used in an applicative order. For $n>0$ the $\mca$-monomial is called positive.

\begin{definition}[Combinatory object]\label{def:mca-obj}
An applicative object 
$\mca$ is called a \emph{combinatory object} when all positive $\mca$-monomials are $\mca$-computable.
\end{definition}

\begin{restatable}{theorem}{combObj}
Let $\monad$ be a $\setcat$ monad.
$\mca$ is an MCA over $\monad$ if and only if it is a combinatory object in $\setcat_{\monad}$.
\end{restatable}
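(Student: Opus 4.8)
The plan is to prove both directions by translating between the algebraic MCA laws and the commuting diagrams defining a combinatory object, crucially exploiting that in $\setcat_\monad$ the Kleisli morphisms $\hom{\setcat_\monad}{\mca^n}{\mca} = \hom{\setcat}{\mca^n}{\monad\mca}$ are exactly the $n$-ary effectful operations, and that global points $\hom{\setcat}{\termobj}{\mca}$ correspond to elements $c \in \mca$. Under this dictionary, the application morphism $\aparr \in \hom{\K}{\mca\times\mca}{\mca}$ is precisely the MAS application $(-)\app(-) : \mca\times\mca \to \monad\mca$, so a combinatory object in $\setcat_\monad$ is the same data as a MAS over $\monad$. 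First I would establish this dictionary explicitly: positive $\mca$-monomials (built from projections, global points, and $\aparr$ in applicative order) correspond exactly to the evaluation $\nu$ of open expressions $e \in \Expr{n}(\mca)$, since $\nu$ is defined by the same CbV bind structure that Kleisli composition with $\aparr^{\bullet}$ encodes. The notion of $\mca$-computability of such a monomial $f$, unfolded via the two commuting diagrams, then says exactly that there are codes $\code{f(c_1,\dots,c_k)}$ witnessing the partial-application behavior — which is the content of the abstraction operator $\encode{n}{e}$ and its defining laws.

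For the forward direction (MCA $\Rightarrow$ combinatory object), I would take an arbitrary positive $\mca$-monomial $f \in \hom{\K}{\mca^n}{\mca}$, read it off as an open expression $e_f \in \Expr{n}(\mca)$ via the dictionary, and define the required code as $\code{f} := \encode{n-1}{e_f}$, with the curried intermediate codes $\code{f(c_1,\dots,c_k)} := \encode{n-k-1}{e_f\sub{c_1}\cdots\sub{c_k}}$. Verifying that the two diagrams in \Cref{def:computable} commute then reduces to the two MCA laws: the first law $\encode{n+1}{e}\app c = \eta(\encode{n}{e\sub{c}})$ handles the case $n-k > 1$ (partial application just substitutes the outermost variable), and the second law $\encode{0}{e}\app c = \nu(e\sub{c})$ handles the fully-applied case $k = n-1$, where evaluation kicks in. Here I would lean on \Cref{prop:SK}: it may be cleaner to first note that computability of all positive monomials follows from the presence of the $\scode,\kcode$ codes with their purity axioms, and then induct on monomial structure.

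For the converse (combinatory object $\Rightarrow$ MCA), I would go through \Cref{prop:SK} and exhibit the combinators: the monomials $\lambda^{\,}\!$-corresponding to $\kcode$ and $\scode$ are positive $\mca$-monomials, so by the combinatory-object hypothesis each is $\mca$-computable, yielding codes $\kcode,\scode$ together with their partial-application codes $\kcodeA{c_1},\scodeA{c_1},\scodeB{c_1}{c_2}$; the commuting diagrams instantiated at these monomials give precisely the purity axioms $\kcode\app c_1 = \eta(\kcodeA{c_1})$, etc., and the final behavior $\scodeB{c_1}{c_2}\app c_3 = \nu((c_1\bullet c_3)\bullet(c_2\bullet c_3))$. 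Since \Cref{prop:SK} already establishes that such a MAS is an MCA, this closes the loop, and I would finally recover the general abstraction operator by the standard bracket-abstraction translation of expressions into $\scomb,\kcomb$-terms, checking it respects the two MCA laws.

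The main obstacle I anticipate is managing the bookkeeping between the de Bruijn-level open expressions $\Expr{n}(\mca)$ and the tupled/curried diagrammatic presentation in \Cref{def:computable} — in particular making the $\ltimes$-strength and the $\alpha^k$ tupling isomorphisms match the substitution $e\sub{c}$ cleanly, and ensuring the effectful behavior is genuinely confined to $\aparr$ (so that the two intermediate diagrams, one about partial application returning a pure code via $\eta$ and one about full application via $\nu$, correctly separate the $k<n-1$ and $k=n-1$ regimes). Getting the CbV evaluation order in $\nu$ to coincide with Kleisli composition order — rather than its opposite — is the subtle point where a direction error would silently break the correspondence, so I would verify that coherence first on the binary application morphism before handling general monomials.
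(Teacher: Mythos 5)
Your proposal is correct and takes essentially the same route as the paper: the paper likewise factors the equivalence through \Cref{prop:SK} on the algebraic side and an intermediate proposition showing an applicative object is a combinatory object iff the $\kcomb$ and $\scomb$ morphisms are $\mca$-computable (proved by the same bracket abstraction with $n$-ary combinators), then matches the purity equations of \Cref{prop:SK} against the diagrams of \Cref{def:computable}, using an evaluation--substitution lemma playing exactly the role of your ``dictionary'' (including your CbV-order concern, which the paper resolves in the definition of $\interp{\expr_{f} \bullet \expr_{a}}$ by sequencing the function component via $\ltimes$ before the argument via $\rtimes$). Your direct forward construction $\code{f} \defeq \encode{n-1}{e_f}$ is a minor variant subsumed by that intermediate proposition.
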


Furthermore, the definition of a PCA-object in a cartesian restriction category~\cite{COCKETT2008}, is the exact counterpart of a combinatory object in a Freyd category, obtained by replacing morphisms of a Freyd category with their counterparts in a cartesian restriction category.

\section{MCA-induced Realizability Models}
\label{sec:realizability}

Since PCAs underpin traditional realizability models, this section explores the application of MCAs in the broader context of realizability theory. 
%In particular, we are working on a construction of effectful realizability topoi based on MCAs, relating the effectful behavior with the logic using monad modalities and algebras, following the work of \cite{pitts1991evaluation}.
%
%Investigating the effects the underlying monad has on the resulting theory is a particularly interesting task, which can lead to new insights into constructive models. 
%
By incorporating computational effects, MCAs broaden realizability models, enabling a more extensive semantic framework for diverse logic systems based on computation.

\emph{Evidenced frames} propose a unifying approach to effectful realizability~\cite{CohMiqTat21}. An evidenced frame is a structure that abstracts the core components of realizability models by focusing solely on the relationship between propositions and their evidence, while omitting the computational specifics of individual models. 
The evidenced frame abstraction is complete in that any realizability tripos  (i.e. a model of higher-order dependent predicate logic) can be viewed as an evidenced frame, and there is a uniform construction generating a corresponding realizability tripos from an evidenced frame. Besides, the usual construction of a tripos from a PCA smoothly factorizes through the definition of an evidenced frame.
%\lcnote{maybe put a background on Heyting pa here?}

To provide semantic realizability models from an MCA, we build on this construction and demonstrate how an evidenced frame can be derived from it. 
This establishes a clear pathway from MCAs to realizability triposes, and in turn, through the tripos-to-topos construction~\cite{pitts2002tripos}, to a realizability topos, which is a model of (extensional, impredicative) dependent type theory (and set theory).
It further factors an alternative construction of assemblies which, in the case of PCAs, have been broadly studied in relation to the realizability topos.

%\paragraph*{Heyting prealgebras.}
The following development uses preordered sets, and in particular, complete Heyting prealgebras, for the interpretation of logic.
A preordered set $\left(\Omega,\leq\right)$ is a set $\Omega$ equipped with a reflexive and transitive ``inequality'' relation $\leq$ over $\Omega$.
%Given two preordered sets $\left(\Omega_{1},\leq_{1}\right)$ and $\left(\Omega_{2},\leq_{2}\right)$, a function $f : \Omega_{1} \rightarrow \Omega_{2}$ is monotonic when it preserves inequalities, i.e., for any $\truthval , \truthval' \in \Omega_{1}$, if $\truthval \leq_{1} \truthval'$ then $f\left(\truthval\right) \leq_{2} f\left(\truthval'\right)$.
A Heyting prealgebra extends a preordered set by additional algebraic operations, namely, meet, join and implication.
In the context of semantics of logic, a preordered set $\Omega$ is used as a set of ``truth values'', and formulas in the logic are interpreted as elements of $\Omega$.
That is, for $\pred$ a formula, its interpretation  $\interp{\pred}$ is an element of $\Omega$. 
The preorder relation of a Heyting prealgebra corresponds to the logical entailment $\vdash$ between formulas, while the algebraic counterpart of universal (resp. existential) quantifications is provided by meets $\meet$ (resp. joins $\join$). The bottom and top elements of $\Omega$ are denoted $\boldsymbol{0}$ and $\boldsymbol{1}$ (resp.). This will come in handy in order to give an algebraic structure to the predicates of our realizability models.

\subsection{Realizability Triposes via Evidenced Frames}
\label{sec:EF}

\begin{figure}[t]
\centering
\begin{small}
\resizebox{\textwidth}{!}{ 
\begin{tabular}{|l|l|l|l|}
\hline
\textbf{} & \textbf{Logical Const.} & \textbf{Program Const.} & \textbf{Evidence Relation} \\
\hline
\emph{Reflexivity} &  & $\eid \in E$ & $\phi \xle{\eid} \phi$ \\
\hline
\textit{Transitivity} &  & $\ecomp{}{} \in E \times E \to E$ & $\phi_1 \xle{e_1} \phi_2 \land \phi_2 \xle{e_2} \phi_3 \Rightarrow \phi_1 \xle{\ecomp{e_1}{e_2}} \phi_3$ \\
\hline
\textit{Top} & $\!\top \!\in\! \Phi$ & $\etrue \in E$ & $ \phi \xle{\etrue} \top$ \\
\hline
\textit{Conjunction} & $\emeet \!\in\! \Phi \!\times\! \Phi \!\to\! \Phi$ & 
\makecell[l]{$\epair{\cdot}{\!\cdot} \!\in\! E \!\times\! E \!\to\! E$ \\ $\efst, \esnd \in E$} &
\makecell[l]{
$\phi \xle{e_1} \phi_1 \land \phi \xle{e_2} \phi_2 \Rightarrow \phi \xle{\epair{e_1}{e_2}} \phi_1 \emeet \phi_2$
\\ $\phi_1 \emeet \phi_2 \xle{\efst} \phi_1 \quad , \quad \phi_1 \emeet \phi_2 \xle{\esnd} \phi_2$}
\\
\hline
\makecell[l]{\textit{Universal}\\\textit{Implication}} & $\imp \in \Phi \times \power(\Phi) \to \Phi$ & 
\makecell[l]{\hbox{$\elambda{} \!\in\! E \!\to\! E$} \\ $\eeval \in E$} &
\makecell[l]{
 $ (\forall \phi \in \vec{\phi}.\; \phi_1 \emeet \phi_2 \xle{e} \phi) \Rightarrow \phi_1 \xle{\elambda{e}} \phi_2 \imp \vec{\phi}$
\\ $\forall \phi \in \vec{\phi}.\; (\phi_1 \imp \vec{\phi}) \emeet \phi_1 \xle{\eeval} \phi$}
\\
\hline
\end{tabular}
}
\caption{Evidenced Frame constructs, where $\vec{\phi}\in\power(\Phi)$ and the evidence relations are universally quantified.}
\label{tab:EF}
\end{small}
\end{figure}
%This section provides a uniform method that takes any instance of MCA for a given monad and constructs the corresponding evidenced frame.
%
First, we recall the definition of an evidenced frame.
\begin{definition}[Evidenced Frame~\cite{CohMiqTat21} \coqdoc{EF}]\label{evidenced-frame}
An \emph{evidenced frame} is a triple $\ef = \left( \Phi, E, \mbox{$\cdot \xle{\cdot} \cdot$} \right)$, where $\Phi$ is a set of propositions,~$E$ is a set of evidence, and~\mbox{$\phi_1 \xle{e} \phi_2$} is a ternary evidence relation on $\Phi \times E \times \Phi$, 
along with the structure captured in~\Cref{tab:EF}. 
%We write $\vec{\phi}$ for an element of $\power(\Phi)$, i.e., a subset of $\Phi$.
\end{definition}
 The evidenced frame setup mirrors computational processes, where evidence can be thought of as programs or computational artifacts that demonstrate logical relationships. 
In fact, evidenced frames supplement complete Heyting prealgebras, which are standard models of intuitionistic logic, by adding computational evidence for the validity of the preorder relation. 
%
%Thus, while a complete Heyting prealgebra holds the logical component in that the preorder corresponds to logical entailment, 
Hence, while in a prealgebra,  $\predA \leq \predB$ denotes that the pair $\left(\predA,\predB\right)$ satisfies the preorder relation, in an evidenced frame the corresponding notion is $\evrel{\predA}{e}{\predB}$, meaning the triple $\left(\predA,e,\predB\right)$ satisfies the evidenced relation.
All the axioms of a complete Heyting prealgebra, described in terms of the preorder relation $\leq$, appear in an evidenced frame in an enhanced form, where $\leq$ is replaced with the evidence relation $\evrel{\cdot}{\cdot}{\cdot}$, and each axiom requires the existence of some ``evidence'' (appearing as $e$ in $\evrel{\predA}{e}{\predB}$) which uniformly witnesses the validity of the axiom.
 Evidenced frames are flexible in that they do not assume a specific equational theory which allows them to capture a broader spectrum of computational effects and unify them by focusing on uniform evidence.
    
\subsubsection{$M$-Modalities}

As shown in~\cite{CohMiqTat21}, in the standard case of a PCA, $\mca$, the corresponding evidenced frame is that in which $E$ is the set of codes $\mca$, $\Phi$ is its set of subsets $\P\left(\mca\right)$ and for every $\predA , \predB \in \P\left(\mca\right)$ and $c_{f} \in \mca$, $\evrel{\predA}{c_{f}}{\predB}$ stands for:
$ \forall c_{a} \in \mca . c_{a} \in \predA \Rightarrow \exists c_{r} .\, c_{f} \app c_{a} \downarrow c_{r}\wedge c_{r} \in \predB$. 
Our goal here is to generalize the embedding of combinatory algebras into evidenced frames and provide a uniform construction of an evidenced frame from any MCA.
However, the MCA abstraction poses a challenge. 
For PCAs, the result of the computation $c_{f} \app c_{a}$ is given by the reduction 
% \lcnote{you mean reduction?}\agnote{the $\downarrow$ notation is used more generally in the context of logic of partial terms, where it's called ``definedness'': \url{https://math.stanford.edu/~feferman/papers/definedness.pdf}}\lcnote{but we are in the context of computation. In fact I see we here even called it evaluation, but in the EF paper it was reduction}\agnote{fair enough} 
predicate $c_{f} \app c_{a} \downarrow c_{r}$, and then related to a subset $\predB$ through the membership predicate $c_{r} \in \predB$.
However, in an MCA, the result $c_{f} \app c_{a}$ is in $M\left(\mca\right)$, so it appears within the context of a more abstract notion of computation, and some device is needed to be able to pick the result from the computational context and relate it to a subset, or more generally, to a predicate. 
In fact, traditional realizability manipulates codes, not computations, and so our device needs to be able to extend predicates defined on values to predicates on computations, which then can be used again in the realizability setting. 
%\emnote{I rephrased a bit, because I think it is important *not* to carry the intuition that we can always get a value out of computations (for instance, we don't do that for continuations), but rather that we extend the domain on which predicates range.}
As we discuss below, we will consider predicates on a set $X$ as defined by functions from $X$ to some Heyting prealgebras $\Omega$, hence our device will essentially extend such a function to a function in $M(X) \to \Omega$.
% \lcnote{I would add here the intuition for the modality. that is smt like " intuitively, we would like that the resulting code after the (possibly effectful) run of the program will be a member of..."}
% \lcnote{we should say here that this is inherent to realizability that is defined via codes, not computations.}

The device appropriate for this task is called an $M$-modality, which intuitively describes a post-condition over the result of a (possibly effectful) computation.
A variant of the notion of an $M$-modality first appeared in~\cite{moggi1991notions} and then further elaborated in~\cite{pitts1991evaluation}.\footnote{\Cref{def:modality}  can equivalently be formulated as an oplax algebra $\omega : \monad \Omega \rightarrow \Omega$, similar to the T-modal operator in \cite{moggi1991notions}.
However, this will require the definition of each particular modality to operate on truth values in $\Omega$, rather than to relate value predicates with computations, which seems easier in practice. }

%\lcnote{you mentioned Pitts uses a different name, say which}
% Constructing an evidenced frame from an MCA requires some additional data. This is because ... \lcnote{TODO: explain the lack of semantics and the need for the modality}
%\lcnote{TBD: explain why for this we need HpA and why we take predicates as functions. Here we use omega for a different purpose: giving structure for a set of propositions}\emnote{I hope the sentence above and paragraph after adress this} 

\begin{restatable}[$M$-modality \coqdoc{MMod}]{definition}{modality}\label{def:modality}
    Let  $M$ be a $\mathbf{Set}$ monad, $\mca$ an MCA  over $M$, and $\left(\Omega, \leq\right)$ a complete Heyting prealgebra.
    An $M$-modality over $\Omega$ is a natural transformation:
    \[\postmod_{X} : M\left(X\right) \rightarrow \left(X \rightarrow \Omega\right) \rightarrow \Omega 
    \qquad\text{and we note }\after{x}{m}{\pred\left(x\right)\defeq \postmod\left(m\right)\left(\pred\right)}
    \]
satisfying for all $A,B$, $\pred_i : A \rightarrow \Omega$, $f : A \rightarrow M\left(B\right)$, $a\in A$ and $m \in M\left(A\right)$ the following: 
% We abbreviate   $\postmod\left(m\right)\left(\pred\right)$ by $\after{x}{m}{\pred\left(x\right)}$, where $x$ is bound 
\begin{description}[leftmargin=*]\setlength\itemsep{0.5em}
  \item[After-Return.] %\label{eq:after-ret}\tag{A-return}
    $\qquad\qquad~~\pred\left(a\right) \leq \after{x}{\return{a}}{\pred\left(x\right)}  $ 
% %    \postmod \return{a}\left(\pred\right)
%     \end{equation}

 \item[After-Bind] 
% \label{eq:after-bind}\tag{A-bind}
   %\postmod\left(m\right)\left(\lambda x . \postmod\left(f \left(x\right)\right)\right)\left(\pred\right) \leq \postmod\left( \letin{x}{m}{f\left(x\right)} \right)\left(\pred\right)
     $\qquad\qquad~~~~~~\after{x}{m}{\after{y}{f \left(x\right)}{\pred\left(y\right)}} \leq \after{y}{\letin{x}{m}{f\left(x\right)}}{\pred\left(y\right)} $
%     \end{equation}

\item[Internal Monotonicity.] 
%      \begin{equation}
\label{eq:after-inter-mono} %\tag{IntMon}
%      %\infimum_{c}\left(\predA\left(c\right) \haimp \predB\left(c\right)\right) \leq \postmod\left(m\right)\left(\predA\right) \haimp \postmod\left(m\right)\left(\predB\right)
     $~~~\infimum_{c}\left(\predA\left(c\right) \haimp \predB\left(c\right)\right) \leq \after{x}{m}{\predA\left(x\right)} \haimp \after{x}{m}{\predB\left(x\right)}$ 
%     \end{equation}
\end{description}
where we apply the standard precedence of quantifiers, for example $\after{x}{m}{\predA\left(x\right)} \haimp \predB\left(x\right) $ is to be read as $\left(\after{x}{m}{\predA\left(x\right)}\right ) \haimp \predB\left(x\right) $.

%
%
% \begin{description}[leftmargin=*]
%     \item[After-Return] For any set $A$, $\pred : A \rightarrow \Omega$, and $a \in A$:
%     \begin{equation}\label{eq:after-ret}\tag{A-return}
%     \pred\left(a\right) \leq 
%     \after{x}{\return{a}}{\pred\left(x\right)}   
% %    \postmod \return{a}\left(\pred\right)
%     \end{equation}

%     \item[After-Bind] For any sets $A,B$, $\pred : A \rightarrow \Omega$, $f : A \rightarrow M\left(B\right)$, and $m \in M\left(A\right)$:
%     \begin{equation}\label{eq:after-bind}\tag{A-bind}
%     %\postmod\left(m\right)\left(\lambda x . \postmod\left(f \left(x\right)\right)\right)\left(\pred\right) \leq \postmod\left( \letin{x}{m}{f\left(x\right)} \right)\left(\pred\right)
%      \after{x}{m}{\after{y}{f \left(x\right)}{\pred\left(y\right)}} \leq \after{y}{\letin{x}{m}{f\left(x\right)}}{\pred\left(y\right)} 
%     \end{equation}

%     \item[Internal Monotonicity]For any set $A$, $m \in M\left(A\right)$, $\predA , \predB \!:\! A \rightarrow\!\Omega$:
%      \begin{equation}\label{eq:after-inter-mono} \tag{IntMon}
%      %\infimum_{c}\left(\predA\left(c\right) \haimp \predB\left(c\right)\right) \leq \postmod\left(m\right)\left(\predA\right) \haimp \postmod\left(m\right)\left(\predB\right)
%     \infimum_{c}\left(\predA\left(c\right) \haimp \predB\left(c\right)\right) \leq \after{x}{m}{\predA\left(x\right)} \haimp \after{x}{m}{\predB\left(x\right)} 
%     \end{equation}
% \end{description}
\end{restatable}

Intuitively, $\Omega$ is a set of truth values, so a predicate $\pred$ over a set $X$ is a function $\pred : X \rightarrow \Omega$, and to denote that $\pred$ applies to some $x \in X$, we write $\pred\left(x\right)$ as is standard.
As for the modality, we read $\after{x}{m}{\pred\left(x\right)}$ as saying that after the computation $m$ yields a value $x$ (in case it does), then $\pred\left(x\right)$ holds.
To obtain a sound logical framework, the properties of an $M$-modality ensure it is well-behaved with respect to the computational operators of the monad and the logical operators of the complete Heyting prealgebra. 

The use of Heyting prealgebras, rather than subsets, allows us to generalize the standard notion of a subset of codes to more complex subset-like structures, in particular, ones that account for the computational effects and enforce invariants over the computational behavior.
For example, in the case of stateful nondeterministic computation, as in~\Cref{SCA}, it is useful to consider subsets of pairs of codes and states, $\P\left(\Sigma \times \mca\right)$,  
and to obtain a well-behaved logic we must restrict attention to ``future-stable'' predicates, i.e.~predicates which, for every nondeterministic stateful computation, if they hold before the change of state, they keep holding for every possible mutation of the state as well.
In terms of subsets, it means the set of states has to be preordered, and instead of taking $\P\left(\Sigma \times \mca\right)$, we take $\mathcal{U}\left(\Sigma\right)^{\mca}$, where $\mathcal{U}\left(\Sigma\right)$ is the set of all upper subsets of $\Sigma$. This set has the structure of a complete Heyting algebra (and thus, a complete Heyting prealgebra) as a topological space, given by the Alexandrov topology.
However, the Heyting prealgebra structure of $\P\left(\Sigma \times \mca\right)$ alone is not enough for constructing an evidenced frame over SCAs.
For this, we need to additionally require the modality preserves implication, which is not the case in the standard angelic and demonic interpretations of nondeterminism, but is guaranteed by internal monotonicity.
%because to realize the elimination axiom of universal implication we have to 
\lcnote{TODO: compare this def against Pitts/... and against known names for modality axioms}
%\emnote{Maybe we could add a note to help intuition : $\leq$ in $\Omega$ reflects entailment, hence the axioms read "if the left-hand side holds, then so does the right-hand side", and then essentially we asks the modality to behave well wrt return/bind and meets in the HpA (better phrasing for this one?).}

The internal monotonicity of the $M$-modality is stronger than the perhaps more well-known ``order-preserving'' property of the $T$-modality in \cite{pitts1991evaluation}.\lcnote{is this needed?}
Syntactically, internal monotonicity ensures that the modality preserves implication.
Semantically, it ensures that any property that holds before a computation keeps holding afterwards.

%As noted in \lcnote{CITE}, it is required to obtain the necessary logical implication. 

% \lcnote{TODO: motivate the second part of the lemma}
% \agnote{I don't think the second property should be explained in the body. It's only relevant for universal implication elimination and it's a rather mundane technicality there, which I've only moved to a lemma to make the proof a bit nicer}

%\subsubsection{Consistency}
\newcommand{\cloop}{c_{\scalebox{0.75}{$\circlearrowright$}}}
To make sure the induced semantics are meaningful, we must verify that evidence for entailment does not exist for every pair of predicates. 
For example, in the case of PCAs, one can define the modality
$\after{x}{m}{\pred\left(x\right)} \defeq \infimum_{x \in m} \pred\left(x\right)$
corresponding to partial correctness.
% \[ \after{x}{m}{\pred\left(x\right)} \defeq \infimum_{x \in m} \pred\left(x\right) \]
However, with this modality, any code that yields no value when applied to any argument, such as $\cloop\defeq \encode{0}{\encode{0}{0 \bullet 0} \bullet \encode{0}{0 \bullet 0}}$, can be used as evidence for the entailment of  any pair of predicates. 
In particular, consider the predicates $\top = \lambda x . \boldsymbol{1}$ and $\bot = \lambda x . \boldsymbol{0}$, then $\top \evidence{\cloop} \bot$ would mean $1 \leq \infimum_{c \in \nu\left(\cloop\right)} . 0 $.
Since $\nu\left({\cloop}\right) = \emptyset$,  this statement is vacuously true, and thus we can consistently model an inconsistent theory.
%which is a meaningless endavour \agnote{too subjective?}.

To eliminate this option, we have to make sure the modality is selective enough to prevent absurd entailments %such as $\top \vdash \bot$ 
from being evidenced.
%
%by tying the consistency of the interpreted theory to the consistency of the metatheory \agnote{In retrospect: This statement is inaccurate. We can still technically model an inconsistent theory even if the metatheory is consistent. We just need to use a trivial Heyting prealgebra. I think I should probably rephrase this statement}.
To that end, we employ a similar technique to the one mentioned in~\cite{CohMiqTat21}, 
with a generalized notion of a separator for our setting.
As we shall see in~\Cref{thm:sep_consistent}, separators will indeed ensure the consistency of the induced evidenced frame.
\negspace
\begin{definition}[Separator \coqdoc{separator}]\label{def:separator}
Given %\emnote{formally, we should again say "a HpA $\Omega$, maybe M-modality should pack it inside} 
an MCA $\mca$, a Heyting prealgebra $\Omega$ and an $M$-modality $\postmod$ over them, a separator for $\postmod$ (or a $\postmod$-separator) is a combinatory complete subset $\separator$ of $\mca$, such that, for every $c_{f} , c_{a} \in \separator$, the following ``progress'' property holds:
$ \after{r}{c_{f} \app c_{a}}{\boldsymbol{0}} \leq \boldsymbol{0} $.
\end{definition}
% \emnote{should we say a $\postmod$-separator? or is it maybe the case that there are "universal" separators agnostic to the modality? (e.g. smallest combinatory subset of A?)}\agnote{Definitely not universal (inconsistent modalities don't have separators). $\postmod$-separator is good. Added it}

One common separator is the one that consists of all codes (when progress holds for all of them). However, since our framework supports arbitrary forms of effectful computations, at times it will be necessary to exclude some codes from the separator. As a simple example, consider a set of codes that contains $\codef{fail}$, defined such that $\codef{fail} \app c = \emptyset$ for every $c\in\mca$. 
While we want to allow codes such as $\codef{fail}$ to be used to define and realize propositions, for consistency, $\codef{fail}$ cannot serve as valid evidence for entailment. 
A more subtle example of a separator for CPS continuations is given in~\Cref{exm:cont}.

In summary, defining realizability semantics from an MCA requires extra structure, captured by the following notion of a monadic core.
\begin{definition}[Monadic Core]
    A \emph{monadic core} is a tuple $\effshell:
    =\left(\mca, \Omega, \postmod, \separator\right)$, where $\mca$ is an MCA  over a $\mathbf{Set}$ monad $M$,   $\postmod$ is an  $M$-modality over $\mca$ and  a complete Heyting prealgebra $\left(\Omega, \leq\right)$, and  $\separator$ is a separator over them.
\end{definition}

\subsubsection{From Monadic Cores to Evidenced Frames}
\label{sec:mcaef}
The next theorem   demonstrates how one can construct evidenced
frames from a monadic core, i.e., an MCA and an associated $M$-modality.
The propositions are taken to be functions from the MCA to the complete Heyting
prealgebra underlying the $M$-modality. 
As explained, this is so that propositions are given an algebraic structure generalizing their usual definition as sets of codes. %\lcnote{TBD: continue  according to  the previous explanations of HpA over sets}
Evidence are elements of the
separator (rather than arbitrary codes), and conceptually the evidence relation holds when the separator maps realizers for the input proposition to
computations that, after they terminate, yield realizers for the output proposition. 

\begin{restatable}[Evidenced Frame over Monadic Core \coqdoc{MCA_EF}]{theorem}{MCAEF}\label{thm:MCAtoEF}
Let $\effshell=\left(\mca, \Omega,\postmod, \separator\right)$ be a monadic core. %$\mca$ be an MCA  over a $\mathbf{Set}$ monad $M$, $\left(\Omega, \leq\right)$ a complete Heyting prealgebra, and $\postmod$ an  $M$-modality over them with separator $\separator$.
The triple $\left( \Omega^\mca, \separator, \evrel{\cdot}{\cdot}{\cdot} \right)$ forms an evidenced frame, where \negspace
\[ \evrel{\predA}{e}{\predB} \defeq \evexpand{\predA}{e}{\predB}.\]
\end{restatable}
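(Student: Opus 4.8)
The plan is to verify directly that the triple $\left(\Omega^\mca, \separator, \evrel{\cdot}{\cdot}{\cdot}\right)$ satisfies each of the five clauses of \Cref{tab:EF}, furnishing for each logical/program constant a concrete element of $\separator$ and checking the required evidence relations. Throughout, the essential move is translating an entailment about \emph{computations} (which live in $M(\mca)$) into one about \emph{values} using the three modality laws from \Cref{def:modality}: \textsc{After-Return}, \textsc{After-Bind}, and \textsc{Internal Monotonicity}. First I would set up notation: for $e \in \separator$ and predicates $\predA,\predB \in \Omega^\mca$, unfolding the definition, $\evrel{\predA}{e}{\predB}$ means $\forall c \in \mca.\ \predA(c) \leq \after{r}{e \app c}{\predB(r)}$.

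For \emph{Reflexivity}, I would take $\eid$ to be the code $\encode{0}{0}$ (the identity combinator from combinatory completeness), which by the MCA laws satisfies $\eid \app c = \nu(0\sub{c}) = \return{c}$; then \textsc{After-Return} gives $\predA(c) \leq \after{r}{\return{c}}{\predA(r)}$ immediately, and $\eid \in \separator$ since $\separator$ is combinatory complete. For \emph{Transitivity}, the composition code $\ecomp{e_1}{e_2}$ should be the encoding of $\lambda$-composition, so that $(\ecomp{e_1}{e_2}) \app c$ evaluates via the monadic bind to $\letin{x}{e_1 \app c}{e_2 \app x}$; here \textsc{After-Bind} is exactly the law needed to chain $\predA(c) \leq \after{x}{e_1\app c}{\predB(x)}$ and $\predB(x) \leq \after{y}{e_2\app x}{\predC(y)}$ into $\predA(c) \leq \after{y}{(\ecomp{e_1}{e_2})\app c}{\predC(y)}$, with \textsc{Internal Monotonicity} used to push the second inequality under the modality. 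The \emph{Top} and \emph{Conjunction} cases follow the standard PCA recipe: take $\etrue,\efst,\esnd$ and the pairing $\epair{\cdot}{\cdot}$ to be the usual combinatory encodings of truth, projections, and pairing, reducing each evidence obligation to \textsc{After-Return} after the application normalizes purely. The \emph{Universal Implication} clause is the most intricate, requiring $\elambda{e}$ to internalize currying and $\eeval$ to internalize application, so that the evidence relations for introduction and elimination of $\imp$ hold against the whole family $\vec\phi$.

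The main obstacle I expect is twofold. First, at every step one must confirm that the chosen combinator lies in the separator $\separator$: because $\separator$ is required to be combinatory complete, all the purely-defined combinators $\eid$, composition, pairing, projections, $\elambda{e}$, $\eeval$ belong to it, but closure of the relevant derived codes under application (staying inside $\separator$) must be checked, and crucially the \emph{progress} property of \Cref{def:separator} is what guarantees consistency rather than any single clause. Second, and more delicately, the implication clauses demand that the modality commute appropriately with the Heyting implication $\haimp$ of $\Omega$; this is precisely where \textsc{Internal Monotonicity} — rather than mere order-preservation — is indispensable, since the introduction rule for $\imp$ must transport an entailment $\predA \emeet \predB \xle{e} \pred$ holding for each $\pred \in \vec\phi$ through the abstraction code, and the modality must respect the meet and the infimum defining $\imp$. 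I would therefore treat the implication constructs last, after the reader has seen the bind-and-monotonicity pattern in Transitivity, and isolate the verification that $\after{x}{m}{(-)}$ preserves the relevant meets and implications as the technical heart of the argument, deferring routine combinatory reductions to the appendix.
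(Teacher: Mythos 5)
Your overall strategy---a direct clause-by-clause verification in which After-Return, After-Bind, and Internal Monotonicity translate entailments on computations into entailments on values---is exactly the paper's, and your Reflexivity, Transitivity, and Top cases match its proof precisely (including $\eid \defeq \encode{0}{0}$, the composition code whose application evaluates to the monadic bind of $e_1 \app c$ into $e_2$, and the use of monotonicity under the modality, which the paper derives from Internal Monotonicity in \Cref{lemma:after-mono}). Your observation that the separator's combinatory completeness puts all the abstraction-built evidence codes inside $\separator$ is also correct, and indeed there is no closure issue to check beyond that.

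The genuine gap is that an evidenced frame requires you to supply the \emph{logical} constructs $\top$, $\emeet$, and $\imp$ as operations on $\Phi = \Omega^\mca$, and your proposal never defines them---yet for conjunction and implication this is the heart of the theorem, and ``the standard PCA recipe'' does not lift verbatim. Since application is effectful, a code realizing $\phi_1 \emeet \phi_2$ cannot be destructed into a value pair; the paper instead defines the connectives \emph{through the modality}, setting
\[
\left(\phi_1 \emeet \phi_2\right)\left(e\right) \defeq \after{c_{1}}{e \app \codepA}{\phi_{1}\left(c_{1}\right)} \meet \after{c_{2}}{e \app \codepB}{\phi_{2}\left(c_{2}\right)}
\qquad\text{and}\qquad
\left(\phi \imp \vec{\phi}\right)\left(e\right) \defeq \infimum_{\psi \in \vec{\phi}}\infimum_{c \in \mca}\left(\phi\left(c\right) \haimp \after{r}{e \app c}{\psi\left(r\right)}\right),
\]
with $\codepA \defeq \encode{1}{0}$ and $\codepB \defeq \encode{1}{1}$. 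Consequently your claim that the Conjunction obligations reduce ``to After-Return after the application normalizes purely'' fails: in the introduction rule the evidences $e_1, e_2$ are arbitrary separator elements, so $e_{1} \app c$ and $e_{2} \app c$ are genuinely effectful computations; the pairing code $\epair{e_1}{e_2} \defeq \encode{1}{1 \bullet \left(e_{1} \bullet 0\right) \bullet \left(e_{2} \bullet 0\right)}$ \emph{delays} them inside a closure, and the verification needs After-Bind together with commutation of the modality with binary meets (\Cref{lemma:after-conj}, itself a consequence of Internal Monotonicity), not merely After-Return. You do correctly sense, for implication, that the modality must interact with $\haimp$ and infima---that is the paper's \Cref{lemma:after-imp} and the uncurrying argument of \Cref{lemma:uni-imp-uncurry}---but the same modality-aware treatment is already needed just to \emph{state} the connectives, and that construction is what your sketch leaves out.
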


 \begin{proof}[Proof Sketch.]
 We define the logical and program constructs, while the proofs that they satisfy the required properties can be found in~\cite{Coqproofs}. 
%Note that all the evidence defined in the proof is in the separator because it is combinatory complete.
Let $\code{\codef{p_{1}}} \defeq \encode{1}{0}$ and $\code{\codef{p_{2}}} \defeq \encode{1}{1}$.
\begin{description}
    \item[\textit{Reflexivity}:]
    Take $\eid \defeq \encode{0}{0}$. 
    \item[\textit{Transitivity}:] 
    Take $\ecomp{e_1}{e_2} \defeq \encode{0}{e_{2} \bullet \left(e_{1} \bullet 0 \right)}$. 
    \item[\textit{Top}:] Take $\top \defeq \lambda e.\boldsymbol{1}$ and $\etrue \defeq \eid$.
    \item[\textit{Conjunction}:] Take $\left(\phi_1 \emeet \phi_2 \right) \left(e\right) \defeq \after{c_{1}}{e \app \codepA}{\phi_{1}\left(c_{1}\right)} \meet \after{c_{2}}{e \app \codepB}{\phi_{2}\left(c_{2}\right)}$, and 
    $     \epair{e_1}{e_2} \defeq  \encode{1}{1 \bullet \left(e_{1} \bullet 0\right) \bullet \left(e_{2} \bullet 0\right)} ,
    \efst \defeq  \encode{0}{0 \bullet \codepA} ,
     \esnd \defeq  \encode{0}{0 \bullet \codepB}.$
    \item[\textit{Universal Implication}:] Take 
    $
    \phi \imp \vec{\phi} \left(e\right)  \defeq  \infimum_{\phi \in \vec{\phi}}\infimum_{c \in \mca}\left(\phi\left(c\right) \sqsupset \after{r}{e \app c}{\phi\left(r\right)}\right)
    $, and $
    \elambda{e} \defeq  \encode{1}{e \bullet \left(\encode{2}{2 \bullet 0 \bullet 1} \bullet 0 \bullet 1\right)} , 
    \eeval \defeq  \encode{0}{\eid \bullet \left( 0 \bullet \codepA\right) \bullet \left( 0 \bullet \codepB\right)}.$
\qedhere
    \end{description}
\end{proof}

The above theorem, together with the $\UFam$ construction~\cite[Def. V.4]{CohMiqTat21} that constructs a realizability tripos from an evidenced frame, thus provides the following realizability semantics for MCAs.
As for the standard PCA-based tripos, a set $I$ is mapped to a family of propositions indexed by $I$, which is given a structure of Heyting prealgebra by considering the existence of a uniform realizer (i.e. compatible with any $i\in I$) to witness the preordering relation. As in the evidenced frame, the M-modality is used to handle computations instead of only values. Morphisms are simply mapped to reindexing functions along them.
\begin{corollary}\label{cor:tripos}
Let $\effshell=\left(\mca, \Omega, \postmod, \separator\right)$ be a monadic core.
Then the following functor $\Trip$ from $\Set^\op$ to the category of Heyting prealgebras is a tripos.
\[
\begin{array}{r@{~~}c@{~~}l}
\Trip(I) &\defeq& \left( (\Omega^\mca)^I, \leq_I\right)
\qquad\quad
\Trip(f)(\varphi) \defeq j \mapsto \varphi(f(j))\\
\varphi \leq_I \psi &\defeq& \exists e\in\separator. \forall i\in I.\forall c\in\mca. \varphi(i)(c) \leq 
\after{r}{e \app c}{\psi(i)\left(r\right)}
\end{array}\]
%is a tripos.
\end{corollary}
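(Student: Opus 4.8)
The plan is to realize $\Trip$ as the composite of two already-established constructions, so that the tripos property follows formally and only a definitional check remains. Concretely, I would chain Theorem \ref{thm:MCAtoEF}, which produces an evidenced frame from the monadic core, with the generic $\UFam$ construction of \cite[Def. V.4]{CohMiqTat21}, which turns any evidenced frame into a tripos.

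First I would instantiate Theorem \ref{thm:MCAtoEF} on $\effshell = \left(\mca, \Omega, \postmod, \separator\right)$ to obtain the evidenced frame $\ef = \left(\Omega^\mca, \separator, \evrel{\cdot}{\cdot}{\cdot}\right)$, whose evidence relation unfolds as $\evrel{\predA}{e}{\predB} \defeq \evexpand{\predA}{e}{\predB}$. Since $\UFam$ sends an arbitrary evidenced frame to a tripos, applying it to $\ef$ immediately yields a tripos $\UFam(\ef)$ over $\Set$: each fiber is a complete Heyting prealgebra, the reindexing functors carry the required adjoints satisfying Beck--Chevalley, and a generic predicate exists. None of this needs to be reproved, as it is exactly the content of \cite[Def. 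V.4]{CohMiqTat21}.

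The remaining task is to verify that $\UFam(\ef)$ coincides, on objects, morphisms, and fiberwise order, with the functor $\Trip$ as stated. Unfolding $\UFam$ with $\Phi = \Omega^\mca$ and $E = \separator$: an object $I$ is sent to the set of $I$-indexed families $\left(\Omega^\mca\right)^I$, matching $\Trip(I)$; the fiber order is the existence of uniform evidence, namely $\varphi \leq_I \psi$ iff $\exists e \in \separator.\, \forall i \in I.\, \evrel{\varphi(i)}{e}{\psi(i)}$, which after substituting the evidence relation becomes $\exists e \in \separator.\, \forall i \in I.\, \forall c \in \mca.\, \varphi(i)(c) \leq \after{r}{e \app c}{\psi(i)\left(r\right)}$, exactly the stated $\leq_I$; and a morphism $f$ is sent to precomposition $\varphi \mapsto \varphi \circ f$, matching $\Trip(f)(\varphi) = j \mapsto \varphi(f(j))$.

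Since this identification is on the nose, $\Trip = \UFam(\ef)$, and hence $\Trip$ is a tripos. There is no deep obstacle here: the only content of the argument is the transport of the abstract evidenced-frame and tripos facts across a definitional equality. The one place to be careful is confirming that the reflexivity and transitivity evidence provided by Theorem \ref{thm:MCAtoEF} are precisely what make $\leq_I$ a preorder on each fiber and $\Trip(f)$ monotone, so that the fiberwise structure and functoriality match $\UFam$ exactly rather than merely up to equivalence.
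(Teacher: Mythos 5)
Your proposal is correct and follows essentially the same route as the paper: the corollary is stated there precisely as the composite of \Cref{thm:MCAtoEF} with the $\UFam$ construction of \cite[Def. V.4]{CohMiqTat21}, with the given formulas for $\Trip(I)$, $\Trip(f)$, and $\leq_I$ arising by unfolding that composition exactly as you do. Your closing check that the definitional identification is on the nose (fiberwise preorder via uniform evidence, reindexing by precomposition) is the only verification the paper implicitly relies on as well.
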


We can now verify that the use of a separator indeed ensures the consistency of the induced evidenced
frame, as shown in the following theorem. 
%\agnote{I should probably elaborate what I mean by ``consistency'' here, since technically all it really says is that $\Omega$ is trivial}
\begin{restatable}[\coqdoc{agreement}]{theorem}{sepconsistent}
\label{thm:sep_consistent}
Let $\effshell=\left(\mca, \Omega, \postmod, \separator\right)$ be a monadic core.
% and $\left( \Omega^\mca, \separator, \evrel{\cdot}{\cdot}{\cdot} \right)$ the induced evidenced frame. 
Then the induced evidenced frame
% $\left(\exists e \in \separator . \top \evidence{e} \bot\right) \quad \iff \quad \left(\boldsymbol{1} \leq \boldsymbol{0}\right)$
has an evidence $e \in \mca$ such that $\top \evidence{e} \bot $ iff $\boldsymbol{1} \leq \boldsymbol{0}$.
\end{restatable}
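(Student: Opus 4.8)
The plan is to prove the two implications separately, after unfolding the defining condition. Recall that in the evidenced frame of \Cref{thm:MCAtoEF} the set of evidence is the separator $\separator$, so ``an evidence $e$'' means $e \in \separator$; moreover $\top = \lambda c.\boldsymbol{1}$ and $\bot = \lambda r.\boldsymbol{0}$. Unfolding $\evrel{\top}{e}{\bot}$ via its definition $\evexpand{\top}{e}{\bot}$ therefore reduces the condition $\top \evidence{e} \bot$ to $\forall c \in \mca.\, \boldsymbol{1} \leq \after{r}{e \app c}{\boldsymbol{0}}$. The entire argument then turns on comparing this requirement against the ``progress'' property built into $\separator$.

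For the left-to-right implication, suppose such an $e \in \separator$ exists. Instantiating the universally quantified $c$ with $e$ itself --- which is legitimate since $e \in \separator \subseteq \mca$ --- yields $\boldsymbol{1} \leq \after{r}{e \app e}{\boldsymbol{0}}$. Applying the progress property of \Cref{def:separator} to the pair $e, e \in \separator$ gives the reverse inequality $\after{r}{e \app e}{\boldsymbol{0}} \leq \boldsymbol{0}$. Chaining the two by transitivity of $\leq$ in $\Omega$ delivers $\boldsymbol{1} \leq \boldsymbol{0}$. This is the crux of the theorem and the very reason separators are imposed: progress is exactly the constraint that turns the otherwise vacuous-looking entailment $\top \evidence{e} \bot$ into genuine degeneracy of the truth-value algebra, thereby excluding spurious realizers such as the looping code $\cloop$ discussed above.

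For the right-to-left implication, assume $\boldsymbol{1} \leq \boldsymbol{0}$. Then $\Omega$ collapses, since $x \leq \boldsymbol{1} \leq \boldsymbol{0} \leq x$ for every $x \in \Omega$. Because $\separator$ is combinatory complete it is nonempty, so pick any $e \in \separator$, say $e = \eid$. For every $c \in \mca$ we have $\boldsymbol{0} \leq \after{r}{e \app c}{\boldsymbol{0}}$ as $\boldsymbol{0}$ is least, and combining with the hypothesis $\boldsymbol{1} \leq \boldsymbol{0}$ yields $\boldsymbol{1} \leq \after{r}{e \app c}{\boldsymbol{0}}$, i.e.\ $\top \evidence{e} \bot$. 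I expect no serious obstacle in either direction; the only point requiring care is the observation that the single defining constraint on separators aligns precisely with the unfolded modality in the evidence relation, which is what makes the equivalence go through.
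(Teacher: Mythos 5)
Your proof is correct and takes essentially the same route as the paper: both directions unfold the evidence relation $\forall c.\ \boldsymbol{1} \leq \after{r}{e \app c}{\boldsymbol{0}}$, using the separator's progress property for the forward implication and the identity evidence $\eid$ for the converse. Your two small deviations are harmless and arguably cleaner: instantiating $c := e$ makes explicit that progress needs \emph{both} arguments in $\separator$ (the paper's chain, written ``for every $c \in \mca$'', silently needs $c \in \separator$ in its last step), and in the converse you use that $\boldsymbol{0}$ is the bottom of $\Omega$ where the paper instead invokes the After-Return axiom via $\eid \app c = \return{c}$.
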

%\agnote{Should the proof be in the appendix?}\emnote{I'd say no, it is short and an enlightening!}

 With that in mind, an $M$-modality over a non-trivial $\Omega$ is called \emph{consistent} when it has a separator.
% \begin{definition}[Consistent Modality]
% An $M$-modality is called \emph{consistent} when it has a separator.
% \end{definition}
%\agnote{I should probably redefine the EF such that instead of taking $\mca$ for the set of evidence, I'll take $\separator$}
For example, $\after{x}{m}{\pred\left(x\right)} \defeq \infimum_{x \in m} \pred\left(x\right)$ is not a consistent $M$-modality 
%The existence of a separator prevents $\after{x}{m}{\pred\left(x\right)} \defeq \forall x \in m . \pred\left(x\right)$ from being a valid modality.
because %$\cloop$ %$\encode{0}{\encode{0}{0 \bullet 0} \bullet \encode{0}{0 \bullet 0}}$ 
%has to be in 
$\cloop \in \separator$ (due to combinatory completeness), and so:
$ \boldsymbol{0} \geq \after{x}{\cloop \app c}{\boldsymbol{0}} = \infimum_{x \in \emptyset} \boldsymbol{0} = \boldsymbol{1} $,  for any code $c \in \separator$.
Therefore,  $\boldsymbol{1} \leq \boldsymbol{0}$, which only holds in the trivial complete Heyting prealgebra, where all elements are equivalent.

%\agnote{Should I explain why the example modalities are consistent?}
%\lcnote{I think in each of the examples we will need to state the separator once we add it to the EF def. 

\subsubsection{Realizability Examples}
This section illustrates the utility of the MCA framework by providing a few examples of how natural realizability models can be obtained via MCAs.
We first show that the standard realizability tripos can be obtained from the PCA-based monadic core (cf.~\Cref{pca_mca}).

\begin{example}[PCAs \coqdoc{partiality_monad}]
    \Cref{pca_mca} shows that PCAs correspond to the sub-singleton monad 
    $\monad A = \{S\subseteq A \mid \forall x_{1},x_{2}\in S. x_{1} = x_{2} \}$.
   Following the standard intuitions of realizability models based on PCAs, when applying an evidence of $\predA \xle e \predB$ to a code $c$ such that $\predA(c)$,  
   the $M$-modality $\after{x}{e\app c}{\predB}$ should express that the computation $e\app c$ returns a valid code for $\predB$, that is, that there exists such a valid code in the corresponding sub-singleton.
   To that end, we take %\footnote{Which generalizes the Boolean algebra  $\{0,1\}$ to make it compatible with intuitionistic meta-theory.} 
   $\Omega = \P\left(\singleton\right)$ (generalizing the Boolean algebra  $\{0,1\}$ making it compatible with intuitionistic meta-theory), and $\after{x}{m}{\pred(x)} \defeq \supremum_{x\in m} \pred(x)$,  for which the whole set $\mca$ defines as usual a valid separator.
   It is then easy to verify that the evidenced frame obtained from \Cref{thm:MCAtoEF} is the expected one described earlier~\cite{CohMiqTat21}.

   \lcnote{add RCA and SCA}
   
 %   \agnote{Yes, to define it internally, you simply use the fact $\Omega$ is a complete Heyting prealgebra, and define: $\after{x}{m}{\vaprhi} \defeq \bigvee_{x \in m}\left(\pred\left(x\right)\right)$. However, what you actually need to do, technically, is define $\Omega$ and $\omega$ rather than $\after{-}{-}{-}$, and in a classical metatheory, a good candidate for $\Omega$ is $\left\{ 0 , 1\right\}$ and the standard $\omega$ (of PCA semantics) is $\omega\left(\left\{x\right\}\right) = x$}
 % \lcnote{I think here the after is just substitution. In which case we do get the standard EF}

\end{example}

Next, to further demonstrate the uniformity and utility of our framework, we go back to a couple of our MCA instances from~\Cref{sec:examples}. We equip each instance with a corresponding $M$-modality and demonstrate that this indeed recovers the expected model.

\begin{example}[RCA \coqdoc{powerset_monad}]\label{ex:relational}
As described in~\Cref{RCA}, relational realizability use the powerset monad $\monad A = \P \left(A\right)$.
PCAs are then a special case of RCAs, as the sub-singleton is a special case of the powerset monad.
The standard modalities for RCAs are the M-modalities of angelic and demonic nondeterminism, both using $\Omega = \P\left(\singleton\right)$ as in  PCAs.
For angelic nondeterminism, the modality is the same as the one used for PCAs, $\after{x}{m}{\pred(x)} \defeq \supremum_{x\in m} \pred(x)$, and $\mca$ is always a separator.
However, for demonic nondeterminism, we take the infimum rather than the supremum.
To allow for a separator, the definition of a modality
%As discussed in \Cref{def:separator}, taking $\after{x}{m}{\pred(x)} \defeq \infimum_{x\in m} \pred(x)$ makes the existence of a separator impossible, so to allow for it, one 
has to conjoin some ``termination'' predicate $m \Downarrow$, that implies progress for the elements of the designated separator, yielding the definition: $\after{x}{m}{\pred(x)} \defeq m\Downarrow \wedge \infimum_{x\in m} \pred(x)$.
\end{example}

%Next, we discuss the addition of state to realizability.

\begin{example}[SCA \coqdoc{state_monad}]
As in~\Cref{SCA}, for $\Sigma$ a preordered set of states, SCAs use the increasing (powerset) state monad, $\monad A = \left\{ m:\Sigma\rightarrow \P\left(\Sigma\times A\right)\
\! \mid \! \forall\sigma_{0}\in\Sigma, \left(\sigma_{1},x\right)\in m\left(\sigma_{0}\right).\sigma_{0}\leq\sigma_{1}\right\}
$.
To account for state, predicates have an extra component in $\Sigma$, so $\Omega = \Sigma \rightarrow \P\left(\singleton\right)$, and they are restricted so that they must be ``future-stable'', i.e. upward closed with respect to states.
As in RCAs, there are angelic and demonic modalities: 
%Since $\Omega = \Sigma \rightarrow \P\left(\singleton\right)$, there is an added component of a state index, $\sigma$.
the angelic is $\left(\after{x}{m}{\pred(x)}\right)^{\sigma} \defeq \infimum_{\sigma' \geq \sigma} \supremum_{\left(x,\sigma''\right)\in m\left(\sigma'\right)} \left(\pred(x)\right)^{\sigma''}$, and 
the demonic again uses a ``termination'' predicate and is given by:  
$\left(\after{x}{m}{\pred(x)}\right)^{\sigma} \defeq \infimum_{\sigma' \geq \sigma} .  m\left(\sigma'\right) \Downarrow \wedge \infimum_{\left(x,\sigma''\right)\in m\left(\sigma'\right)} \left(\pred(x)\right)^{\sigma''}$.

\end{example}

%
%The next example is for classical realizability through the CPS continuation monad. 

% \emnote{we should definitely give examples of M-modalities and see whether we can recover the expected EFs, e.g. for PCAs and the continuation monad}

\begin{example}[Continuations \coqdoc{continuation_monad}]\label{exm:cont}
As observed in~\Cref{example:CPS}, the continuation monad provides a particular instance
of an MCA replaying a (CbV) CPS translation. Since Krivine classical realizability~\cite{Krivine09} is known to be equivalent to the composition of CPS translation with a standard intuitionistic realizability interpretation~\cite{OlivaStreicher08}, we can easily replay this construction in our setting using a CPSCA based on the continuation monad $\monad A = \left(A \rightarrow R\right) \rightarrow R$.  
Krivine realizability models crucially realies on a parameter $\pole$, the so-called \emph{pole}, which intuitively contains the computations considered as valid; here, this will simply be a subset of $R$. 
Any pole induces an orthogonality relation between elements of $A$ and of $A\to R$: a function $f$ in the latter is said to be orthogonal to $a$, written $f\bot a$ when $f(a)\in\pole$.

Since our algebras follow a call-by-value discipline, realizers should be defined using two layers of orthogonality~\cite{Munch09focalisation,GardelleMiquey23}: given a formula $A$, its interpretation is primitively defined by a set of values $\llbracket A \rrbracket$. Then its set of opponents are the continuations $k$ orthogonal to any $a\in\llbracket A \rrbracket$, while a realizer will be orthogonal to any such continuation. 
Last, as is usual in Krivine realizability, as soon as the pole is non-empty, there is at least one continuation $k$ and $c_a\in\mca$ such that $k(c_a)\in\pole$. Then, the computation $\code{\mathsf{K}_{k}}\app c_a$, which drops the current continuation and applies instead $k$ to $c_a$, would be a realizer of $\bot$. To circumvent this issue, we consider the set PL of \emph{proof-like} codes obtained by combinatorial completeness extended with the code $\code{\mathsf{cc}}$. As in~\cite{Krivine09}, a pole $\pole$ is consistent if for any proof-like term $m$, there exists a continuation $k$ such that $m(k)\notin\pole$. For any such pole, the set PL defines a separator.

Formally, for $\Omega \defeq \P(\singleton)$, a fixed consistent pole $\pole\subset R$ and $\pred : A\to \Omega$,  we define:\negspace
\[\after{x}{m}{\pred\left(x\right)}\defeq \infimum_{k\in A\to R} \left(\Big(\infimum_{a\in A}\left( \pred(a) \haimp k\bot a\right)\Big) \haimp m\bot k\right)\]
This definition, which satisfies the expected axiom of an $M$-modality and admits the set PL as separator, induces via \Cref{thm:MCAtoEF} an evidenced frame corresponding to an indirect-style presentation of a (call-by-value) Krivine realizability model analogous to the one in \cite{GardelleMiquey23}.
\end{example}

%Finally, we provide a more recent example, based on the reader monad. 

\begin{example}[Parameterized \coqdoc{parametric_monad}]
%~(cf.~\Cref{example:bauer})]
%\lcnote{can we give here an invalidation of CC? to have a truly NEW result}

Bauer~\cite{bauer2024countablereals} provides a construction of a parameterized realizability tripos from a ParCA.
The same tripos can be obtained from the MCA representation given in~\Cref{example:bauer} following the construction in~\Cref{cor:tripos}, taking the following $M$-modality (given a set of parameters $\param$):\negspace
%The resulting tripos derived \agnote{should I mention it is derived through the UFAM construction?} from the following 
%\agnote{standard (since $\pred$ depends only on $x$ and not on $p$)}\lcnote{?} 
%modality, given a set of parameters $P$:
\[ \after{x}{m}{\pred\left(x\right)} \defeq \bigcap_{p \in \param} \bigcup_{x \in m\left(p\right)} \pred\left(x\right). \]
The modality extends the one for PCAs by requiring the computation to yield a value in $\phi$ for every possible parameter in $\param$.
Just as in a PCA, here too the set $\mca$ provides a separator. Moreover, for the induced logic to be non-trivial the set of parameters $\param$ has to be non-empty.

% This modality is equivalent to the form $\forall p \in P . m \Vdash_{p} \pred$ mentioned (up to a change of notation) in \cite{bauer2024countablereals}.
% \agnote{should I add the proofs of the axioms?}
% To see the connection, note that when $\Omega^{A} = \P\left(A\right)$, the definition becomes:
% \[ \forall p \in P .\exists x \in m\left(p\right) . x \in \pred\]
% which in the notation of partial functions, is equivalent to:
% \[ \forall p \in P . m\left(p\right)\downarrow \wedge\, m\left(p\right) \in \pred\]
% or, using the same notation as in \cite{bauer2024countablereals}:
% \[ \forall p \in P . \left(p \mid m\right)\downarrow \wedge \left(p \mid m\right) \in \pred \]
% which is exactly $\forall p \in P . m \Vdash_{p} \pred$ as $m \Vdash_{p} \pred$ appears in \cite{bauer2024countablereals}. \agnote{well, he uses $e$ instead of $m$ and $\pred\left(x\right)$ instead of $\pred$ (because he assumes $\pred$ is already applied to some element in the context), so maybe we want to modify accordingly}

\end{example}

\subsection{Connection to Assembly Models}
\label{sec:assemblies}

% \emnote{- intro to assemblies, why they were useful for standard intuitionistic realizability (easier than the topos, morphisms are function, etc...) + refs }
To further highlight how MCAs naturally generalize PCAs, we here discuss how the construction of assemblies over PCAs, a foundational technique in the study of realizability toposes~\cite{van2008realizability}, seamlessly extends to MCAs.
% While the first categorical accounts for realizability semantics were developed in Hyland's \emph{effective topos} \Eff~\cite{hyland1980tripos}, this topos, whose objects are pairs $(X,E_X)$ where $E_X$ is a partial equivalence relation (PER) over $X$, is arguably tricky to manipulate. In particular, although the morphisms are required to respect the PERs, their lack of reflexivity permits the existence of elements $x\in X$ for which $E_X(x,x)$, which acts as an existence predicate, is not satisfied.
% To overcome this, one can consider instead 
Assemblies $\Asm_\A$ over a PCA $\A$,  are pairs $(X,\asmmap{X}{\cdot})$ where $\asmmap{X}{\cdot}$ maps any $x\in X$ to a non-empty subsets of $\A$ witnessing $x$'s existence.
The categories of assemblies $\Asm_\A$ are somewhat simpler to handle, and sufficient to model rich intuitionistic systems like the Calculus of Construction \cite{CarFreSce90assemblies}.  In fact, assemblies can be identified as a particular subcategory of the realizability topos, which in itself is a quasi-topos. Moreover, several works studying completions mechanisms within toposes \cite{RobinsonRosolini90completions,Menni2000completions} emphasized that the realizability topos could be recovered as the ex/reg completions of $\Asm_\A$. 
% Assemblies thus provide, starting from a PCA, an alternate path to study the resulting realizability topos, which has been extensively studied over the years, until very recent works connecting assemblies with ideas coming from homotopy type theory~\cite{Speight24} and synthetic computability~\cite{Swan24oraclemodalities}.

Recently, following a line of work aiming to provide an algebraic counterpart to Krivine realizability using \emph{implicative algebras}, Castro \emph{et al.} extended the usual definition of assemblies to such algebras~\cite{CasMiqKrz23}. While they manage to prove that the resulting assemblies define a quasi-topos as expected, in this context the usual completion mechanism techniques seem to fail and relating assemblies with the corresponding implicative topos remains an open problem to date.
Following the connection established between implicative algebras and evidenced frames in \cite{CohMiqTat21}, we can adapt their construction to define assemblies over any evidenced frame which, combined with \Cref{thm:MCAtoEF} , provide us with a construction of assemblies over any MCA. Note that assemblies over a PCA or an implicative algebra are then recovered by considering the corresponding evidenced frame.

\begin{definition}[Category $\Asm_{\ef}$]
The category of assemblies over an evidenced frame $\ef = \left(\Phi , E , \rightarrow\right)$, is given by:
\negspace
\begin{description}
    \item[Objects :] an \emph{assembly} over $\ef$ is a tuple $X=\left(\underlying{X},  \real{X}\right)$ where $\underlying{X}$ is a set and $\real{X} : \underlying{X} \rightarrow \Phirel$, where $\Phirel=\{\varphi\in\Phi \mid \exists e\in E. \top \xle e \varphi\}$ is the subset of evidenced relations.
    \item[Morphisms :] given two assemblies $X , Y$, a \emph{morphism} $f$ from $X$ to $Y$ is a function $\underlying{f} : \underlying{X} \rightarrow \underlying{Y}$ s.t. there exists an evidence $\tau_f \in E$, s.t for all $x \in \underlying{X}$, 
    $ \real{X} (x) \evidence{\tau_f} \real{Y}(f\left(x\right))$.
That evidence $\tau_f$ is said to ``\emph{track}'' $f$.
%\lcnote{the font is confusing. maybe $c_f$ instead?}\emnote{fixed it with $\tau_f$ borrowed from~\cite{CasMiqKrz23}}
\end{description}
\end{definition}
%\lcnote{does it make sense to look at some of the examples? will there be an interesting difference/similarity?} \emnote{nope, you will get exactly the same things (up to the fact that EFs define "realized" by a binary relation and we thus use $\top \xle e \varphi$ instead of the unary $e\Vdash \varphi$)}

To prove that this indeed defines a category, it suffices to observe that 
the \emph{identity morphism} over assembly $X$ is simply the identity function $\id_{\underlying{X}}$ over $\underlying{X}$ tracked by the reflexivity evidence $e_{\id}$; and that if $f$ is tracked by $\tau_f$ and $g$ is tracked by $\tau_g$, then their composition $f \circ g$ is the composition of $\underlying{f} \circ \underlying{g}$ tracked by the transitivity evidence $\tau_g;\tau_f$.
Studying more in-depth the category $\Asm_\ef$ is out of the scope of this paper, but 
we conjecture that one could follow the development for implicative assemblies in~\cite{CasMiqKrz23} to prove that the $\Asm_{\ef}$ is also finitely (co)complete, locally cartesian closed and possesses a strong object classifier, and defines then a quasi-topos.
% \footnote{Indeed, 
% %apart from set-theoretic constructions on the underlying sets of assemblies (which can be replayed in this setting), 
% the proofs in~\cite{CasMiqKrz23} mainly rely on propositions having enough structure to account for the standard constructors and %conjunctions, disjunctions, intersections, etc., 
% the existence of adequate realizers, expressed by means of $\lambda$-terms in the separator of any implicative algebra. Here, the former requirement is trivially provided by the definition of $\Phi$ in evidenced frames, and the latter can be mimicked using evidence in $E$. For instance, the fact that $\asm_{ef}$ has products matches the conjunction axioms and the corresponding evidence $\efst$, $\esnd$ and $\epair{}{}$.}. 
Nonetheless, as for the implicative case, it is not clear whether some mechanism analogous to the ex/reg completion could connect it to the topos induced by the evidenced frame.

% \section{Related Work}

% \lcnote{TBD}
% \lcnote{cite Tabareau}
% a recent line of work involves extending the (intensional) type theory underlying the Coq proof assistant with logical principles and computational effects~\cite{Pedrot+Tabareau:lics:2017,Boulier+Pedrot+Tabareau:cpp:2017,Pedrot+Tabareau:esop:2018,Pedrot+Tabareau+Fehrmann+Tanter:icfp:2019,Pedrot+al:lics:2020,Pedrot+Tabareau:popl:2020}.
% %
% In particular, P\'edrot and Tabareau~\cite{Pedrot+Tabareau:popl:2020}~prove that any \emph{observably} effectful type theory (with other standard properties) is necessarily inconsistent.
% %
% These extensions are usually guided by specific principles or computational capabilities such as exceptions, while being justified via a syntactic model by translating back to type theory, as advocated in~\cite{Boulier+Pedrot+Tabareau:cpp:2017}. Nonetheless, these works have different purposes than the present one, in that they aim at justifying the soundness of theories involving dependent types that may refer to the desired effectful computations. In turns, here we define a syntactic translation for $\hol$, that has no dependencies and consequently no way to refer to the effectful realizers at play.

\section{Conclusion and Future Work}\label{sec:conc}
This paper introduces Monadic Combinatory Algebras (MCAs) as a novel extension of PCAs that encompasses a wide range of computational effects through the use of monads. This new framework addresses the limitations of traditional PCAs, which only support non-termination as a computational effect, by providing a more comprehensive model capable of internalizing effects such as nondeterminism, stateful computations, continuations and oracles.
%
%We further provided a categorical characterization of MCAs through Freyd Categories, enriching the understanding of the relationship between combinatory algebras and categorical models of computation.
%
We link MCAs to realizability theory (generalizing the role of PCAs in traditional realizability models)
by providing two uniform constructions of realizability models from MCAs, triposes and assemblies, that factor through evidenced frames. 
 Overall, MCAs provide a powerful and flexible framework for internalizing computational effects that opens up new avenues in the study of effectful computations and their algebraic and categorical models. 
 
% \lcnote{connect better to related work:  a generalization of a specific case}
% Doing so highlights\agnote{???} the details of MCAs as an interesting special case of the categorical perspective, while showcasing how to extend the discussion of monad-based restrictions, and thus, monad-based PCA objects, to non-commutative monads.
% To our knowledge, currently, the only results regarding monad-based restrictions discuss commutative monads.

Future research into the MCA framework presents several directions for exploration. 
A key one is examining how different underlying monads affect the resulting theory, potentially providing novel insights into constructive models of computation. 
%While this paper demonstrated how MCAs naturally encapsulate various computational effects, 
Further work is also needed to extend the MCA scope to unaddressed effects, such as probabilistic computation. 
Notably, our MCA structure is based on $\setcat$-monads, whereas probabilistic computation is typically formalized using the Giry monad~\cite{giry2006categorical} in the category $\mathbf{Meas}$ (measurable spaces and functions). Advancing the theory of combinatory objects in Freyd categories % rather than Kleisli categories over $\setcat$, 
will enable the exploration of broader computational effects beyond $\setcat$-based monads.
Moreover, a more comprehensive categorical understanding of MCAs, via their connections to structures like monoidal, enriched, and higher categories, is needed.
%not only to enrich the theoretical foundation of MCAs but also 
%to provide a deeper understanding of the interplay between combinatory algebras and categorical models of computation.
%This could be achieved by investigating their relationships with other categorical structures, such as monoidal categories, enriched categories, and higher categories.
%and may reveal new connections and provide a more comprehensive categorical framework for effectful computations.
\lcnote{discuss extending the categorical setting to the modality and cite}

The MCA framework can also be further extended to support more diverse and complex notions of computation such as hybrid effects, where multiple monads are combined to model complex interactions between different computational effects. 
Additionally, the framework can be enriched by incorporating alternative evaluation strategies, such as Call-by-Name, which may uncover new computational behaviors. 
%Specifically, the Call-by-Name strategy can be used to further relate to Call-by-Name  style classical realizability~\cite{GardelleMiquey23}.
%Another extension of the MCA framework can be obtained by incorporating alternative evaluation strategies such as Call-by-Name which may reveal new computational behaviors. 
%Specifically, the Call-by-Name strategy can be used to further relate to works on classical realizability, which can be based on Call-by-Name operational semantics~\cite{GardelleMiquey23}.

% Another promising direction involves exploring alternative evaluation strategies within the MCA framework. 
% The current focus on Call-by-Value could be expanded to include Call-by-Name and other strategies, which may reveal new computational behaviors, especially when combined with effectful computations.
% Specifically, the Call-by-Name strategy can be used to further relate to works on classical realizability, which can be based on Call-by-Name operational semantics~\cite{GardelleMiquey23}.

% %\textbf{Combining effects.}
% The MCA framework can also be taken further by exploring how MCAs can be extended to support hybrid effects, where multiple monads are combined to model complex interactions between different types of computational effects. Investigating the algebraic structures and properties that arise from such combinations will provide deeper insights into the nature of effectful computations.

\clearpage

\bibliography{ref}

\begin{thebibliography}{10}

\bibitem{bauer2024countablereals}
Andrej Bauer and James~E. Hanson.
\newblock {The Countable Reals}, 2024.
\newblock URL: \url{https://arxiv.org/abs/2404.01256}, \href
  {https://arxiv.org/abs/2404.01256} {\path{arXiv:2404.01256}}.

\bibitem{BerBezCoq98}
Stefano Berardi, Marc Bezem, and Thierry Coquand.
\newblock {On the Computational Content of the Axiom of Choice}.
\newblock {\em The Journal of Symbolic Logic}, 63(2):600--622, 1998.
\newblock \href {https://doi.org/10.2307/2586854} {\path{doi:10.2307/2586854}}.

\bibitem{Boulier+Pedrot+Tabareau:cpp:2017}
Simon Boulier, Pierre-Marie P\'{e}drot, and Nicolas Tabareau.
\newblock {The Next 700 Syntactical Models of Type Theory}.
\newblock In {\em Proceedings of the 6th ACM SIGPLAN Conference on Certified
  Programs and Proofs}, CPP 2017, page 182–194, New York, NY, USA, 2017.
  Association for Computing Machinery.
\newblock \href {https://doi.org/10.1145/3018610.3018620}
  {\path{doi:10.1145/3018610.3018620}}.

\bibitem{CarFreSce90assemblies}
Aurelio Carboni, Peter~J. Freyd, and Andre Scedrov.
\newblock {A Categorical Approach to Realizability and Polymorphic Types}.
\newblock In M.~Main, A.~Melton, M.~Mislove, and D.~Schmidt, editors, {\em
  Mathematical Foundations of Programming Language Semantics}, pages 23--42,
  Berlin, Heidelberg, 1988. Springer Berlin Heidelberg.
\newblock \href {https://doi.org/10.1007/3-540-19020-1_2}
  {\path{doi:10.1007/3-540-19020-1_2}}.

\bibitem{CasMiqKrz23}
Félix Castro, Alexandre Miquel, and Krzysztof Worytkiewicz.
\newblock {Implicative Assemblies}, 2023.
\newblock URL: \url{https://arxiv.org/abs/2304.10429}, \href
  {https://arxiv.org/abs/2304.10429} {\path{arXiv:2304.10429}}.

\bibitem{COCKETT2008}
J.R.B. Cockett and P.J.W. Hofstra.
\newblock {Introduction to Turing Categories}.
\newblock {\em Annals of Pure and Applied Logic}, 156(2):183--209, 2008.
\newblock \href {https://doi.org/10.1016/j.apal.2008.04.005}
  {\path{doi:10.1016/j.apal.2008.04.005}}.

\bibitem{cockett2007restriction}
Robin Cockett and Stephen Lack.
\newblock {Restriction Categories III: Colimits, Partial Limits and
  Extensivity}.
\newblock {\em Mathematical Structures in Computer Science}, 17(4):775--817,
  2007.
\newblock \href {https://doi.org/10.1017/S0960129507006056}
  {\path{doi:10.1017/S0960129507006056}}.

\bibitem{cohen2019effects}
Liron Cohen, Sofia~Abreu Faro, and Ross Tate.
\newblock {The Effects of Effects on Constructivism}.
\newblock {\em Electronic Notes in Theoretical Computer Science}, 347:87--120,
  2019.
\newblock \href {https://doi.org/10.1016/j.entcs.2019.09.006}
  {\path{doi:10.1016/j.entcs.2019.09.006}}.

\bibitem{Coqproofs}
Liron Cohen, Ariel Grunfeld, Dominik Kirst, Étienne Miquey, and Ross Tate.
\newblock {Coq Formalization, \url{https://github.com/dominik-kirst/mca}}.
\newblock {Supplementary Material}, 2025.

\bibitem{CohMiqTat21}
Liron Cohen, {\'E}tienne Miquey, and Ross Tate.
\newblock {Evidenced Frames: A Unifying Framework Broadening Realizability
  Models}.
\newblock In {\em 2021 36th Annual ACM/IEEE Symposium on Logic in Computer
  Science (LICS)}, pages 1--13, 2021.
\newblock \href {https://doi.org/10.1109/LICS52264.2021.9470514}
  {\path{doi:10.1109/LICS52264.2021.9470514}}.

\bibitem{danvy2004evaluation}
Olivier Danvy.
\newblock {On Evaluation Contexts, Continuations, and the Rest of the
  Computation}.
\newblock In {\em Proceedings of the Fourth ACM SIGPLAN Workshop on
  Continuations, Technical report CSR-04-1, Department of Computer Science,
  Queen Mary’s College}, pages 13--23, 2004.

\bibitem{feferman1975language}
Solomon Feferman.
\newblock {A Language and Axioms for Explicit Mathematics}.
\newblock In {\em Algebra and logic}, pages 87--139. Springer, 1975.
\newblock \href {https://doi.org/10.1007/BFb0062852}
  {\path{doi:10.1007/BFb0062852}}.

\bibitem{FREY20192000}
Jonas Frey.
\newblock {Characterizing Partitioned Assemblies and Realizability Toposes}.
\newblock {\em Journal of Pure and Applied Algebra}, 223(5):2000--2014, 2019.
\newblock \href {https://doi.org/10.1016/j.jpaa.2018.08.012}
  {\path{doi:10.1016/j.jpaa.2018.08.012}}.

\bibitem{GardelleMiquey23}
Samuel Gardelle and {\'E}tienne Miquey.
\newblock {Do CPS Translations Also Translate Realizers?}
\newblock In Timothy Bourke and Delphine Demange, editors, {\em {JFLA 2023 -
  34{\`e}mes Journ{\'e}es Francophones des Langages Applicatifs}}, pages
  103--120, Praz-sur-Arly, France, January 2023.
\newblock URL: \url{https://hal.inria.fr/hal-03910311}.

\bibitem{giry2006categorical}
Michele Giry.
\newblock {A Categorical Approach to Probability Theory}.
\newblock In {\em Categorical Aspects of Topology and Analysis: Proceedings of
  an International Conference Held at Carleton University, Ottawa, August
  11--15, 1981}, pages 68--85. Springer, 2006.
\newblock \href {https://doi.org/10.1007/BFb0092872}
  {\path{doi:10.1007/BFb0092872}}.

\bibitem{godel1933intuitionistic}
Kurt G{\"o}del.
\newblock {On Intuitionistic Arithmetic and Number Theory}.
\newblock {\em Collected Works}, 1:287--295, 1933.

\bibitem{Goldberg-Mayer:2015}
Mayer Goldberg.
\newblock {Ellipses and Lambda Definability}.
\newblock {\em Logical Methods in Computer Science}, Volume 11, Issue 3,
  October 2015.
\newblock \href {https://doi.org/10.2168/lmcs-11(3:25)2015}
  {\path{doi:10.2168/lmcs-11(3:25)2015}}.

\bibitem{Hofstra06relative}
Pieter J.~W. Hofstra.
\newblock {All Realizability is Relative}.
\newblock {\em Mathematical Proceedings of the Cambridge Philosophical
  Society}, 141:239 -- 264, 2006.
\newblock \href {https://doi.org/10.1017/S0305004106009352}
  {\path{doi:10.1017/S0305004106009352}}.

\bibitem{hofstra2004partial}
Pieter~JW Hofstra.
\newblock {Partial Combinatory Algebras and Realizability Toposes}.
\newblock {\em University of Ottowa}, 2004.

\bibitem{hyland1980tripos}
J.~M.~E. Hyland, P.~T. Johnstone, and A.~M. Pitts.
\newblock Tripos theory.
\newblock {\em Mathematical Proceedings of the Cambridge Philosophical
  Society}, 88(2):205–232, 1980.
\newblock \href {https://doi.org/10.1017/S0305004100057534}
  {\path{doi:10.1017/S0305004100057534}}.

\bibitem{kleene1945interpretation}
Stephen~Cole Kleene.
\newblock {On the Interpretation of Intuitionistic Number Theory}.
\newblock {\em The journal of symbolic logic}, 10(4):109--124, 1945.
\newblock \href {https://doi.org/10.2307/2269016} {\path{doi:10.2307/2269016}}.

\bibitem{Krivine09}
Jean-Louis Krivine.
\newblock {Realizability in Classical Logic. {I}n {I}nteractive Models of
  Computation and Program Behaviour}.
\newblock {\em Panoramas et synth{\`e}ses}, 27, 2009.
\newblock URL: \url{https://hal.science/hal-00154500}.

\bibitem{krivine2011realizability}
Jean-Louis Krivine.
\newblock {Realizability Algebras: A Program to Well Order R}.
\newblock {\em Logical methods in computer science}, 7, 2011.
\newblock \href {https://doi.org/10.2168/LMCS-7(3:2)2011}
  {\path{doi:10.2168/LMCS-7(3:2)2011}}.

\bibitem{LEVY2003Freyd}
Paul~Blain Levy, John Power, and Hayo Thielecke.
\newblock {Modelling Environments in Call-by-Value Programming Languages}.
\newblock {\em Information and Computation}, 185(2):182--210, 2003.
\newblock \href {https://doi.org/10.1016/S0890-5401(03)00088-9}
  {\path{doi:10.1016/S0890-5401(03)00088-9}}.

\bibitem{LONGO1990193}
Giuseppe Longo and Eugenio Moggi.
\newblock {A Category-theoretic Characterization of Functional Completeness}.
\newblock {\em Theoretical Computer Science}, 70(2):193--211, 1990.
\newblock \href {https://doi.org/10.1016/0304-3975(90)90122-X}
  {\path{doi:10.1016/0304-3975(90)90122-X}}.

\bibitem{Menni2000completions}
M.~Menni.
\newblock {\em {Exact Completions and Toposes}}.
\newblock PhD thesis, University of Edinburgh, 2000.

\bibitem{moggi1989computationallamdba}
Eugenio Moggi.
\newblock {Computational Lambda-calculus and Monads}.
\newblock {\em [1989] Proceedings. Fourth Annual Symposium on Logic in Computer
  Science}, pages 14--23, 1989.
\newblock \href {https://doi.org/10.1109/LICS.1989.39155}
  {\path{doi:10.1109/LICS.1989.39155}}.

\bibitem{moggi1991notions}
Eugenio Moggi.
\newblock {Notions of Computation and Monads}.
\newblock {\em Information and computation}, 93(1):55--92, 1991.
\newblock \href {https://doi.org/10.1016/0890-5401(91)90052-4}
  {\path{doi:10.1016/0890-5401(91)90052-4}}.

\bibitem{Munch09focalisation}
Guillaume Munch-Maccagnoni.
\newblock {F}ocalisation and {C}lassical {R}ealisability.
\newblock In Erich Gr{\"a}del and Reinhard Kahle, editors, {\em Computer
  Science Logic '09}, volume 5771 of {\em Lecture Notes in Computer Science},
  pages 409--423. Springer, Heidelberg, 2009.
\newblock \href {https://doi.org/10.1007/978-3-642-04027-6_30}
  {\path{doi:10.1007/978-3-642-04027-6_30}}.

\bibitem{OlivaStreicher08}
P.~Oliva and T.~Streicher.
\newblock {On Krivine's Realizability Interpretation of Classical Second-Order
  Arithmetic}.
\newblock {\em Fundam. Inform.}, 84(2):207--220, 2008.
\newblock \href {https://doi.org/10.5555/1402673.1402677}
  {\path{doi:10.5555/1402673.1402677}}.

\bibitem{Pedrot+al:lics:2020}
Pierre{-}Marie P{\'{e}}drot.
\newblock {Russian Constructivism in a Prefascist Theory}.
\newblock In Holger Hermanns, Lijun Zhang, Naoki Kobayashi, and Dale Miller,
  editors, {\em 35th Annual {ACM/IEEE} Symposium on Logic in Computer Science,
  Saarbr{\"{u}}cken, Germany}, pages 782--794. {ACM}, 2020.
\newblock \href {https://doi.org/10.1145/3373718.3394740}
  {\path{doi:10.1145/3373718.3394740}}.

\bibitem{Pedrot+Tabareau:esop:2018}
Pierre{-}Marie P{\'{e}}drot and Nicolas Tabareau.
\newblock {Failure is Not an Option - An Exceptional Type Theory}.
\newblock In {\em {27th European Symposium on Programming}}, volume 10801 of
  {\em LNCS}, pages 245--271, Thessaloniki, Greece, April 2018. {Springer}.
\newblock \href {https://doi.org/10.1007/978-3-319-89884-1\_9}
  {\path{doi:10.1007/978-3-319-89884-1\_9}}.

\bibitem{Pedrot+Tabareau:popl:2020}
Pierre{-}Marie P{\'{e}}drot and Nicolas Tabareau.
\newblock {The Fire Triangle: How to Mix Substitution, Dependent Elimination,
  and Effects}.
\newblock {\em Proc. {ACM} Program. Lang.}, 4({POPL}):58:1--58:28, 2020.
\newblock \href {https://doi.org/10.1145/3371126} {\path{doi:10.1145/3371126}}.

\bibitem{Pedrot+Tabareau+Fehrmann+Tanter:icfp:2019}
Pierre{-}Marie P{\'{e}}drot, Nicolas Tabareau, Hans~Jacob Fehrmann, and
  {\'{E}}ric Tanter.
\newblock {A Reasonably Exceptional Type Theory}.
\newblock {\em Proc. {ACM} Program. Lang.}, 3({ICFP}):108:1--108:29, 2019.
\newblock \href {https://doi.org/10.1145/3341712} {\path{doi:10.1145/3341712}}.

\bibitem{pitts1991evaluation}
Andrew~M Pitts.
\newblock {Evaluation logic}.
\newblock In {\em IV Higher Order Workshop, Banff 1990: Proceedings of the IV
  Higher Order Workshop, Banff, Alberta, Canada 10--14 September 1990}, pages
  162--189. Springer, 1991.
\newblock \href {https://doi.org/10.1007/978-1-4471-3182-3_11}
  {\path{doi:10.1007/978-1-4471-3182-3_11}}.

\bibitem{pitts2002tripos}
Andrew~M Pitts.
\newblock {Tripos Theory in Retrospect}.
\newblock {\em Mathematical structures in computer science}, 12(3):265--279,
  2002.
\newblock \href {https://doi.org/10.1016/S1571-0661(04)00107-0}
  {\path{doi:10.1016/S1571-0661(04)00107-0}}.

\bibitem{POWER2002Premonoidal}
John Power.
\newblock {Premonoidal Categories as Categories with Algebraic Structure}.
\newblock {\em Theoretical Computer Science}, 278(1):303--321, 2002.
\newblock Mathematical Foundations of Programming Semantics 1996.
\newblock \href {https://doi.org/10.1016/S0304-3975(00)00340-6}
  {\path{doi:10.1016/S0304-3975(00)00340-6}}.

\bibitem{power1997premonoidal}
John Power and Edmund Robinson.
\newblock {Premonoidal Categories and Notions of Computation}.
\newblock {\em Mathematical structures in computer science}, 7(5):453--468,
  1997.
\newblock \href {https://doi.org/10.1017/S0960129597002375}
  {\path{doi:10.1017/S0960129597002375}}.

\bibitem{Power1997Enviroments}
John Power and Hayo Thielecke.
\newblock {Environments, Continuation Semantics and Indexed Categories}.
\newblock In Mart{\'i}n Abadi and Takayasu Ito, editors, {\em Theoretical
  Aspects of Computer Software}, pages 391--414, Berlin, Heidelberg, 1997.
  Springer Berlin Heidelberg.
\newblock \href {https://doi.org/10.1007/BFb0014560}
  {\path{doi:10.1007/BFb0014560}}.

\bibitem{RobinsonRosolini90completions}
Edmund Robinson and Giuseppe Rosolini.
\newblock {Colimit Completions and the Effective Topos}.
\newblock {\em Journal of Symbolic Logic}, 55(2):678–699, 1990.
\newblock \href {https://doi.org/10.2307/2274658} {\path{doi:10.2307/2274658}}.

\bibitem{FerEtAl17ordered}
Walter~Ferrer Santos, Jonas Frey, Mauricio Guillermo, Octavio Malherbe, and
  Alexandre Miquel.
\newblock {Ordered Combinatory Algebras and Realizability}.
\newblock {\em Mathematical Structures in Computer Science}, 27(3):428--458,
  2017.
\newblock \href {https://doi.org/10.1017/S0960129515000432}
  {\path{doi:10.1017/S0960129515000432}}.

\bibitem{Speight24}
Samuel~L. Speight.
\newblock {Groupoidal Realizability for Intensional Type Theory}.
\newblock {\em Mathematical Structures in Computer Science}, page 1–34, 2024.
\newblock \href {https://doi.org/10.1017/S0960129524000343}
  {\path{doi:10.1017/S0960129524000343}}.

\bibitem{Streicher13krivine}
Thomas Streicher.
\newblock {Krivine's Classical Realisability from a Categorical Perspective}.
\newblock {\em Mathematical Structures in Computer Science}, 23(6):1234--1256,
  2013.
\newblock \href {https://doi.org/10.1017/S0960129512000989}
  {\path{doi:10.1017/S0960129512000989}}.

\bibitem{van2008realizability}
Jaap Van~Oosten.
\newblock {\em {Realizability: an {Introduction}} to its {Categorical} {Side}},
  volume 152.
\newblock Elsevier, Amsterdam, 2008.
\newblock \href {https://doi.org/doi:10.1017/S1079898600000858}
  {\path{doi:doi:10.1017/S1079898600000858}}.

\bibitem{wadler1990comprehending}
Philip Wadler.
\newblock {Comprehending Monads}.
\newblock In {\em Proceedings of the 1990 ACM Conference on LISP and Functional
  Programming}, pages 61--78, 1990.
\newblock \href {https://doi.org/10.1145/91556.91592}
  {\path{doi:10.1145/91556.91592}}.

\bibitem{wadler1995monads}
Philip Wadler.
\newblock {Monads for FunctionalDOI:10.1017/S0960129597002375 Programming}.
\newblock In {\em International School on Advanced Functional Programming},
  pages 24--52. Springer, 1995.
\newblock \href {https://doi.org/10.1007/3-540-59451-5_2}
  {\path{doi:10.1007/3-540-59451-5_2}}.

\end{thebibliography}
\appendix
\newpage
\onecolumn
\clearpage

\appendix
\vshort{
\section*{Appendix: Elaborated \Cref{sec:turing} --- Categorical Characterization of MCAs}

\setcounter{lemma}{0}
\renewcommand{\thelemma}{\Alph{section}\arabic{lemma}}
\setcounter{definition}{0}
\renewcommand{\thedefinition}{\Alph{section}\arabic{definition}}

We first recall the formal definition of Freyd categories and some associated categorical components~\cite{POWER2002Premonoidal}. %, which are needed for our categorical construction.
% , which uses the notion of premonoidal categories~\cite{POWER2002Premonoidal}. A premonoidal category generalizes
% of the concept of monoidal category, by requiring a tensor product which is merely a functor of two variables, rather than a bifunctor.
% Formally, it is a binoidal category equipped with coherent associativity and unit natural isomorphism.
\revnote{Some further explanation is needed to motivate the
concepts of binoidal category, etc. It seems like the idea is that
the tensor is separately functorial in each argument, but not
simultaneously functorial. By that token, one might guess that this
is like a semigroup in the monoidal 1-category of (strict)
categories equipped with the so-called "funny" tensor product. If
that is so, would a "premonoidal" category be a monoid object in
(Cat,FunnyTensor)? And what is the significance of various morphisms
being central? Anyway, all this is surely fine, but some explanation
that motivates these things would help make the paper more
accessible to less specialist readers. Ultimately it seems that
these notions are aiming to build in a failure of interchange,
corresponding to the way that effects can't usually re-ordered.}
We start with binoidal categories, which capture the idea of non-commutativity of general effectful computation.
For example, given two computations $m_{1} , m_{2}$ in the state monad (\Cref{SCA}), then $\letin{x_{1}}{m_{1}}{\letin{x_{2}}{m_{2}}{\left(x_{1} , x_{2}\right)}}$ will generally not yield the same computation as $\letin{x_{2}}{m_{2}}{\letin{x_{1}}{m_{1}}{\left(x_{1} , x_{2}\right)}}$, because in the former case $m_{2}$ depends on the state modified by $m_{1}$, while in the latter case $m_{1}$  depends on the state modified by $m_{2}$, so the order in which they are sequenced matter.
Binoidal categories abstract this behavior.

\begin{definition}[Binoidal Category]
    A category $\C$ is a \emph{binoidal category} if :
    \begin{itemize}[leftmargin=0.5cm]
        \item For every  $A , B \in\C$, there is an object $A \otimes B$
        \item For every $A\in\C$, there is a functor $A \rtimes \left(-\right)$, sending morphisms in $\hom{\C}{B_{1}}{B_{2}}$ to  morphisms in $\hom{\C}{A \otimes B_{1}}{A \otimes B_{2}}$
        \item For every object $B \in \C$, a functor $\left(-\right) \ltimes B$ , sending morphisms in $\hom{\C}{A_{1}}{A_{2}}$ to  morphisms in $\hom{\C}{A_{1} \otimes B}{A_{2} \otimes B}$
    \end{itemize}
\end{definition}

A morphism $f : A \rightarrow B$ in a binoidal category $\K$ is \emph{central} if for every other morphism $u : X \rightarrow Y$ in $\K$, $\left(u \ltimes B\right) \circ \left(X \rtimes f\right) = \left(Y \rtimes f\right) \circ \left(u \ltimes A\right)$ and $\left(f \ltimes Y\right) \circ \left(A \rtimes u\right) = \left(B \rtimes u\right) \circ \left(f \ltimes X\right)$.
Given a binoidal category $\K$,  the centre of $\K$ is 
the subcategory of $\K$ consisting of all the objects of $\K$ and the central morphisms.
%%the wide subcategory spanned by central morphisms

\begin{definition}[Symmetric Premonoidal Category]
A \emph{symmetric premonoidal category} is a binoidal category equipped with an object $\mathbb{I}$  and  the following central natural isomorphisms:
%\agnote{Should we explain the notation of naturality with the two $:$s?}
\begin{description}[font=\normalfont,leftmargin=0.5cm]
    \item[Associator:]~~~~ %$\alpha
         $\alpha_{A,B,C} 
         : \left(A \otimes B\right) \otimes C \rightarrow A \otimes \left(B \otimes C\right)$ 
    \item[Left and right unitor:]~~~~ %$\lambda:
        $ \lambda_{A} : 
        \mathbb{I} \otimes A \rightarrow A $ and  $ \rho_{A} 
        : A \otimes \mathbb{I} \rightarrow A $
    % \item[Right unitor:]~~~~ %$\rho
    %     $ \rho_{A} 
    %     : A \otimes \mathbb{I} \rightarrow A $
    \item[Swap:]~~~~ %$\sigma:
        $ \sigma_{A,B} : 
       A \otimes B \rightarrow B \otimes A $
\end{description}
The above natural isomorphisms obey the triangle and pentagon coherence laws as their counterparts in a monoidal category.
\end{definition}

% \begin{definition}[Symmetric Premonoidal Category]
% A premonoidal category is \emph{symmetric}, if in addition, it has the following central isomorphism:
% \begin{itemize}[leftmargin=0.5cm]
%     \item Swap $\sigma$:
%         $ \sigma_{A,B} : A \otimes B \rightarrow B \otimes A $
% \end{itemize}
% which obeys the coherence laws as its counterpart in a monoidal category.
% \end{definition}

\begin{definition}[Freyd Category]
    A Freyd category is a triple $\langle \C, \K, \purefun \rangle$ such that:
    \begin{itemize}[leftmargin=0.5cm]
        \item $\C$ is a cartesian category 
        \item $\K$ is a symmetric premonoidal category with the same objects as $\C$
     \item $\purefun : \C \rightarrow \K$ is an identity on objects functor, strictly preserving symmetric premonoidal structure, whose image lies inside the centre of $\K$.
     \end{itemize}
     %C is called the \emph{value} category and its morphisms are called value morphisms. K is called the producer category and its morphisms are called producer morphisms.
\end{definition}

To define \emph{combinatory objects} we extend the $\aparr$ morphism of the applicative object to an $n$-ary operator.
The iterated product $A^{n}$ of an object $A$ is recursively defined as:
$A^{0} \defeq \termobj$, $A^{1} \defeq A$, and 
$A^{n+2} \defeq A \times A^{n+1}$.
\agnote{TODO: should also generalize associator}
Then, projections and pairings are similarly generalized.
The association $\alpha$ is generalized to normalize pairs of arbitrary iterated products to a single iterated product, with $\alpha^{0} \defeq \lambda$, $\alpha^{1} \defeq \id$, and $\alpha^{n+2}$ defined as the composition:
\[\begin{tikzcd}
	{A^{k+2} \times A^{m}} & {A \times \left(A^{k+1} \times A^{m}\right)} & {A^{m+k+2}}
	\arrow["{\alpha}", from=1-1, to=1-2]
	\arrow["{A \rtimes \alpha^{k+1}}", from=1-2, to=1-3]
\end{tikzcd}\]
% Given a positive integer $n \geq 2$, and a positive integer $k \leq n$, we define $\pi^{n}_{k} \in \hom{\K}{A^{n}}{A}$ recursively by: $ \pi^{n}_{1} \defeq \pi_{1}$, $\pi^{2}_{2} \defeq \pi_{2}$, and $\pi^{n+1}_{k+1} \defeq \pi_{1} \circ \pi^{n}_{k}$.
% Pairings $\pairing{f_{1} , \ldots , f_{n}}_{n} \in \hom{\K}{A}{B^{n}}$ (where $f_{i} \in \hom{\K}{A}{B}$ for each $i \in \left\{ 1 , \ldots , n \right\}$) are defined recursively by taking $\pairing{}_{0} \defeq \id_{\termobj}$ $\pairing{f}_{1} \defeq f$, and given $f_{i} \in \hom{\K}{A}{B_{i}}$ for each $i \in \left\{ 1 , 2 , \ldots , n+1 \right\}$, $\pairing{f_{1} , f_{2} , \ldots , f_{n+1}}_{n+1} \defeq \pairing{f_{1} , \pairing{f_{2} , \ldots , f_{n+1}}_{n} }$.

%\lcnote{is it true that $ \pi^{n}_{1} \defeq \pi_{1}$ for all n?}
% \begin{align*}
%     \pi^{2}_{1} &\defeq \pi_{1}\\
%     \pi^{2}_{2} &\defeq \pi_{2}\\
%     \pi^{n+1}_{1} &\defeq \pi_{1}\\
%     \pi^{n+1}_{k+1} &\defeq \pi_{1} \circ \pi^{n}_{k}
% \end{align*}
Finally, $\aparr^{n} \in \hom{\K}{\mca^{n+1}}{\mca}$ is defined by: $\aparr^{0} \defeq \rho$,
% by defining the iterated product $A^{n+1} \defeq A \times A^{n}$, generalizing projections such that $\pi_i^j$ stands for the $i$th element in a list with $j$ elements\lcnote{TODO: improve}, and then defining 
%$\aparr^{n} : \mca^{n+1} \rightarrow \mca$ by:  
$\aparr^{1} \defeq \aparr$, and taking  $\aparr^{n+2}$ to be:
\[\begin{tikzcd}
	{\mca \times \left(\mca \times \mca^{n+1}\right)} & {\left(\mca \times \mca\right) \times \mca^{n+1}} & {\mca \times \mca^{n+1}} & \mca
	\arrow["{\alpha^{-1}}", from=1-1, to=1-2]
	\arrow["{\aparr \ltimes \mca^{n+1}}", from=1-2, to=1-3]
	\arrow["{\aparr^{n+1}}", from=1-3, to=1-4]
\end{tikzcd}\]

\begin{definition}[$\mca$-monomials]
Given a applicative object $\mca$, $\mca$-monomials are defined using expressions similar to the ones in \Cref{subsection:PCA}:
$$
\expr {}::={}  i \in \mathbb{N} \mid c \in \hom{\C}{\termobj}{\mca} \mid \expr \bullet \expr \qquad\qquad
E_{n}\left(\mca\right)  {}::={}  \{ \expr \mid \text{all $i$s in $\expr$ are $< n$}\} 
$$

The evaluation function $\interp{-}_{n}  : E_{n}\left(\mca\right) \rightarrow \hom{\K}{\mca^{n}}{\mca}$ relates expressions to $\K$-morphisms:
\begin{align*}
    \interp{i}_{n} & \defeq \purefun \pi_{i+1}^{n}\\
    \interp{c}_{n} & \defeq \purefun \left(c \; \circ \; !\right)\\
    \interp{\expr_{f} \bullet \expr_{a}}_{n} &\defeq \aparr \circ \left(\mca \rtimes \interp{\expr_{a}}_{n} \right) \circ \left(\interp{\expr_{f}}_{n} \ltimes \mca^{n} \right) \circ \purefun \Delta
\end{align*}

An $\mca$-monomial is a morphism $f$ in $\K$ s.t there exists $n \in \mathbb{N}$ and a term $\expr_{f} \in E_{n}\left(\mca\right)$ for which $\interp{\expr_{f}}_{n} = f$.

\end{definition}

\begin{lemma}\label{lemma:eval-subst}
For all $n \in \mathbb{N}$, all $\expr \in E_{n}\left(\mca\right)$, and all $c_{1} , \ldots , c_{n} \in \hom{\C}{\termobj}{\mca}$:
\[ \interp{ \expr \sub{c_{1}} \cdots \sub{c_{n}} }_{0} = \interp{\expr}_{n} \circ \purefun \pairing{ c_{1} , \ldots , c_{n} }_{n} \]
\end{lemma}
\begin{proof}
    By structural induction on $\expr$.
\end{proof}

\begin{proposition}
    Let $\mca$ be an applicative object.
$\mca$ is a combinatory object iff the $\kcomb$ and $\scomb$ morphisms in $\K$ defined below are $\mca$-computable.
$$\begin{array}{l@{\hspace{0.2cm}}l}
    \kcomb : \mca^{2} \rightarrow \mca & 
    \kcomb \defeq \purefun \pi_{1} \\
    \scomb : \mca^{3} \rightarrow \mca &
    \scomb \defeq \aparr \circ \left(\mca \rtimes \aparr\right) \circ \left(\aparr \ltimes \mca^{2} \right) \circ \purefun \pair{\pair{\pi^{3}_{1}}{\pi^{3}_{3}}}{\pair{\pi^{3}_{2}}{\pi^{3}_{3}}}
\end{array}$$
\end{proposition}

\begin{proof}
    Let $\mca$ be an applicative object.
    Assuming all positive $\mca$-monomials are $\mca$-computable, we consider the expressions $\expr_{\kcomb} \in E_{2}\left(\mca\right)$ and $\expr_{\scomb} \in E_{3}\left(\mca\right)$:
$$
        \expr_{\kcomb} \defeq 0 \quad  \quad
        \expr_{\scomb} \defeq \left(0 \bullet 2\right) \bullet \left(1 \bullet 2\right)
$$
    Since $\interp{\expr_{\kcomb}}_{2} = \kcomb$ and $\interp{\expr_{\scomb}}_{3} = \scomb$, then $\kcomb$ and $\scomb$ are $\mca$-monomials, and thus $\mca$-computable.

    Conversely, if $\kcomb$ and $\scomb$ are $\mca$-computable, let $\kcode$ and $\scode$ be their corresponding codes (with their associated codes of partial applications).
    For every term $\expr \in E_{n+1}\left(\mca\right)$, we define a code $\encode{n}{\expr} \in \hom{\C}{\termobj}{\mca}$ using the following bracket abstraction algorithm.
    First, for every $\expr \in E_{n+1}\left(\mca\right)$, we define a closed abstraction expression $\absterm{n}{\expr} \in E_{0}\left(\mca\right)$ as follows:
$$
\begin{array}{l@{\qquad}|>{\qquad}l}
    \absterm{n}{0} \defeq K_{n} &
    \absterm{n+1}{j+1} \defeq \kcode \bullet \absterm{n}{j}\\
    \absterm{n}{c} \defeq K_{n+1} \bullet c &
    \absterm{n}{\expr_{1} \bullet \expr_{2}} \defeq S_{n+1} \bullet \absterm{n}{\expr_{1}} \bullet \absterm{n}{\expr_{2}}
\end{array}$$
%For any $n \in \mathbb{N}$.
The definition uses $K_{n}$ and $S_{n}$, which are the $n$-ary $K$ and $S$ combinators~\cite{Goldberg-Mayer:2015},  
 where 
  $  B  \defeq \scode \bullet \left( \kcode \bullet \scode \right) \bullet \kcode$:
$$
\begin{array}{l@{\qquad\quad}l@{\qquad\quad}l}
     K_{0} \defeq \scode \bullet \kcode \bullet \kcode 
     & K_{1} \defeq \kcode & 
     K_{n+2} \defeq B \bullet \kcode \bullet K_{n+1}
     \\
     S_{0} \defeq \scode \bullet \kcode \bullet \kcode & 
     S_{1} \defeq \scode &
    S_{n+2} \defeq B \bullet \scode \bullet \left( B \bullet S_{n+1}\right)
\end{array}
$$
Since $\kcomb$ and $\scomb$ are $\mca$-computable, the axioms ensure that for every $\expr \in E_{n+1}\left(\mca\right)$ there is a code $\encode{n}{\expr} \in \hom{\C}{\termobj}{\mca}$ such that $\interp{\absterm{n}{\expr}}_{0} = \purefun \encode{n}{\expr}$, and for all $c_{1} , \ldots , c_{k} \in \hom{\C}{\termobj}{\mca}$ (where $k<n$) the following hold:
\begin{align}  
        \interp{\encode{n-k}{ \expr\sub{c_{1}}\cdots\sub{c_{k}}} \bullet 0 \bullet \cdots \bullet n-k-1}_{n-k} &= \interp{ \expr\sub{c_{1}}\cdots\sub{c_{k}} }_{n-k}
        \\
            \interp{\encode{n}{\expr} \bullet c_{1} \bullet \cdots \bullet c_{k}}_{0} &= \interp{\encode{n-k}{ \expr\sub{c_{1}}\cdots\sub{c_{k}}}}_{0}
\end{align}
Using \Cref{lemma:eval-subst}, the above equations correspond to the square and triangle diagrams of \Cref{def:computable}.
Hence we take $\code{f\left(c_{1},\ldots,c_{k}\right)}$ to be a $\encode{n-k}{\expr_{f}\sub{c_{1}}\cdots\sub{c_{k}}}$, for which it is straightforward to verify the combinatory object conditions.
% To prove 1 and 2, we first prove the following properties, for every $n \in \mathbb{N}$, by induction on $n$:
% \begin{align*}
%     \interp{ K_{n} \bullet 0 \bullet 1 \bullet \cdots \bullet n }_{n+1} &= \interp{ 0 }_{n+1}\\
%     \interp{ S_{n} \bullet 0 \bullet 1 \bullet 2 \bullet \ldots \bullet n+1 }_{n+2} &= \interp{ 0 \bullet 2 \bullet \ldots \bullet n+1 \bullet \left( 1 \bullet 2 \bullet \ldots \bullet n+1\right) }_{n+2}\\
% \end{align*}
% The rest follows by structural induction on $\expr$, using \Cref{lemma:eval-subst}.
\end{proof}

Next we relate the set-based MCAs to the abstract categorical notion of combinatory objects.
For this, we recall that the Kleisli category of every strong monad is a premonoidal category~\cite{power1997premonoidal}. 
Here, we work in $\setcat$, where every monad is strong.
Together, this entails that for every monad, the Kleisli category gives rise to a Freyd category.
Since the details of the construction are relevant to our result, we here provide them below.

% [\cite{LEVY2003Freyd}]
\begin{lemma}
Given a $\setcat$ monad $\monad$, the triple $\langle\setcat , \setcat_{\monad}, \purefun_{\monad}\rangle$
forms a Freyd category, where $\setcat_{\monad}$ is the Kleisli category of $\monad$ over $\setcat$,  and $\purefun_{\monad} : \setcat \rightarrow \setcat_{\monad}$ is the canonical functor defined by $\purefun_{\monad}\left(f\right) = \eta \circ f$.
%, the triple $\langle\setcat , \setcat_{\monad}, \purefun_{\monad}\rangle$ forms a Freyd category.
\end{lemma}
\begin{proof}
%The Kleisli category of every strong monad is a Freyd category.
%We refer to \cite{power1997premonoidal} for the full proof as pertaining for strong monads.
Given a monad $\monad$, the constructs making $\left(\setcat , \setcat_{\monad}, \purefun_{\monad} \right)$ a Freyd category are:
\begin{itemize}[leftmargin=0.5cm]
    \item The cartesian structure of $\setcat$ is given by the cartesian product of sets $\times$
    \item For % every sets $A_{1}, A_{2}, B$ and 
    every function $f : A_{1} \rightarrow \monad A_{2}$: 
    $ \left(f \ltimes B \right)\left(x_{1},y\right) = \letin{x_{2}}{f\left(x_{1}\right)}{\return{ \left(x_{2}, y\right) } } $
    \item For %every sets $A, B_{1}, B_{2}$ and 
    every function $f : B_{1} \rightarrow \monad B_{2}$: 
    $\left(A \rtimes f\right)\left(x,y_{1}\right) = \letin{y_{2}}{f\left(y_{1}\right)}{\return{ \left(x, y_{2}\right) } } $
    \item The associator, left unitor, right unitor, and swap  are given by composing $\eta$ on the associator, left unitor, right unitor, and swap of the underlying cartesian structure of $\setcat$.
\end{itemize}
The Freyd requirements for this construction are easily verified.
\end{proof}

%\agnote{TODO:add more text to explain the transition from codomain A to MA.}

Since the application morphism of an applicative system is a morphism in $\K$, which is the category of computations in a Freyd category, it corresponds to the Kleisli application of an MAS.
%With that, we can make the next claim. \lcnote{in the lemma we should also use "is the same thing"}

\begin{lemma}\label{lem:MAS}
Let $\monad$ be a $\setcat$ monad.
    $\mca$ is a MAS over $\monad$ if and only if it is an applicative object in $\setcat_{\monad}$.
\end{lemma}
\begin{proof}
    The application morphism of the applicative object is the application Kleisli function of the MAS: 
    $\aparr\left(c_{1} , c_{2}\right) = c_{1} \app c_{2}$. %\lcnote{A/MA - Kleisli?}
\end{proof}

\combObj*

\begin{proof}
We use the $\scomb$ and $\kcomb$ formulation of an MCA given in~\Cref{prop:SK} to prove the equivalence to $\scomb$ and $\kcomb$ being $\mca$-computable. 
Using the same set of codes $\scode , \kcode , \scodeA{c_{1}} , \scodeB{c_{1}}{ c_{2}}$, and $\kcodeA{c_{1}}$ for any two codes $c_{1}$, $c_{2}$, it is easy to verify that the equations in~\Cref{prop:SK} are equivalent to those in~\Cref{def:computable}.
\end{proof}

The categorical definition of a PCA-object in Turing categories relies on a specific notion of products, which, in turn, relies on a general notion of restrictions~\cite{COCKETT2008}.\label{sec:restriction}
A restriction structure is a convenient way of handling partiality in category theory.
Roughly speaking, considering morphisms as partial maps, a restriction assigns to each partial map $f$ a partial identity map $\restrict{f}$ with the same domain as $f$.
To relate to that definition, we here show that cartesian restriction categories are a special case of Freyd categories.  

\begin{definition}[Restriction structure]
A \emph{restriction structure} on a category $\C$ assigns to every morphism $f : A \rightarrow B$ a morphism $\restrict{f} : A \rightarrow A$  s.t. the following hold:
\begin{itemize}[leftmargin=0.5cm]
    \item For % $A,B \in \obj{\C}$, and 
    $f : A \rightarrow B$:
    $ f \circ \restrict{f} = f $
    \item For  %$A,B_{1},B_{2} \in \obj{\C}$, 
    $f : A \rightarrow B_{1}$ and $g : A \rightarrow B_{2}$:
     $\restrict{f} \circ \restrict{g} = \restrict{g} \circ \restrict{f}$ and $ \restrict{f \circ \restrict{g}} = \restrict{f} \circ \restrict{g}$ 
%    \item For % $A,B_{1},B_{2} \in \obj{\C}$, 
%    $f : A \rightarrow B_{1}$ and $g : A \rightarrow B_{2}$:
%     $\restrict{f \circ \restrict{g}} = \restrict{f} \circ \restrict{g}$ 
    \item For % $A,B,C \in \obj{\C}$, 
    $f : B \rightarrow C$ and $g : A \rightarrow B$:
     $\restrict{f} \circ g = g \circ \restrict{f \circ g} $
\end{itemize}
\end{definition}

%For every category, there is a \emph{trivial} restriction that assigns $\restrict{f} = \id$ for every morphism $f$.
A morphism $f$ is \emph{total} when $\restrict{f} = \id$.
The total morphsims in a category $\C$ form a subcategory of $\C$, called $\text{Tot}\left(\C\right)$.
A restriction in $\C$ induces a partial order on the hom-sets of $\C$ as follows 
%Given two morphisms $f , g : A \rightarrow B$, we write
$ f \leq g  \iff f = g \circ \restrict{f}$.
%The restriction axioms ensure that this is indeed a partial order. The reflexivity and transitivity axioms ensure reflexivity and transitivity, and the commutativity axiom, combined with reflexivity, ensures antisymmetry.
%
A \emph{cartesian restriction category} is a category with a restriction structure along with restriction cartesian products and restriction terminal objects which are  weaker variants of the standard notions that still form a monoidal category~\cite{cockett2007restriction}.
In a cartesian restriction category $\C$, the subcategory $\text{Tot}\left(\C\right)$ is a cartesian subcategory of $\C$.

\begin{proposition}
    Cartesian restriction categories induce Freyd categories. 
\end{proposition}

\begin{proof}
    Let $\C$ be a cartesian restriction categories. The induced Freyd category is defined by $\langle \text{Tot}\left(\C\right) , \C  , \purefun \rangle$
    where $\purefun$ is the inclusion functor of $\text{Tot}\left(\C\right)$ into $\C$.
It is straightforward to verify that it indeed forms a Freyd category.
\end{proof} 

The definition of a PCA-object in a cartesian restriction category~\cite{COCKETT2008}, using S and K combinators, is the exact counterpart of our definition of a combinatory object in a Freyd category, by merely replacing morphisms of a Freyd category with their counterparts in a cartesian restriction category.
In fact, the additional requirement needed in that formulation, namely that the code morphism has to be total, is subsumed by taking it from the underlying cartesian category  through the functor given in the definition of the Freyd category.

\begin{proposition}
In Freyd categories induced by cartesian restriction categories, 
combinatory objects are PCA-objects, as defined in~\cite{COCKETT2008}.
\end{proposition}

}
\vlong{
\appendix
\section{Proof of \Cref{pca_mca}}
\begin{proof}
A PCA is obtained from an MCA by instantiating the monad with the sub-singleton monad, i.e.  $M A$ is the set of subsets of $A$ in which all elements are equal.
\[
  \monad A = \{S\subseteq A \mid \forall x_{1},x_{2}\in S. x_{1} = x_{2} \}
 \qquad\qquad
\eta_A\left(x\right) = \{x\}
 \qquad\qquad
 \mu_A\left(\mathbb{X}\right) = \bigcup_{X\in\mathbb{X}} X
\]
The sub-singleton monad, of sets with at most one element, allows for the interpretation of deterministic binary relations as functions, where the codomain is in the monad.
%A binary relation $R \subseteq A \times B$ is deterministic when for any $x \in A$ and any $y_{1} , y_{2} \in B$, if both $\left(x,y_{1}\right) \in R$ and $\left(x,y_{2}\right) \in R$, then $y_{1} = y_{2}$.
Given a deterministic relation $R \subseteq A \times B$, it is equivalent to a function $\widehat{R} : A \rightarrow \{S\subseteq B \mid \forall y_{1},y_{2}\in S. y_{1} = y_{2} \}$ where $\widehat{R}(a)= \{b\}$ if $R(a,b)$ and $\widehat{R}(a)=\emptyset$ otherwise.
Because the evaluation relation is determinstic, using the sub-singleton monad to define an MCA, the MCA laws become the laws of a PCA.
That is,  instead of using $c_{f} \app c_{a} \downarrow c$ and $e \downarrow c$ in a PCA, we use $c_{f} \app c_{a} = \left\{ c \right\}$ and $\nu\left(e\right) = \left\{ c \right\}$ in the MCA, respectively. We note that, when working in a classical metatheory, PCAs can also be obtained by instantiating MCAs with the maybe monad.
%\lcnote{nice to have: remark about maybe}
\end{proof}

\section{Proof of~\Cref{prop:SK}}
\begin{proof}
    Let $\mca$ be a monadic applicative structure. 
    If $\mca$ is an MCA then  the $\scode$ and~$\kcode$ combinators are the codes $\encode{2}{(0 \app 2) \app (1 \app 2)}$ and~$\encode{1}{0}$ modeling the $\lambda$-calculus terms $\lambda x.\lambda y.\lambda z.\,(x~z)~(y~z)$ and~$\lambda x.\lambda y.x$, respectively.
    
    Conversely, if $\mca$ has $\scode$ and $\kcode$ combinators satisfying the axioms, then for every term $e \in E_{n+1}\left(\mca\right)$, we can define a code $\encode{n}{e}$ using the following bracket abstraction algorithm.
    First, we define a closed abstraction term $\absterm{n}{e} \in E_{0}\left(\mca\right)$ for every $n \in \mathbb{N}$ and any $e \in E_{n+1}\left(\mca\right)$ as follows:
$$
\begin{array}{l@{\qquad}|>{\qquad}l}
    \absterm{n}{0} \defeq K_{n} &
    \absterm{n+1}{j+1} \defeq \kcode \bullet \absterm{n}{j}\\
    \absterm{n}{c} \defeq K_{n+1} \bullet c &
    \absterm{n}{e_{1} \bullet e_{2}} \defeq S_{n+1} \bullet \absterm{n}{e_{1}} \bullet \absterm{n}{e_{2}}
\end{array}$$
%For any $n \in \mathbb{N}$.
The definition uses $K_{n}$ and $S_{n}$, which are the $n$-ary $K$ and $S$ combinators~\cite{Goldberg-Mayer:2015},  
 where 
  $  B  \defeq \scode \bullet \left( \kcode \bullet \scode \right) \bullet \kcode$:
$$
\begin{array}{l@{\qquad\quad}l@{\qquad\quad}l}
     K_{0} \defeq \scode \bullet \kcode \bullet \kcode 
     & K_{1} \defeq \kcode & 
     K_{n+2} \defeq B \bullet \kcode \bullet K_{n+1}
     \\
     S_{0} \defeq \scode \bullet \kcode \bullet \kcode & 
     S_{1} \defeq \scode &
    S_{n+2} \defeq B \bullet \scode \bullet \left( B \bullet S_{n+1}\right)
\end{array}
$$
The axioms ensure there is a code $c_{n , e} \in \mca$ for any $n \in \mathbb{N}$ and any $e \in E_{n+1}\left(\mca\right)$ such that $\nu\left(\absterm{n}{e}\right) = \return{c_{n , e}}$, hence we take $\encode{n}{e}$ to be a $c_{n , e}$, for which it is straightforward to verify the MCA conditions.
% To verify that it satisfies the equations of \Cref{mca}, we check each case:
    % such as the one in \cite{turner1979another}.
    % If $\mca$ has the $\scomb$ and $\kcomb$ combinators, then for every term $e \in E_{n+1}\left(\mca\right)$, we can define a code $\encode{n}{e}$\lcnote{TODO: either encode or explain in words}, for which it is 
    % straightforward to verify that the MCA laws hold\lcnote{is it straightforward? or is there smt interesting w return and bind?}\agnote{I believe it is straightforward but I'll write it down later if we have time}.
\end{proof}

\clearpage
\setcounter{lemma}{0}
\renewcommand{\thelemma}{\Alph{section}\arabic{lemma}}

\section{Elaborated \Cref{sec:turing} --- Categorical Characterization of MCAs}

\clearpage
\section{Elaborated~\Cref{sec:mcaef}: From Effect Shells to Evidenced Frames}

We start by providing a series of useful lemmas. 

\begin{restatable}{lemma}{extmon}\label{lemma:after-mono}
    For all $A$, $\predA , \predB : A \rightarrow \Omega$, and $m \in M\left(A\right)$: 
$$\begin{array}{c}       \dfrac{%\forall x \in A . 
\predA\left(x\right) \leq \predB\left(x\right)}{\after{x}{m}{\predA\left(x\right)}\leq\after{x}{m}{\predB\left(x\right)}}
% \\~\\
%         \after{x}{m}{\theta \haimp \predB\left(x\right)} \leq \theta \haimp \after{x}{m}{\predB\left(x\right)}
    \end{array}$$
\end{restatable}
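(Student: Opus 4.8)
The plan is to derive this \textbf{external monotonicity} property directly from the \textbf{Internal Monotonicity} axiom of the $M$-modality, using the standard adjunction of Heyting implication. Recall that in a complete Heyting prealgebra one has $a \leq b$ if and only if $\boldsymbol{1} \leq a \haimp b$ (this is the instance $c = \boldsymbol{1}$ of the adjunction $c \meet a \leq b \iff c \leq a \haimp b$, together with $\boldsymbol{1} \meet a = a$), and that if $\boldsymbol{1} \leq x_c$ holds for every index $c$ then $\boldsymbol{1} \leq \infimum_c x_c$. The whole point of phrasing monotonicity \emph{internally} (as an inequality between truth values rather than a side-condition) is precisely to make this routine derivation available, so I expect the argument to be short.

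Concretely, I would proceed in three steps. First, from the hypothesis $\predA(x) \leq \predB(x)$ for all $x$, the Heyting adjunction gives $\boldsymbol{1} \leq \predA(c) \haimp \predB(c)$ for every $c \in \mca$; taking the infimum over all $c$ then yields $\boldsymbol{1} \leq \infimum_{c}\left(\predA(c) \haimp \predB(c)\right)$. Second, I would chain this with the Internal Monotonicity axiom,
\[
\boldsymbol{1} \;\leq\; \infimum_{c}\left(\predA(c) \haimp \predB(c)\right) \;\leq\; \after{x}{m}{\predA(x)} \haimp \after{x}{m}{\predB(x)},
\]
obtaining $\boldsymbol{1} \leq \after{x}{m}{\predA(x)} \haimp \after{x}{m}{\predB(x)}$ by transitivity of the preorder. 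Third, I would apply the adjunction in the reverse direction to conclude $\after{x}{m}{\predA(x)} \leq \after{x}{m}{\predB(x)}$, which is the desired inequality.

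The main (and essentially only) obstacle is bookkeeping around the fact that $\Omega$ is a \emph{prealgebra} rather than a genuine algebra: the equivalences above hold up to the preorder $\leq$, so one must be slightly careful that the adjunction and the infimum laws are stated preorder-theoretically, and that completeness of $\Omega$ is what guarantees the infimum $\infimum_c$ actually exists. None of this requires any genuine reasoning about the monad $M$ itself — the effectful content is entirely absorbed into the Internal Monotonicity axiom — so no case analysis on $m$ or on the structure of $M$ is needed.
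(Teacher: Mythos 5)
Your proposal is correct and follows exactly the paper's own argument: curry the pointwise hypothesis to get $\boldsymbol{1} \leq \predA(x) \haimp \predB(x)$, pass to the infimum, chain with the Internal Monotonicity axiom, and uncurry to conclude. The paper's proof is this same three-step derivation, so there is nothing to add beyond noting that your care about the preorder-theoretic (rather than equational) reading of the adjunction is indeed all the bookkeeping required.
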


    \begin{proof}
    Assume for all $ x \in A $, $ \predA\left(x\right) \leq \predB\left(x\right)$. By currying, we get:
    \[\boldsymbol{1} \leq \predA\left(x\right) \haimp \predB\left(x\right) \]
    Since this holds for all $x \in A$, then:
    \begin{align*}
        & \boldsymbol{1} \leq \infimum_{x \in A}\left(\predA\left(x\right) \haimp \predB\left(x\right)\right)
        \leql{\eqref{eq:after-inter-mono}} \after{x}{m}{\predA\left(x\right)} \haimp \after{x}{m}{\predB\left(x\right)}
    \end{align*}
    From which, by uncurrying, we obtain:
    \[ \after{x}{m}{\predA\left(x\right)} \leq \after{x}{m}{\predB\left(x\right)}. \]
    \qedhere
\end{proof}

\begin{lemma}\label{lemma:after-imp}\label{lemma:after-conj}
%\lcnote{where is the first one used?}
    For all $A$, $\pred : A \rightarrow \Omega$, $\theta \in \Omega$, and $m \in M\left(A\right)$:
    \begin{enumerate}
        \item 
$    \after{x}{m}{\left(\theta \haimp \pred\left(x\right)\right)} \leq \theta \haimp \after{x}{m}{\pred\left(x\right)}$
    \item 
$    \theta \meet \after{x}{m}{\pred\left(x\right)} \leq \after{x}{m}{\left(\theta \meet \pred\left(x\right)\right )}$
    \end{enumerate}
\end{lemma}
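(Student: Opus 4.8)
The plan is to derive both inequalities from \textbf{Internal Monotonicity}~\eqref{eq:after-inter-mono} together with the residuation (adjunction) law of the Heyting implication in $\Omega$, namely that $a \meet b \leq c$ holds if and only if $a \leq b \haimp c$. In each case the idea is the same: pick two value-predicates $\predA, \predB : A \rightarrow \Omega$ so that internal monotonicity yields a comparison between $\after{x}{m}{\predA(x)}$ and $\after{x}{m}{\predB(x)}$, chosen so that the hypothesis $\theta \leq \infimum_{c}\left(\predA(c) \haimp \predB(c)\right)$ collapses to a triviality. All the reasoning then stays inside the complete Heyting prealgebra $\Omega$; no monad-specific facts are needed.

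For the first inequality, I would set $\predA(x) \defeq \theta \haimp \pred(x)$ and $\predB(x) \defeq \pred(x)$. The pointwise modus-ponens bound $\theta \meet (\theta \haimp \pred(x)) \leq \pred(x)$ rephrases, by residuation, as $\theta \leq \predA(x) \haimp \predB(x)$; since this holds uniformly in $x$, taking the infimum over all codes gives $\theta \leq \infimum_{c}\left(\predA(c) \haimp \predB(c)\right)$. Feeding this into~\eqref{eq:after-inter-mono} yields $\theta \leq \after{x}{m}{(\theta \haimp \pred(x))} \haimp \after{x}{m}{\pred(x)}$, and two further applications of residuation rearrange this into the claimed $\after{x}{m}{(\theta \haimp \pred(x))} \leq \theta \haimp \after{x}{m}{\pred(x)}$.

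For the second inequality, I would instead take $\predA(x) \defeq \pred(x)$ and $\predB(x) \defeq \theta \meet \pred(x)$. Here the lower bound is even cheaper: $\theta \leq \pred(x) \haimp (\theta \meet \pred(x))$ is just residuation applied to the tautology $\theta \meet \pred(x) \leq \theta \meet \pred(x)$. Internal monotonicity then gives $\theta \leq \after{x}{m}{\pred(x)} \haimp \after{x}{m}{(\theta \meet \pred(x))}$, and a single uncurrying produces $\theta \meet \after{x}{m}{\pred(x)} \leq \after{x}{m}{(\theta \meet \pred(x))}$.

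The only genuine subtlety—and the step I would double-check—is the bookkeeping of the residuation law against the exact quantifier precedence fixed in \Cref{def:modality}: internal monotonicity is stated with the indexed meet $\infimum_c$ ranging over all codes, so I must confirm that each pointwise bound really holds for every $c$ before taking the infimum (it is uniform in $x$, so this is immediate). Everything else is routine Heyting-algebra manipulation, with \Cref{lemma:after-mono} available as a fallback should a plain monotonicity step be needed. I expect no topological or monad-specific reasoning to enter.
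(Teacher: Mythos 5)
Your proof is correct and follows essentially the same route as the paper's: both parts reduce to Internal Monotonicity by first establishing the pointwise bound $\theta \leq \predA(x) \haimp \predB(x)$ for your choice of $\predA,\predB$ (the paper derives it via curry/uncurry from reflexivity, which is just a rephrasing of your modus-ponens/residuation step), taking the infimum, and then rearranging by residuation. The only nit is that the infimum in Internal Monotonicity ranges over the arbitrary set $A$, not over codes in $\mca$, but since your bound is uniform in $x \in A$ this changes nothing.
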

\begin{proof} ~
\begin{enumerate}
    \item 
    By reflexivity, we have:
    \[ \forall x \in A . \theta \haimp \pred\left(x\right) \leq \theta \haimp \pred\left(x\right) \]
    by uncurrying $\theta$ and then currying $\theta \haimp \pred\left(x\right)$, we get:
    \[ \forall x \in A . \theta \leq \left(\theta \haimp \pred\left(x\right)\right) \haimp  \pred\left(x\right) \]
    so by the infimum property we get:
    \begin{align*}
        % & 
        \theta  \leq \infimum_{x \in A} \left(\left(\theta \haimp \pred\left(x\right)\right) \haimp \pred\left(x\right) \right)
        % \\
        % & 
        \leql{\eqref{eq:after-inter-mono}} \after{x}{m}{\left(\theta \haimp \pred\left(x\right)\right)} \haimp \after{x}{m}{ \pred\left(x\right)}
    \end{align*}
    Now, by uncurrying $\after{x}{m}{\left(\theta \haimp \pred\left(x\right)\right)}$ and currying $\theta$ we get:
    \[ \after{x}{m}{\left(\theta \haimp \pred\left(x\right)\right)} \leq \theta \haimp \after{x}{m}{ \pred\left(x\right)} \]
    \item 
    By reflexivity, we have:
    \[ \forall x \in A . \theta \meet \pred\left(x\right) \leq \theta \meet \pred\left(x\right) \]
    by currying $\pred\left(x\right)$ we get:
    \[ \forall x \in A . \theta \leq \pred\left(x\right) \haimp \left(\theta \meet \pred\left(x\right)\right) \]
    so by the infimum property we get:
    \begin{align*}
        % & 
        \theta  \leq \infimum_{x \in A} \left( \pred\left(x\right) \haimp \left(\theta \meet \pred\left(x\right)\right) \right)
        % \\
        % & 
        \leql{\eqref{eq:after-inter-mono}} \after{x}{m}{\pred\left(x\right)} \haimp \after{x}{m}{\left(\theta \meet \pred\left(x\right)\right)}
    \end{align*}
    Now, by uncurrying $\after{x}{m}{\pred\left(x\right)}$ we get:
    \[ \theta \meet \after{x}{m}{\pred\left(x\right)} \leq \after{x}{m}{\left( \theta \meet \pred\left(x\right) \right)} \]
    \qedhere
\end{enumerate}
\end{proof}

%%%%%%%%%%%%

\MCAEF*

\begin{proof}
We define the logical and program components and show that they satisfy the required properties in each case.
We use the abbreviations: $\code{\codef{p_{1}}} \defeq \encode{1}{0}$ and $\code{\codef{p_{2}}} \defeq \encode{1}{1}$.
\begin{description}[leftmargin=*]
    \item[Reflexivity:]
    Take $\eid \defeq \encode{0}{0}$. 
    We  need to show that $$ \evexpand{\pred}{\codeid}{\pred}.$$
    For $c \in \mca$:
    \[ \pred\left(c\right) 
         \leq  \after{r}{\return{c}}{\pred\left(r\right)} = \after{r}{\encode{0}{0} \app c}{\pred\left(r\right)} 
        = \after{r}{\codeid \app c}{\pred\left(r\right)} \]
    
    \item[Transitivity:] 
    Take $\ecomp{e_1}{e_2} \defeq \encode{0}{e_{2} \bullet \left(e_{1} \bullet 0 \right)}$. 
    We need to show: 
        \[ \dfrac{
        \begin{array}{c}
            \evexpand{\predA}{e_{s}}{\predB}\\
            \evexpand{\predB}{e_{t}}{\predC}
        \end{array}}
    {\evexpand{\predA}{\code{e_{s} ; e_{t}}}{\predC}} \]
    Given the assumptions, for $c \in \mca$:
    \[
    \begin{array}{cl}
         & \predA\left(c\right)\\
         \leq & \after{s}{e_{s} \app c}{\predB\left(s\right)}\\
         \leql{\Cref{lemma:after-mono}} & \after{s}{e_{s} \app c}{\after{r}{e_{t} \app s}{\predC\left(r\right)}}\\
         \leql{\text{A.Bind}} & \after{r}{\left(\letin{s}{e_{s} \app c}{e_{t} \app s}\right)}{\predC\left(r\right)}\\
         = & \after{r}{\encode{0}{e_{t} \bullet \left(e_{s} \bullet 0\right)} \app c}{\predC\left(r\right)} 
     = \after{r}{\code{e_{s} ; e_{t}} \app c}{\predC\left(r\right)}
    \end{array}
    \]
    
    \item[Top:] Take $\top \defeq \lambda e.\boldsymbol{1}$ and $\etrue \defeq \eid$. 
    We need to show: 
    \[ \evexpand{\pred}{\codeid}{\top} \]
    For $c \in \mca$:
    \begin{align*}
        % & 
        \pred\left(c\right) \leq \boldsymbol{1} \leql{\text{A.Return}} \after{r}{\return{c}}{\boldsymbol{1}}
        % \\
        % & \qquad 
        = \after{r}{\return{c}}{\top\left(r\right)} = \after{r}{\codeid \app c}{\top\left(r\right)}
    \end{align*}

    \item[Conjunction:] Take:
    $$\begin{array}{r@{\hspace{0.05cm}}c@{\hspace{0.05cm}}l}
    \left(\phi_1 \emeet \phi_2 \right) \left(e\right) &\defeq& \after{c_{1}}{e \app \codepA}{\phi_{1}\left(c_{1}\right)} \meet \after{c_{2}}{e \app \codepB}{\phi_{2}\left(c_{2}\right)} \\  
    \epair{e_1}{e_2} &\defeq & \encode{1}{1 \bullet \left(e_{1} \bullet 0\right) \bullet \left(e_{2} \bullet 0\right)} \\
    \efst &\defeq & \encode{0}{0 \bullet \codepA} \\
     \esnd &\defeq & \encode{0}{0 \bullet \codepB}
    \end{array}$$
    For \emph{intro}, 
    we need to show:
        \[ \dfrac{
        \begin{array}{c}
            \evexpand{\pred}{e_{1}}{\predA}\\
            \evexpand{\pred}{e_{2}}{\predB}
        \end{array}}
    {\evexpand{\pred}{\codepair{e_{1}}{e_{2}}}{\left(\predA \emeet \predB\right)}} \]
    By combining the assumptions, we get, for any $c \in \mca$:
    \[
    \begin{array}{cl}
         & \pred\left(c\right)\\
         \leq & \after{c_{2}}{e_{2} \app c}{\predB\left(c_{2}\right)} \meet  \after{c_{1}}{e_{1} \app c}{\predA\left(c_{1}\right)}\\
         \leql{\Cref{lemma:after-conj}} & \after{c_{1}}{e_{1} \app c}{\left(\after{c_{2}}{e_{2} \app c}{\predB\left(c_{2}\right)} \meet \predA\left(c_{1}\right)\right)}\\
         \leql{\Cref{lemma:after-mono}} & \after{c_{1}}{e_{1} \app c}{\left(\predA\left(c_{1}\right) \meet \after{c_{2}}{e_{2} \app c}{\predB\left(c_{2}\right)}\right)}\\
         \leql{\Cref{lemma:after-mono} + \Cref{lemma:after-conj}} & \after{c_{1}}{e_{1} \app c}{\left(\after{c_{2}}{e_{2} \app c}{\left(\predA\left(c_{1}\right) \meet \predB\left(c_{2}\right)\right)}\right)}
    \end{array}
    \]
    
    Now, by using \Cref{lemma:after-mono} twice over each of the projections of $\meet$, we get:
    \begin{enumerate}
        \item $ \pred\left(c\right) \leq \after{c_{1}}{e_{1} \app c}{\after{c_{2}}{e_{2} \app c}{\predA\left(c_{1}\right)}} $

        \item $ \pred\left(c\right) \leq \after{c_{1}}{e_{1} \app c}{\after{c_{2}}{e_{2} \app c}{\predB\left(c_{2}\right)}} $
    \end{enumerate}

    Using (1) we obtain: 
    \[
    \begin{array}{cl}
        & \pred\left(c\right)\\
        \leq & \after{c_{1}}{e_{1} \app c}{\after{c_{2}}{e_{2} \app c}{\predA\left(c_{1}\right)}}\\
        \leql{\Cref{lemma:after-mono} +             \text{A.Return}} & \after{c_{1}}{e_{1} \app c}{\after{c_{2}}{e_{2} \app c}{ \after{r_{1}}{\return{c_{1}}}{\predA\left(r_{1}\right)} }}\\
        \leql{\Cref{lemma:after-mono} + \text{A.Bind}} & \after{c_{1}}{e_{1} \app c}{ \after{r_{1}}{\left( \letin{c_{2}}{e_{2} \app c}{\return{c_{1}}} \right)}{ \predA\left(r_{1}\right) } }\\
        \leql{\text{A.Bind}}  & \after{r_{1}}{\left(\letin{c_{1}}{e_{1} \app c}{ \letin{c_{2}}{e_{2} \app c}{\return{c_{1}}} }\right)}{\predA\left(r_{1}\right)}\\
        = & \after{r_{1}}{ \encode{0}{0 \bullet \left(e_{1} \bullet c\right)\bullet \left(e_{2} \bullet c\right)} \app \encode{1}{0}  }{\predA\left(r_{1}\right)}\\
        = & \after{r_{1}}{ \encode{0}{0 \bullet \left(e_{1} \bullet c\right)\bullet \left(e_{2} \bullet c\right)} \app \codepA  }{\predA\left(r_{1}\right)}
    \end{array}
    \]
    
    Similarly, from (2) we obtain:
    \[ \pred\left(c\right) \leq \after{r_{2}}{ \encode{0}{0 \bullet \left(e_{1} \bullet c\right)\bullet \left(e_{2} \bullet c\right)} \app \codepB  }{\predB\left(r_{2}\right)} \]

    Combining both, we get:
    \[
    \begin{array}{cl}
         & \pred\left(c\right)\\
         \leq & \hphantom{\meet \;} \after{r_{1}}{ \encode{0}{0 \bullet \left(e_{1} \bullet c\right)\bullet \left(e_{2} \bullet c\right)} \app \codepA  }{\predA\left(r_{1}\right)}\\
         &  
         \meet \; \after{r_{2}}{ \encode{0}{0 \bullet \left(e_{1} \bullet c\right)\bullet \left(e_{2} \bullet c\right)} \app \codepB  }{\predB\left(r_{2}\right)}\\
         = & \left(\predA \emeet \predB\right)\left(\encode{0}{0 \bullet \left(e_{1} \bullet c\right)\bullet \left(e_{2} \bullet c\right)}\right)\\
         \leql{\text{A.Return}} & \after{r}{\return{ \encode{0}{0 \bullet \left(e_{1} \bullet c\right)\bullet \left(e_{2} \bullet c\right)} }}{ \left(\predA \emeet \predB\right)\left(r\right) }\\
         = & \after{r}{\codepair{e_{1}}{e_{2}} \app c}{ \left(\predA \emeet \predB\right)\left(r\right) }
    \end{array}
    \]
    
    For \emph{elim1} we need to show:
     \[ \evexpand{\left(\predA \emeet \predB\right)}{\code{\codef{fst}}}{\predA} \]
    For $c \in \mca$:
    \[\begin{array}{cl}
         & \left(\predA \emeet \predB\right)\left(c\right)\\
        = &
          \after{c_{1}}{c \app \codepA}{ \predA\left(c_{1}\right) } \meet \after{c_{2}}{c \app \codepB}{ \predB\left(c_{2}\right) }
        % \\
        % &
        \leq  \after{c_{1}}{c \app \codepA}{ \predA\left(c_{1}\right) }
         \\
        =&  \after{c_{1}}{ \encode{0}{0 \bullet \codepA} \app c }{ \predA\left(c_{1}\right) } 
        \\
        = & \after{c_{1}}{ \code{\codef{fst}} \app c }{ \predA\left(c_{1}\right) }
    \end{array}\]
    The proof of  \emph{elim2} is analogous. 

    \item[Universal Implication] Take:
    $$\begin{array}{r@{\hspace{0.05cm}}c@{\hspace{0.05cm}}l}
    \phi \imp \vec{\phi} \left(e\right)  &\defeq & \infimum_{\phi \in \vec{\phi}}\infimum_{c \in \mca}\left(\phi\left(c\right) \sqsupset \after{r}{e \app c}{\phi\left(r\right)}\right)
    \\ 
    \elambda{e} &\defeq & \encode{1}{e \bullet \left(\encode{2}{2 \bullet 0 \bullet 1} \bullet 0 \bullet 1\right)}\\ 
    \eeval &\defeq & \encode{0}{\eid \bullet \left( 0 \bullet \codepA\right) \bullet \left( 0 \bullet \codepB\right)}
    \end{array}$$
    For \emph{intro}, we need to show: 
    \[\dfrac{\forall \predC \in \preds \evexpand{\left(\predA \emeet \predB\right)}{e}{\predC}}{ \evexpand{\predA}{\codecurry{e}}{\left(\predB \imp \preds\right)} }\]
    Expanding the assumption, we get for all $\predC \in \preds$, and $c \in \mca$:
    \[ \after{c_{1}}{c \app \codepA}{\predA\left(c_{1}\right)} \meet \after{c_{2}}{c \app \codepB}{\predB\left(c_{2}\right)}
        \leq  \after{r}{e \app c}{\predC\left(r\right)} \]
    Instantiating with $c=\codetuple{c_{1}}{c_{2}}$, for which we have  
$ \codetuple{c_{1}}{c_{2}} \app \codepA = \return{ c_{1}}$ and  $\codetuple{c_{1}}{c_{2}} \app \codepB = \return{c_{2}}$, obtains: 
    \[ \after{c_{1}'}{\return{c_{1}}}{\predA\left(c_{1}'\right)}
        \meet 
        \after{c_{2}'}{\return{c_{2}}}{\predB\left(c_{2}'\right)} \leq  \after{r}{e \app \codetuple{c_{1}}{c_{2}}}{\predC\left(r\right)} \]
    From this, using after-ret, we get: 
    \[
        \predA\left(c_{1}\right) \meet \predB\left(c_{2}\right) \leq \after{r}{e \app \codetuple{c_{1}}{c_{2}}}{\predC\left(r\right)}
    \]
    By currying $\predB\left(c_{2}\right)$ we get:
    \[
        \predA\left(c_{1}\right) \leq \predB\left(c_{2}\right) \haimp \after{r}{e \app \codetuple{c_{1}}{c_{2}}}{\predC\left(r\right)}
    \]
    Since this holds for all $\predC \in \preds$ and $c_{2} \in \mca$, we get, by the infimum property, for all $c_{1} \in \mca$:
    \[
    \begin{array}{@{}c@{\hspace{-0.2cm}}l}
         & \predA\left(c_{1}\right) \\
         \leq & \infimum_{\predC \in \preds \, , \, c_{2} \in \mca}\left( \predB\left(c_{2}\right) \haimp \after{r}{e \app \codetuple{c_{1}}{c_{2}}}{\predC\left(r\right)}\right)\\
         = & \infimum_{\predC \in \preds \, , \, c_{2} \in \mca}\left( \predB\left(c_{2}\right) \haimp \after{r}{ \nu\left( e \bullet \left(\encode{2}{2 \bullet 0 \bullet 1} \bullet c_{1} \bullet c_{2} \right)\right) }{\predC\left(r\right)}\right)\\
         = & \infimum_{\predC \in \preds \, , \, c_{2} \in \mca}\left( \predB\left(c_{2}\right) \haimp \after{r}{ \encode{0}{ e \bullet \left(\encode{2}{2 \bullet 0 \bullet 1} \bullet c_{1} \bullet 0 \right) } \app c_{2} }{\predC\left(r\right)}\right)\\
         \leql{\text{A.Return}} & \after{f}{\return{\encode{0}{ e \bullet \left(\encode{2}{2 \bullet 0 \bullet 1} \bullet c_{1} \bullet 0 \right) }}}{ \infimum_{\predC \in \preds \, , \, c_{2} \in \mca}\left( \predB\left(c_{2}\right) \haimp \after{r}{ f \app c_{2} }{\predC\left(r\right)}\right) }\\
         = & \after{f}{ \codecurry{e} \app c_{1} }{\left( \predB \imp \preds \right)\left(f\right)}\\
    \end{array}
    \]
    
    For \emph{elim}, we need to show:
        \[ \forall \predB \in \preds \evrel{\left(\left(\predA \imp \preds\right) \emeet \predA\right)}{\codeeval}{\predB}  \]
    For this, we first prove the following uncurrying lemma, in which we use the abbreviation $\codeuncurry{e} \defeq \encode{0}{e \bullet \left(0 \bullet \codepA\right) \bullet \left(0 \bullet \codepB\right)}$.

\begin{lemma}[Universal Implication Uncurrying]\label{lemma:uni-imp-uncurry}
    \[ \dfrac{\evrel{\predA}{e}{\left(\predB \imp \preds\right)}}{\forall \predC \in \preds .\evrel{\left(\predA \emeet \predB\right)}{\codeuncurry{e}}{\predC}}  \]
\end{lemma}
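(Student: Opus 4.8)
The plan is to unfold every definition, evaluate the witness code, and reduce the goal to a chain of inequalities between nested modalities that is then closed by monotonicity. First I would evaluate $\codeuncurry{e}\app c$: by the ground MCA law together with the definition of the call-by-value $\nu$ and associativity of bind, $\codeuncurry{e}\app c = \nu\left(e\bullet(c\bullet\codepA)\bullet(c\bullet\codepB)\right)$ simplifies to
\[ \codeuncurry{e}\app c = \letin{c_1}{c\app\codepA}{\letin{g}{e\app c_1}{\letin{c_2}{c\app\codepB}{g\app c_2}}}. \]
Writing $\theta_B \defeq \after{c_2}{c\app\codepB}{\predB(c_2)}$, the hypothesis reads $\forall c.\,\predA(c)\leq\after{g}{e\app c}{(\predB\imp\preds)(g)}$, and the conjunction unfolds as $(\predA\emeet\predB)(c)=\after{c_1}{c\app\codepA}{\predA(c_1)}\meet\theta_B$. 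Since After-Bind collapses nested modalities into a modality over a bind-composed computation, three applications show that
\[ \after{c_1}{c\app\codepA}{\after{g}{e\app c_1}{\after{c_2}{c\app\codepB}{\after{r}{g\app c_2}{\predC(r)}}}} \]
is bounded above by $\after{r}{\codeuncurry{e}\app c}{\predC(r)}$; it therefore suffices to bound $(\predA\emeet\predB)(c)$ by this nested expression.

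The core of the argument is to carry the constant conjunct $\theta_B$ inward past the two computations that precede $c\app\codepB$ in the sequencing. Starting from $\theta_B\meet\after{c_1}{c\app\codepA}{\predA(c_1)}$ (using commutativity of $\meet$), I would use \Cref{lemma:after-conj}(2) to push $\theta_B$ under the outer modality, then rewrite $\predA(c_1)$ via the hypothesis and \Cref{lemma:after-mono} into $\after{g}{e\app c_1}{(\predB\imp\preds)(g)}$, and apply \Cref{lemma:after-conj}(2) once more to push $\theta_B$ under $\after{g}{e\app c_1}{-}$. This leaves me, at the innermost level with $g$ fixed, to bound $\theta_B\meet(\predB\imp\preds)(g)$.

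For that innermost step I would use that $(\predB\imp\preds)(g)$ is an infimum over $\preds$ and over $\mca$, hence lies below its instance at the fixed $\predC$ and at each $c_2$, namely $(\predB\imp\preds)(g)\leq\predB(c_2)\haimp\after{r}{g\app c_2}{\predC(r)}$. Treating this as the constant $\theta$ of \Cref{lemma:after-conj}(2), I push it under $\theta_B=\after{c_2}{c\app\codepB}{\predB(c_2)}$; pointwise the body becomes $(\predB(c_2)\haimp X)\meet\predB(c_2)\leq X$ by Heyting modus ponens with $X=\after{r}{g\app c_2}{\predC(r)}$, so \Cref{lemma:after-mono} yields $\theta_B\meet(\predB\imp\preds)(g)\leq\after{c_2}{c\app\codepB}{\after{r}{g\app c_2}{\predC(r)}}$. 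Propagating this back out through the $\after{g}{e\app c_1}{-}$ and $\after{c_1}{c\app\codepA}{-}$ layers by \Cref{lemma:after-mono} establishes the reduced goal.

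The main obstacle is precisely this threading of $\theta_B$ — semantically, ``running the second projection'' — past the intervening $e\app c_1$ step so that its result $c_2$ is available to feed the evidence $g$ extracted from the first projection. Because After-Bind sequences computations noncommutatively, one cannot reorder the two projections, and the bookkeeping must interleave \Cref{lemma:after-conj}(2) with \Cref{lemma:after-mono} at exactly the right nesting depths. The Heyting modus-ponens at the innermost level and the final reassociation of the three binds to match $\codeuncurry{e}\app c$ are the only other delicate points; everything else reduces to routine monotonicity.
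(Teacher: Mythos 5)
Your proof is correct: the evaluation $\codeuncurry{e}\app c = \letin{c_1}{c\app\codepA}{\letin{g}{e\app c_1}{\letin{c_2}{c\app\codepB}{g\app c_2}}}$ does follow from the ground MCA law plus the monad unit and associativity laws, the final collapse is indeed three After-Bind applications interleaved with \Cref{lemma:after-mono}, and each inward push of a constant conjunct is licensed by \Cref{lemma:after-conj}(2). However, your route is the adjoint transpose of the paper's. The paper stays in curried (implication) form throughout: it instantiates the double infimum in $(\predB\imp\preds)(f)$ only down to $\infimum_{c_2}$, applies the internal-monotonicity axiom \emph{directly} to convert that infimum into $\after{c_2}{c\app\codepB}{\predB(c_2)} \haimp \after{c_2}{c\app\codepB}{\after{r}{f\app c_2}{\predC(r)}}$, pulls this constant implication out from under $\after{f}{e\app c_1}{\,\cdot\,}$ via \Cref{lemma:after-imp}(1), takes $\infimum_{c_1}$ and applies internal monotonicity a second time over $c\app\codepA$, and only uncurries to the meet $(\predA\emeet\predB)(c)$ in the very last step. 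You instead work in uncurried (meet) form, threading the constant conjunct $\theta_B$ inward with \Cref{lemma:after-conj}(2), fully instantiating the infimum at the fixed $\predC$ and each $c_2$, and discharging it by pointwise Heyting modus ponens under \Cref{lemma:after-mono}. Since \Cref{lemma:after-imp}(1) and \Cref{lemma:after-conj}(2) are the two transposes of the same strength-style consequence of internal monotonicity, the arguments are mutually derivable; what yours buys is that the axiom is never invoked directly (only through the derived lemmas) and the proof reads like frame-rule or weakest-precondition reasoning, whereas the paper's implication-form threading keeps $c_2$ universally quantified until internal monotonicity consumes it, which is the shape its mechanized proof follows. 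Your closing observation is also exactly the right justification for soundness: only the truth value $\theta_B \in \Omega$ is moved past the noncommutative sequencing, never the computation $c\app\codepB$ itself, so no reordering of effects is required.
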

\begin{proof}
    We need to prove:
    \[ \dfrac{\evexpand{\predA}{e}{\left(\predB \imp \preds\right)}}{\forall \predC \in \preds .\evexpand{\left(\predA \emeet \predB\right)}{\codeuncurry{e}}{\predC}}  \]
    Given a $\predC \in \preds$, due to the lower bound property of $\infimum$, we have:
    \[
    \begin{array}{cl}
         & \infimum_{\predC \in \preds}\infimum_{c_{2} \in \mca}\left(\predB\left(c_{2}\right) \haimp \after{r}{f \app c_{2}}{\predC\left(r\right)}\right) \\
         \leq & \infimum_{c_{2} \in \mca}\left(\predB\left(c_{2}\right) \haimp \after{r}{f \app c_{2}}{\predC\left(r\right)}\right)\\
         \leql{\eqref{eq:after-inter-mono}} & \after{c_{2}}{c \app \codepB}{\predB\left(c_{2}\right)} \haimp \after{c_{2}}{c \app \codepB}{\after{r}{f \app c_{2}}{\predC\left(r\right)}}\\
    \end{array}
    \]

    By \Cref{lemma:after-mono}, we get that for any $c_{1} \in \mca$:
    \[
    \begin{array}{cl}
         & \after{f}{e \app c_{1}}{\infimum_{\predC \in \preds}\infimum_{c_{2} \in \mca}\left(\predB\left(c_{2}\right) \haimp \after{r}{f \app c_{2}}{\predC\left(r\right)}\right)} \\
         \leq & \after{f}{e \app c_{1}}{\after{c_{2}}{c \app \codepB}{\predB\left(c_{2}\right)} \haimp \after{c_{2}}{c \app \codepB}{\after{r}{f \app c_{2}}{\predC\left(r\right)}}}
    \end{array}
    \]

    Instantiating our premise with $c_{1}$ obtains:
    \[ \begin{array}{cl}
         & \predA\left(c_{1}\right)\\
         \leq & \after{f}{e \app c_{1}}{\infimum_{\predC \in \preds}\infimum_{c_{2} \in \mca}\left(\predB\left(c_{2}\right) \haimp \after{r}{f \app c_{2}}{\predC\left(r\right)}\right)}\\
        \leql{\Cref{lemma:after-mono}} & \after{f}{e \app c_{1}}{\left(\after{c_{2}}{c \app \codepB}{\predB\left(c_{2}\right)} \haimp \after{c_{2}}{c \app \codepB}{\after{r}{f \app c_{2}}{\predC\left(r\right)}}\right)}\\
        \leql{\Cref{lemma:after-imp}} & \after{c_{2}}{c \app \codepB}{\predB\left(c_{2}\right)} \haimp \after{f}{e \app c_{1}}{ \after{c_{2}}{c \app \codepB}{\after{r}{f \app c_{2}}{\predC\left(r\right)}}}
    \end{array} \]

    So by uncurrying $\after{c_{2}}{c\app\codepB}{\predB\left(c_{2}\right)}$ and then currying $\predA\left(c_{1}\right)$ we get:
    \[
    \begin{array}{cl}
        & \after{c_{2}}{c\app\codepB}{\predB\left(c_{2}\right)}\\
        \leq & \predA\left(c_{1}\right) \haimp \after{f}{e \app c_{1}}{ \after{c_{2}}{c \app \codepB}{\after{r}{f \app c_{2}}{\predC\left(r\right)}}}
    \end{array}
    \]

    Now, for every $c_{1} \in \mca$, we have that:
    \[
    \begin{array}{cl}
        & \after{c_{2}}{c\app\codepB}{\predB\left(c_{2}\right)}\\
        \leq &\infimum_{c_{1}} \left( \predA\left(c_{1}\right) \haimp \after{f}{e \app c_{1}}{ \after{c_{2}}{c \app \codepB}{\after{r}{f \app c_{2}}{\predC\left(r\right)}}}\right)\\
        \leql{\eqref{eq:after-inter-mono}} & \after{c_{1}}{c \app \codepA}{\predA\left(c_{1}\right)} \haimp \after{c_{1}}{c \app \codepA}{\after{f}{e \app c_{1}}{ \after{c_{2}}{c \app \codepB}{\after{r}{f \app c_{2}}{\predC\left(r\right)}}}}
    \end{array}
    \]

    Therefore,  by uncurrying again we get: 
    \[    
    \begin{array}{cl}
         & \left(\predA \emeet \predB\right)\left(c\right)\\
         = &\after{c_{1}}{c \app \codepA}{\predA\left(c_{1}\right)} \meet \after{c_{2}}{c \app \codepB}{\predB\left(c_{2}\right)}\\
         \leq & \after{c_{1}}{c \app \codepA}{\after{f}{e \app c_{1}}{\after{c_{2}}{c \app \codepB}{\after{r}{f \app c_{2}}{\predC\left(r\right)}}}} \\
         \leql{\text{A.Bind}} & \after{r}{\left( \letin{c_{1}}{c \app \codepA}{\letin{f}{e \app c_{1}{\letin{c_{2}}{c \app \codepB}{f \app c_{2}}}}} \right)}{\predC\left(r\right)} \\
         = & \after{r}{ \nu\left(e \bullet \left(c \bullet \codepA\right) \bullet \left(c \bullet \codepB\right)\right)}{\predC\left(r\right)}\\
         = & \after{r}{\encode{0}{e \bullet \left(0 \bullet \codepA\right) \bullet \left(0 \bullet \codepB\right)} \app c}{\predC\left(r\right)} \\
         = & \after{r}{\codeuncurry{e} \app c}{\predC\left(r\right)}
    \end{array}
    \]
\end{proof}

    To get back to the proof of \emph{elim}, by Reflexivity, we have:
    \[{\evrel{\predA \imp \preds}{\codeid}{\predA \imp \preds}} \]
    From which, by~\Cref{lemma:uni-imp-uncurry}, we get:
    \[ {\forall \predB \in \preds .\evrel{\left(\left(\predA \imp \preds\right) \emeet \predA\right)}{\codeuncurry{\codeid}}{\predB}} \]
    From this, we get the desired result, since $\codeeval = \codeuncurry{\codeid}$.  
\qedhere
\end{description}
\end{proof}

\sepconsistent*
\begin{proof}
    ($\Rightarrow$) If there exists $e\in\separator$ such that $\top \evidence{e} \bot$, then for every $c \in \mca$:
    \[ \boldsymbol{1} = \top\left(c\right) \leq \after{r}{e \app c}{\bot\left(r\right)} = \after{r}{e \app c}{\boldsymbol{0}} \leq \boldsymbol{0} \]

    ($\Leftarrow$) If $\boldsymbol{1} \leq \boldsymbol{0}$, then $\top \evidence{\codeid} \bot $ since for every $c \in \mca$:
    \[ \top\left(c\right) = \boldsymbol{1} \leq \boldsymbol{0} \leql{\text{A.Return}} \after{r}{\return{c}}{\boldsymbol{0}} = \after{r}{\codeid \app c}{\bot\left(r\right)} \qedhere\]
\end{proof}

}
\end{document}